\renewcommand{\labelenumi}{\roman{enumi})}
\DeclareFontFamily{OT1}{pzc}{}
\DeclareFontShape{OT1}{pzc}{m}{it}{<-> s * [1.10] pzcmi7t}{}
\DeclareMathAlphabet{\mathpzc}{OT1}{pzc}{m}{it}
\let\originalleft\left
\let\originalright\right
\renewcommand{\left}{\mathopen{}\mathclose\bgroup\originalleft}
\renewcommand{\right}{\aftergroup\egroup\originalright}
\begin{document}

\newcommand{\bO}{{\bf 0}}
\newcommand{\co}{\mathpzc{o}}
\newcommand{\rD}{{\rm D}}
\newcommand{\ee}{\varepsilon}
\newcommand{\ri}{{\rm i}}

\newtheorem{theorem}{Theorem}[section]
\newtheorem{corollary}[theorem]{Corollary}
\newtheorem{lemma}[theorem]{Lemma}
\newtheorem{proposition}[theorem]{Proposition}

\theoremstyle{definition}
\newtheorem{definition}{Definition}[section]
\newtheorem{assumption}[definition]{Assumption}




\title{Robust chaos in $\mathbb{R}^n$}
\author{I.~Ghosh, D.J.W.~Simpson\\\\
School of Mathematical and Computational Sciences\\
Massey University\\
Palmerston North, 4410\\
New Zealand}

\maketitle


\begin{abstract}

We treat $n$-dimensional piecewise-linear continuous maps with two pieces,
each of which has exactly one unstable direction,
and identify an explicit set of sufficient conditions
for the existence of a chaotic attractor.
The conditions correspond to an open set within the space of all such maps,
allow all $n \ge 2$,
and allow all possible values for the unstable eigenvalues in the limit that all stable eigenvalues tend to zero.
To prove an attractor exists we use the stable manifold of a fixed point to construct a trapping region;
to prove the attractor is chaotic we use the unstable directions to construct
an invariant expanding cone for the derivatives of the pieces of the map.
We also show the chaotic attractor is persistent under nonlinear perturbations,
thus when such an attractor is created locally in a border-collision bifurcation of a general piecewise-smooth system,
it persists and is chaotic for an interval of parameter values beyond the bifurcation.

\end{abstract}

\section{Introduction}
\label{sec:intro}

Robust chaos refers to the existence of a chaotic attractor for an open subfamily
of a family of dynamical systems \cite{Gl17}. 
The Lorenz system exhibits robust chaos \cite{Tu99,Tu02},
as do smooth ODEs with Lorenz-type attractors \cite{GuWi79}.
In contrast, robust chaos does not occur for the logistic family and smooth unimodal maps more broadly
because for these maps periodic windows are dense in parameter space \cite{GrSw97,Ly97,Va10}.

Robust chaos is prevalent in families of piecewise-smooth maps
because the absence of a smooth turning point inhibits periodic windows.
If a piecewise-smooth map is expanding (in every direction)
then under mild technical conditions any attractors must be chaotic \cite{Bu99,Co02,Ts01}.
Such expansion is a relatively strong property;
in this paper we consider maps with a single direction of instability.

For maps with a single direction of instability, ergodic theory techniques
provide abstract sets of constraints that ensure robust chaos \cite{Ry04,WaYo01,WaYo08,Yo85}.
An alternate approach is to identify {\em explicit} conditions on particular families of maps.
Misiurewicz \cite{Mi80} achieved this for the Lozi family
\begin{equation}
x \mapsto \begin{bmatrix} -\tilde{a} |x_1| + x_2 + 1 \\ \tilde{b} x_1 \end{bmatrix},
\label{eq:Lozi}
\end{equation}
where $x = (x_1,x_2) \in \mathbb{R}^2$ and tildes have been added to reduce confusion to our later notation.
His results show that if $\tilde{b} > 0$ and $\tilde{b}+1 < \tilde{a} < 2 - \frac{\tilde{b}}{2}$
then \eqref{eq:Lozi} has a chaotic attractor.
This was achieved by showing that over the given parameter range the map has a trapping region (a compact set that maps to its interior) and hence a topological attractor.
Misiurewicz also constructed a cone of tangent vectors that is invariant and expanding
under multiplication by the derivative of each piece of the map.
This implies the attractor is chaotic in the sense of a positive Lyapunov exponent.
Further constraints on the parameters ensure the attractor displays other aspects of chaos
(e.g.~mixing and uniform hyperbolicity) \cite{Mi80}.

The Lozi family was designed as a prototypical family of invertible maps capable of exhibiting chaos \cite{Lo78}.
To quantitatively mimic the behaviour of physical systems,
it is necessary to generalise the Lozi family to 
\begin{equation}
f(x) = \begin{cases}
A_L x + b, & x_1 \le 0, \\
A_R x + b, & x_1 \ge 0,
\end{cases}
\label{eq:bcnf}
\end{equation}
where
\begin{align}
A_L &= \begin{bmatrix}
\tau_L & 1 \\
-\delta_L & 0
\end{bmatrix}, &
A_R &= \begin{bmatrix}
\tau_R & 1 \\
-\delta_R & 0
\end{bmatrix}, &
b = \begin{bmatrix} 1 \\ 0 \end{bmatrix}.
\label{eq:bcnf2dALARb}
\end{align}
This is the two-dimensional border-collision normal form (BCNF) of Nusse and Yorke \cite{NuYo92},
except we have omitted the border-collision parameter $\mu$
(reintroduced in \S\ref{sec:bcb} where we describe border-collision bifurcations).
There are four parameters, $\tau_L, \delta_L, \tau_R, \delta_R \in \mathbb{R}$;
any continuous, two-piece, piecewise-linear map on $\mathbb{R}^2$
satisfying a certain non-degeneracy condition can be converted to this form via an affine change of coordinates.

Banerjee {\em et al.}~\cite{BaYo98} argued that if
$0 < \delta_L < \tau_L - 1$, $0 < \delta_R < -\tau_R - 1$, and
\begin{equation}
\delta_R - (1 + \tau_R) \delta_L + \frac{1}{2} \big( (1 + \tau_R) \tau_L - \tau_R - \delta_L - \delta_R \big)
\left( \tau_L + \sqrt{\tau_L^2 - 4 \delta_L} \right) > 0,
\nonumber
\end{equation}
then the two-dimensional BCNF has a chaotic attractor and
this was later verified rigorously using Misiurewicz's techniques \cite{GlSi21}.
Our previous work \cite{GhMc23} showed that robust chaos
extends to a larger parameter region including
areas where the map is orientation-reversing and non-invertible.
We stress that this region, and that of Misiurewicz \cite{Mi80},
are open regions of parameter space where the map has a single direction of instability.
Other regions of robust chaos can be identified rigorously
by using more complex trapping region constructions
and applying cones to higher iterates of the map \cite{GlSi22b,Si23e}.

However, this avenue of research has thus far only dealt with two-dimensional maps.
The purpose of this paper is to extend the constructions to higher dimensions.
This task is essential for the results to be applied to mathematical models
for which dimensionality is an artifact of modelling assumptions.
Many physical systems have dynamics well-modelled by
piecewise-linear maps, and perturbations of such maps,
and for many of these chaotic dynamics serves an important role.
Vibro-impact energy harvesters use chaos to stabilise high-energy
operational modes with a small control force \cite{KuAl16},
optical resonators use chaotic photon trajectories
to enable additional energy storage \cite{LiDi13},
and cryptography algorithms use chaotic maps
for message encryption \cite{Fr98,KoLi11,PaPa06}.
For such encryption the chaotic parameter region
corresponds to the key space of the algorithm
which needs to be large and devoid of periodic windows
as these provide an avenue for attack \cite{AlMo03,Ko01}.

The extension to higher dimensions is challenging for several reasons.
It is necessary to work with the $n$-dimensional BCNF \cite{Di03,Si16}
which consists of \eqref{eq:bcnf} with $x = (x_1,x_2,\ldots,x_n) \in \mathbb{R}^n$ and
\begin{align}
A_L &= \begin{bmatrix}
-a^L_1 & 1 \\
-a^L_2 && \ddots \\
\vdots &&& 1 \\
-a^L_n
\end{bmatrix}, &
A_R &= \begin{bmatrix}
-a^R_1 & 1 \\
-a^R_2 && \ddots \\
\vdots &&& 1 \\
-a^R_n
\end{bmatrix}, &
b = \begin{bmatrix} 1 \\ 0 \\ \vdots \\ 0 \end{bmatrix}.
\label{eq:bcnfALARb}
\end{align}
Parameter space is $\mathbb{R}^{2 n}$ because
each $a^L_i$ and $a^R_i$, where $i = 1,\ldots,n$,
can take any value in $\mathbb{R}$.
Trapping region constructions for the two-dimensional BCNF have used convex polygons
$\Omega$ because the image of a polygon under \eqref{eq:bcnf} is another polygon,
and if the vertices of the image of $\Omega$ belong to the interior of $\Omega$
then the entire image belongs to the interior,
so $\Omega$ is a trapping region.
Below we use the same idea in higher dimensions:~we construct an $n$-dimensional convex polytope $\Omega$
and show that if certain conditions are satisfied then the vertices of its image
belong to the interior of $\Omega$.

The only previous work we are aware of where this was achieved with no restriction on $n$
is that of Glendinning \cite{Gl15b} who identified a cuboid that maps into itself.
Glendinning treated parameter values where some iterate of the $n$-dimensional BCNF is expanding,
so a cone was not needed; also the attractor itself was roughly a cube so the cuboid
construction was effective for this parameter region.

Below we assume $A_L$ and $A_R$ have unstable eigenvalues $\alpha > 1$ and $-\beta < -1$, respectively,
and that all other eigenvalues of $A_L$ and $A_R$ have modulus less than $1$.
In the limit that all stable eigenvalues are zero,
\eqref{eq:bcnf} reduces to a one-dimensional map
that has an attractor if and only if $\frac{1}{\alpha} + \frac{1}{\beta} > 1$.
We design $\Omega$ so that for {\em all} such values of $\alpha$ and $\beta$,
the set $\Omega$ is a trapping region if the stable eigenvalues have modulus at most $r$,
and we provide an explicit bound on the value of $r$.
This is challenging because the size of the immediate basin of attraction
tends to zero as $\frac{1}{\alpha} + \frac{1}{\beta} \to 1$ and $r \to 0$.

To verify chaos in the two-dimensional setting,
previous works \cite{Mi80,GlSi21} define the cone to be all points on and between the unstable eigenspaces of $A_L$ and $A_R$.
This approach cannot be used in higher dimensions
because these eigenspaces are one-dimensional and the cone boundary needs to be $(n-1)$-dimensional,
so instead we introduce a sequence of values to define the boundary of a cone $\Psi$.
We design $\Psi$ so that for values of $\alpha$ and $\beta$ in the range discussed above,
it is invariant and expanding if $r$ is sufficiently small.
Again this is challenging because the degree of expansion tends to $1$ as $\min[\alpha,\beta] \to 1$ and $r \to 0$.

We also show the chaotic attractor is robust to nonlinear perturbations to the pieces of the map.
This enables the results to be applied to border-collision bifurcations of generic piecewise-smooth maps
that occur when a fixed point collides with a switching manifold of the map \cite{DiBu08,Si16}.
The robustness is demonstrated by showing that $\Psi$ satisfies the stronger property of being contracting-invariant
and performing coordinate transformations and a blow-up of phase space to
relate the local dynamics associated with the border-collision bifurcation to those of the BCNF.

\subsection{Overview}

We start in \S\ref{sec:main} by computing fixed points of the $n$-dimensional BCNF
and treating the one-dimensional limit for which the dynamics are governed a skew tent map.
We then state the main result (Theorem \ref{th:main}) and illustrate it numerically with two, three, and ten-dimensional examples.

In \S\ref{sec:basics} we provide formal definitions for the main concepts
and detail the constructions of our polytope $\Omega$ and cone $\Psi$.
In \S\ref{sec:bounds} we perform an array of preliminary calculations,
such as relating the parameters of BCNF to the eigenvalues of $A_L$ and $A_R$
and bounding key quantities.
It is important for these bounds to be simple so that our
later main estimates are analytically tractable.
Sections \ref{sec:forwardInvariantRegion} and \ref{sec:cone} contain the main computations;
we prove $\Omega$ is forward invariant (under certain conditions)
and can be perturbed to a trapping region,
then prove $\Psi$ is contracting-invariant and expanding,
and lastly prove Theorem \ref{th:main}.

In \S\ref{sec:bcb} we consider border-collision bifurcations of fixed points of piecewise-smooth continuous maps.
We show that if the local piecewise-linear approximation to the map
can be converted to the BCNF where it satisfies the conditions of Theorem \ref{th:main},
then the border-collision bifurcation creates a robust chaotic attractor.
This attractor typically grows asymptotically linearly in size as parameters are varied,
and we illustrate this with a three-dimensional example.
Finally \S\ref{sec:discussion} provides conclusions and an outlook for future work.

\subsection{Notation}

Here we list key notational conventions.
We write $\bO$ for the zero vector (origin) in $\mathbb{R}^n$,
$\| \cdot \|$ for the Euclidean norm,
and $\subset$ for subset.
We use subscripts for vector components,
e.g.~for a point $P \in \mathbb{R}^n$ we write $P = (P_1,\ldots,P_n)$.
Given integers $0 \le k \le n$,
\begin{equation}
\begin{pmatrix} n \\ k \end{pmatrix} = \dfrac{n!}{k! \,(n-k)!}
\nonumber
\end{equation}
denotes the `$n$ choose $k$' binomial coefficient.

The left and right pieces of \eqref{eq:bcnf} are
\begin{align}
f_L(x) &= A_L x + b, &
f_R(x) &= A_R x + b,
\label{eq:fLfR}
\end{align}
and we let
\begin{equation}
\Sigma = \left\{ x \in \mathbb{R}^n \,\middle|\, x_1 = 0 \right\}
\label{eq:Sigma}
\end{equation}
denote the switching manifold.
We refer to
$\left\{ x \in \mathbb{R}^n \,\middle|\, x_1 < 0 \right\}$
and $\left\{ x \in \mathbb{R}^n \,\middle|\, x_1 > 0 \right\}$
as the {\em left half-space} and the {\em right half-space}, respectively.

\section{Sufficient conditions for robust chaos}
\label{sec:main}
	
Below we use the eigenvalues of $A_L$ and $A_R$ to refer to an instance of
the $n$-dimensional BCNF \eqref{eq:bcnf}, $f$.
We can do this because the parameters of $f$
are the coefficients of the characteristic polynomials of $A_L$ and $A_R$:
\begin{equation}
\begin{split}
\det(\lambda I - A_L) &= \lambda^n + a^L_1 \lambda^{n-1} + \cdots + a^L_{n-1} \lambda + a^L_n \,, \\
\det(\lambda I - A_R) &= \lambda^n + a^R_1 \lambda^{n-1} + \cdots + a^R_{n-1} \lambda + a^R_n \,.
\end{split}
\label{eq:charPolys}
\end{equation}
Specifically, given any two sets of $n$ numbers, with complex numbers appearing in complex conjugate pairs,
there exists a unique real-valued choice for the parameters of $f$
for which these numbers are the eigenvalues of $A_L$ and $A_R$.

\subsection{One direction of expansion}
\label{sub:one}

Eigenvalues with modulus greater than $1$ correspond to unstable directions of the dynamics,
while eigenvalues with modulus less than $1$ correspond to stable directions. 
Below we assume $A_L$ and $A_R$ each have exactly one unstable eigenvalue.
To motivate our assumption on the signs of these eigenvalues,
observe that if $1$ is not an eigenvalue of $A_L$,
then $I - A_L$ is invertible and
\begin{equation}
Y = (I - A_L)^{-1} b
\label{eq:Y}
\end{equation}
is the unique fixed point of $f_L$.
Similarly if $1$ is not an eigenvalue of $A_R$,
then $I - A_R$ is invertible and
\begin{equation}
X = (I - A_R)^{-1} b
\label{eq:X}
\end{equation}
is the unique fixed point of $f_R$.
As in Banerjee {\em et al.}~\cite{BaYo98}, which dealt with the two-dimensional setting,
we consider parameter values for which $Y_1 < 0$ and $X_1 > 0$ so that $Y$ and $X$ are both fixed points of $f$.
Indeed, if instead $Y_1 > 0$ and $X_1 < 0$, then $f$ has no bounded invariant set \cite{Si24e}, so certainly no chaotic attractor.
Straight-forward calculations show that
if $A_L$ and $A_R$ have exactly one unstable eigenvalue,
then $Y_1 < 0$ if and only if the unstable eigenvalue of $A_L$ is greater than $1$,
while $X_1 > 0$ if and only if the unstable eigenvalue of $A_R$ is less than $-1$.
Consequently we make the following assumption.

\begin{assumption}
Consider \eqref{eq:bcnf} with \eqref{eq:bcnfALARb} and $n \ge 2$.
Suppose $\alpha > 1$ is an eigenvalue of $A_L$ with multiplicity one,
$-\beta < -1$ is an eigenvalue of $A_R$ with multiplicity one,
and all other eigenvalues of $A_L$ and $A_R$ have modulus at most $0 < r < 1$.
\label{as:main}
\end{assumption}

\subsection{Skew tent maps}
\label{sub:skewTentMaps}

It is instructive to consider the limit $r \to 0$ in Assumption \ref{as:main}.
In this case $\alpha$ and $-\beta$ are the only non-zero eigenvalues of $A_L$ and $A_R$,
so $a^L_1 = -\alpha$, $a^R_1 = \beta$, and all other entries in the first columns of $A_L$ and $A_R$ are zero.
In this case the forward orbit of any point in $\mathbb{R}^n$
becomes constrained to the $x_1$-axis (within at most $n-1$ iterations),
then evolves according to the skew tent map
\begin{equation}
x_1 \mapsto \begin{cases}
1 + \alpha x_1, & x_1 \le 0, \\
1 - \beta x_1, & x_1 \ge 0.
\end{cases}
\label{eq:skewTentMap}
\end{equation}
As a two-parameter family,
\eqref{eq:skewTentMap} is the BCNF in one dimension \cite{MaMa93,NuYo95}.
Fig.~\ref{fig:skewTent} shows the map with typical values of $\alpha, \beta > 1$.
This figure highlights the fixed point of the left piece of the map
\begin{equation}
w = \frac{-1}{\alpha-1},
\label{eq:fixedPoint}
\end{equation}
and its preimage under the right piece of the map
\begin{equation}
z = \frac{\alpha}{(\alpha - 1) \beta}.
\label{eq:preimage}
\end{equation}
If $z > 1$ then the interval $[w,z]$ is forward invariant and the map has an attractor.
This attractor is chaotic because both slopes are greater than one in absolute value \cite{LiYo78},
and is a union of $2^m$ intervals, for some $m \ge 1$, see \cite{ItTa79}.
Notice $z > 1$ is equivalent to $\frac{1}{\alpha} + \frac{1}{\beta} > 1$.
If instead $z < 1$ the forward orbit of the critical point $0$ diverges and \eqref{eq:skewTentMap} has no attractor.

In summary, the skew tent map \eqref{eq:skewTentMap} has a chaotic attractor if $\alpha > 1$, $\beta > 1$,
and $\frac{1}{\alpha} + \frac{1}{\beta} > 1$.
Beyond $\frac{1}{\alpha} + \frac{1}{\beta} = 1$ the map has no attractor,
beyond $\beta = 1$ the attractor of the map is a fixed point, so is not chaotic,
while beyond $\alpha = 1$ the map still has a chaotic attractor
but this requires more work to establish \cite{ItTa79b}.

\begin{figure}[h]
\centering
\includegraphics[width=0.5\linewidth]{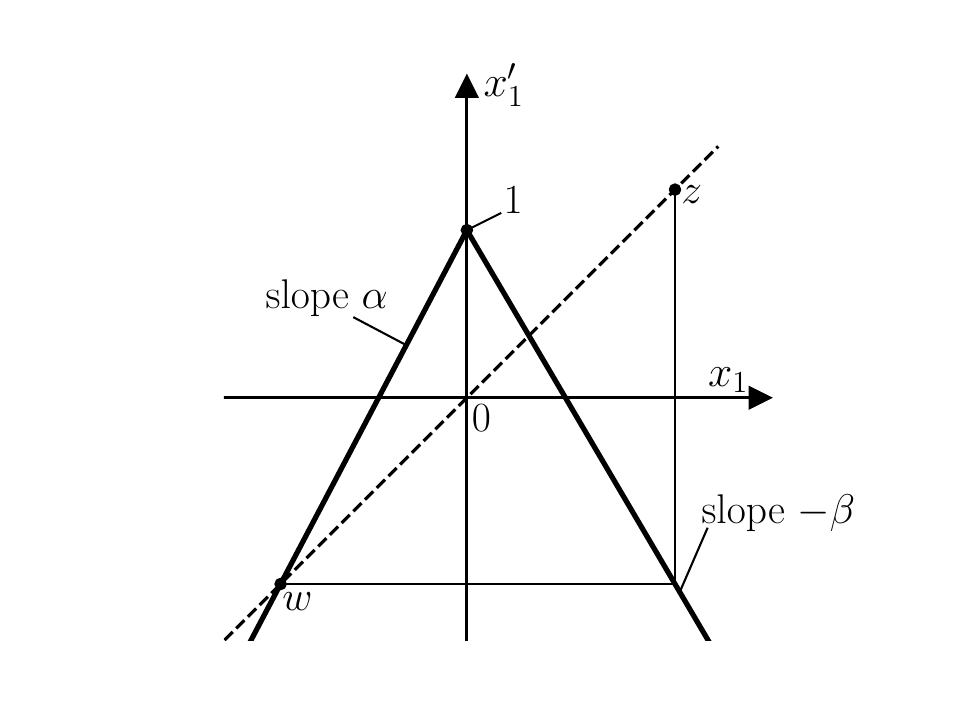}
\caption{A sketch of the skew tent map \eqref{eq:skewTentMap}
with slopes $\alpha = 1.9$ and $-\beta = -1.7$.}\label{fig:skewTent}
\end{figure}

\subsection{Main result}
\label{sub:main}

We now state our main result.
This result is proved in \S\ref{sub:PsiFinal}.

\begin{theorem}
\label{thm:main}
If Assumption \ref{as:main} holds and
\begin{align}
r(n-1) &< \frac{3}{7} \left( 1 - \frac{1}{\alpha} \right), \label{eq:rA} \\
r(n-1) &< \frac{3}{7} \left( 1 - \frac{1}{\beta} \right), \label{eq:rB} \\
r(n-1) &< \frac{1}{10} \left( \frac{1}{\alpha} + \frac{1}{\beta} - 1 \right), \label{eq:rC}
\end{align}
then $f$ has a topological attractor with a positive Lyapunov exponent.
\label{th:main}
\end{theorem}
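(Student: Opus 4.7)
The strategy is to run the Misiurewicz-style argument of \cite{Mi80,GlSi21} in $\mathbb{R}^n$, using the polytope $\Omega$ and cone $\Psi$ constructed in \S\ref{sec:basics}. The proof splits according to the two conclusions of the theorem: first show $\Omega$ is forward invariant and can be perturbed to a trapping region (giving a topological attractor $\Lambda = \bigcap_{k\ge 0} f^k(\Omega)$), then show $\Psi$ is invariant and uniformly expanding under both $A_L$ and $A_R$ (giving a positive Lyapunov exponent along every orbit in $\Lambda$).

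\textbf{Step 1: the trapping region.} Because $f_L$ and $f_R$ are affine and $\Omega$ is convex, $f_L(\Omega)$ and $f_R(\Omega)$ are polytopes whose vertices are the $f_L$- and $f_R$-images of the vertices of $\Omega$, and $f(\Omega)$ is contained in their union. It therefore suffices to verify that all of these image-vertices lie in $\Omega$, which is a finite system of linear inequalities in the entries of $A_L$ and $A_R$. I would express those entries in terms of $\alpha$, $-\beta$, and a perturbation bounded by the stable spectral radius $r$ using the formulas of \S\ref{sec:bounds}, and check that \eqref{eq:rA}--\eqref{eq:rC} leave enough margin in each inequality for it to hold strictly. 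Strict inequality and compactness then allow $\Omega$ to be shrunk slightly to $\Omega'$ with $f(\Omega') \subset \mathrm{int}(\Omega')$, so $\Omega'$ is a trapping region and $\Lambda$ is a topological attractor.

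\textbf{Step 2: the expanding cone.} In dimension $n \ge 3$ the unstable eigenspaces of $A_L$ and $A_R$ are one-dimensional, so the boundary of $\Psi$ must instead be built from the sequence of constants introduced in \S\ref{sec:basics}. Invariance $A_L \Psi \subset \Psi$, $A_R \Psi \subset \Psi$ again reduces to a finite collection of linear inequalities on the entries of the two matrices, to be verified using the bounds of \S\ref{sec:bounds}. For expansion I would show that there exists $\lambda > 1$ with $\|A_L v\| \ge \lambda \|v\|$ and $\|A_R v\| \ge \lambda \|v\|$ for all $v \in \Psi$, by decomposing $v$ into its component along the unstable direction (multiplied by $\alpha$ or $\beta$) and a transverse component whose contribution is controlled by $r(n-1)$ through the cone-angle constraints. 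Hypotheses \eqref{eq:rA} and \eqref{eq:rB} provide the headroom needed above $\alpha$ and $\beta$ individually, and \eqref{eq:rC} keeps $\lambda$ strictly above $1$ in the regime $\frac{1}{\alpha}+\frac{1}{\beta}\to 1$. Subadditivity of $\log \|Df^k\|$ along any orbit in $\Lambda$ (whose tangent vectors may be taken in $\Psi$) then yields a uniform positive lower bound $\log \lambda$ for the Lyapunov exponent.

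\textbf{Main obstacle.} The genuine difficulty is that each of the three inequalities \eqref{eq:rA}, \eqref{eq:rB}, \eqref{eq:rC} becomes tight in a distinct limit ($\alpha\to 1$, $\beta\to 1$, and $\frac{1}{\alpha}+\frac{1}{\beta}\to 1$ respectively), while the margin in the trapping-region inequalities shrinks to zero and the expansion factor of $\Psi$ shrinks to $1$ in those same limits. Every estimate in \S\ref{sec:bounds} therefore has to be both sharp enough to preserve the correct order of magnitude in the small parameter $r(n-1)$ and simple enough that the linear-algebraic bookkeeping for the vertices of $f(\Omega)$ and the boundary of $\Psi$ actually closes. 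The secondary technical hurdle is designing $\Psi$ so that its $(n{-}1)$-dimensional boundary is explicit enough to admit a direct invariance check yet broad enough to contain the unstable directions of both $A_L$ and $A_R$ simultaneously; this is exactly where the linear-in-$n$ factor $(n-1)$ multiplying $r$ appears in the hypotheses.
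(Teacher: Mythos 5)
Your overall strategy is the same as the paper's: build the polytope $\Omega$ and verify forward invariance vertex-by-vertex, then establish an invariant expanding cone $\Psi$ and multiply derivatives along orbits in the attractor. However, Step 1 contains a genuine flaw in the reduction. You propose to verify that the $f_L$- and $f_R$-images of \emph{all} vertices of $\Omega$ lie in $\Omega$, so that $f_L(\Omega)\subset\Omega$ and $f_R(\Omega)\subset\Omega$ by convexity. This sufficient condition is false: the vertex $C=(C_1,0,\ldots,0)$ lies in the right half-space with $C_1\approx\frac{\alpha}{(\alpha-1)\beta}>1$, and applying the \emph{left} piece to it gives first component $-a^L_1C_1+1\approx\alpha C_1+1>C_1$, i.e.\ $f_L(C)$ lies strictly to the right of $C$, the rightmost point of $\Omega$. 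So $f_L(C)\notin\Omega$ and your finite system of inequalities cannot be satisfied. The correct reduction is to split $\Omega$ along $\Sigma$ into $\Omega_L$ and $\Omega_R$ and apply only the relevant affine piece to each; this introduces the additional vertices $Q^{(s)}$ where the edges from $P^{(s)}$ to $C$ cross $\Sigma$, and showing $f\left(Q^{(s)}\right)\in\Omega$ is in fact the hardest part of the trapping-region argument (it occupies four lemmas in \S\ref{sub:Qs}, and is where the constant $\frac{1}{10}$ in \eqref{eq:rC} is forced). Your plan as written omits these points entirely.

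Two smaller remarks. For expansion of $\Psi$ the paper does not decompose $v$ along the unstable eigendirection; it bounds $\|A_Zv\|^2-\|v\|^2$ directly from the explicit entries of $A_Z$ and the cone constraints $|m_i|\le N_ir^{i-1}$, which avoids controlling the angle between $\Psi$ and the (parameter-dependent) unstable eigenvector. Your decomposition idea could in principle work but would need those angle estimates made uniform as $\alpha,\beta\to1$. Finally, in passing from cone expansion to a positive Lyapunov exponent you should restrict to a point of $\Lambda$ whose forward orbit avoids $\Sigma$, so that $\rD f^n(x)$ is a well-defined product of the matrices $A_{\chi_i}$; no subadditivity argument is needed, just the chain $\|\rD f^n(x)v\|\ge c^n\|v\|$ for $v\in\Psi$.
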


Lyapunov exponents represent the average rate of expansion of arbitrarily close forward orbits \cite{Vi14}.
Positive Lyapunov exponents correspond to expansion within an attracting object
and are a standard indicator of chaos \cite{Me07}.
Fig.~\ref{fig:region1_3D} shows the upper bound on $r(n-1)$ specified by Theorem \ref{th:main}.
This figure uses $\frac{1}{\alpha}$ and $\frac{1}{\beta}$ on the axes
so that the domain is finite and the bound is piecewise-linear.

\begin{figure}[h]
\centering
\includegraphics[width=0.5\linewidth]{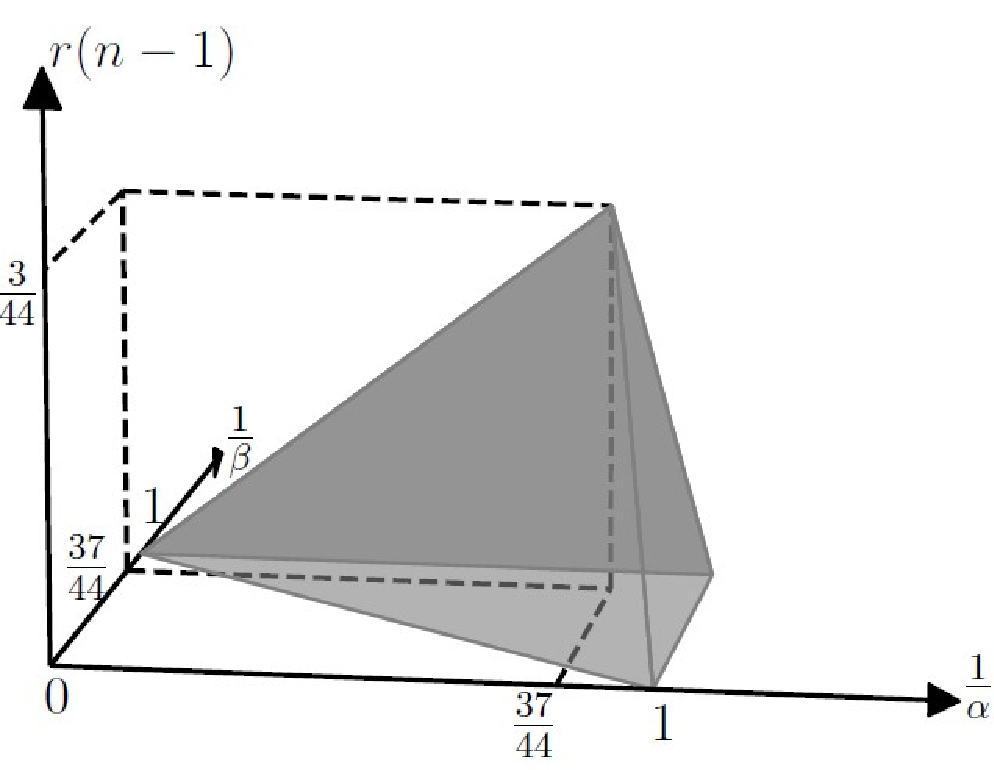}
\caption{
The shaded region indicates values of $\frac{1}{\alpha}$, $\frac{1}{\beta}$,
and $r(n-1)$ satisfying \eqref{eq:rA}, \eqref{eq:rB}, and \eqref{eq:rC}.
}\label{fig:bound}
\label{fig:region1_3D}
\end{figure}

The conditions \eqref{eq:rA}, \eqref{eq:rB}, and \eqref{eq:rC}
define an open region of $2 n$-dimensional parameter space and
Theorem \ref{th:main} shows that the BCNF exhibits robust chaos over this region.
With more work one could likely weaken the conditions
and verify robust chaos over a larger parameter region,
but in two respects Theorem \ref{th:main} cannot be improved upon.
The attractor of the skew tent map \eqref{eq:skewTentMap} is destroyed at $\frac{1}{\alpha} + \frac{1}{\beta} = 1$,
thus the bound on $r$ must go to zero as $\frac{1}{\alpha} + \frac{1}{\beta} \to 1$.
Similarly the attractor of the skew tent map \eqref{eq:skewTentMap} becomes non-chaotic at $\beta = 1$,
so the bound on $r$ must go to zero as $\beta \to 1$.
It should be stressed that significant effort was required to produce a result
for which the bound on $r$ tends to zero at $\beta = 1$ and $\frac{1}{\alpha} + \frac{1}{\beta} = 1$,
rather than before these boundaries are reached,
and this is reflected in the complexity of the calculations in \S\ref{sub:Qs} and \S\ref{sub:contractingInvariant}.

\subsection{Comparison to numerical simulations}
\label{sub:numerics}

Fig.~\ref{fig:pp3D} shows a typical phase portrait using $n = 3$.
Specifically this figure uses
\begin{equation}
\begin{aligned}
a^L_1 &= -1.4, & \qquad \qquad a^R_1 &= 1.45, \\
a^L_2 &= -0.0004, & \qquad \qquad a^R_2 &= 0.0004, \\
a^L_3 &= 0.00056, & \qquad \qquad a^R_3 &= 0.00058,
\end{aligned}
\label{eq:3dexample}
\end{equation}
with which $A_L$ has eigenvalues $\alpha = 1.4$,
$\lambda^L_2 = 0.02$, and $\lambda^L_3 = -0.02$,
while $A_R$ has eigenvalues $-\beta = -1.45$,
$\lambda^R_2 = 0.02 \ri$, and $\lambda^R_3 = -0.02 \ri$.
The assumptions of Theorem \ref{th:main} are satisfied because
\eqref{eq:rA}, \eqref{eq:rB}, and \eqref{eq:rC} hold with $r = 0.02$.
In Fig.~\ref{fig:pp3D} the black dots are $3000$ consecutive iterates of one forward orbit, with transient dynamics removed,
and represent the attractor of the map.
The fixed points $X$ and $Y$ are saddles with one direction of instability.
The attractor is distant from $Y$ but appears to contain $X$,
as is often the case in the two-dimensional setting \cite{GlSi21,GhSi22b}.

\begin{figure}[h]
\centering
\includegraphics[width=0.6\linewidth]{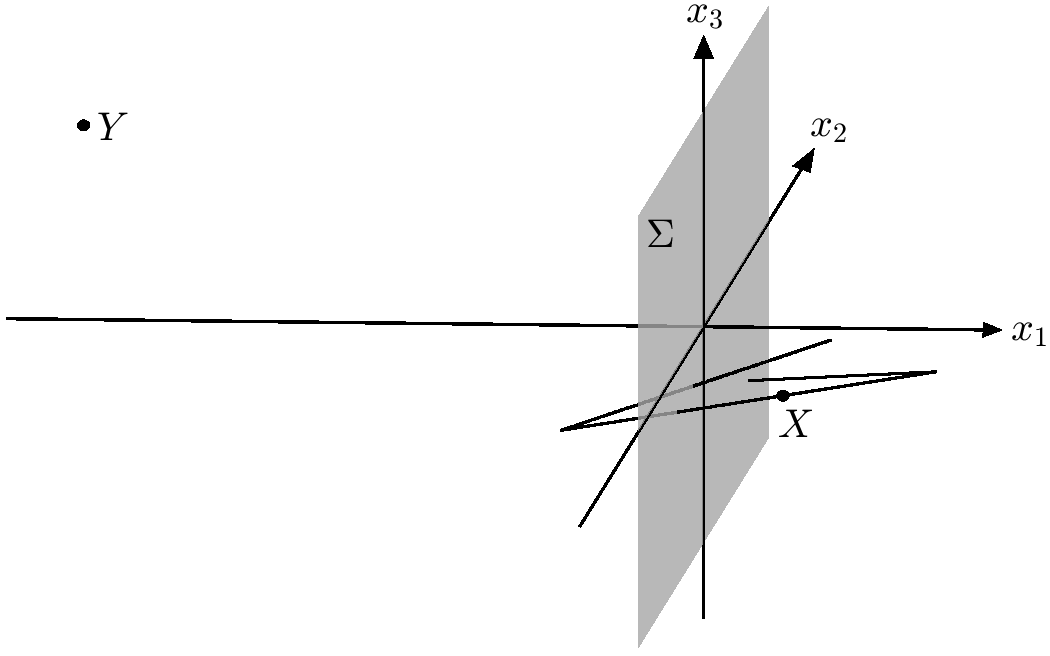}
\caption{
A phase portrait of the three-dimensional BCNF, \eqref{eq:bcnf} with
\eqref{eq:bcnfALARb}, with parameter values \eqref{eq:3dexample}.
The attractor is shown in black.
The points $X$ and $Y$ are the saddle fixed points \eqref{eq:Y} and \eqref{eq:X}.
}\label{fig:pp3D}
\end{figure}

Fig.~\ref{fig:pp10D} uses $n = 10$
and parameters for which $A_L$ and $A_R$ have eigenvalues
\begin{equation}
\begin{aligned}
\alpha &= 1.3, & \qquad \qquad -\beta &= -1.7, \\
\lambda^L_{k+2} &= r e^{\frac{2ki \pi}{n-1}},
& \qquad \qquad \lambda^R_{k+2} &= r e^{\frac{(2 k + 1) i \pi}{n-1}},
\end{aligned}
\label{eq:10dexample}
\end{equation}
for $k = 0,1\ldots,8$ and $r = 0.0039$.
As with Fig.~\ref{fig:pp3D} the attractor has a strongly one-dimensional
geometry because all stable eigenvalues are relatively small.
Note that the value of $r$ was chosen slightly less than the largest
value permitted by \eqref{eq:rA}, \eqref{eq:rB}, and \eqref{eq:rC}.
Again, the assumptions of Theorem \ref{th:main} are satisfied because
the conditions \eqref{eq:rA}, \eqref{eq:rB}, and \eqref{eq:rC} all hold with $r=0.0039$.

\begin{figure}[h]
\centering
\includegraphics[width=0.6\linewidth]{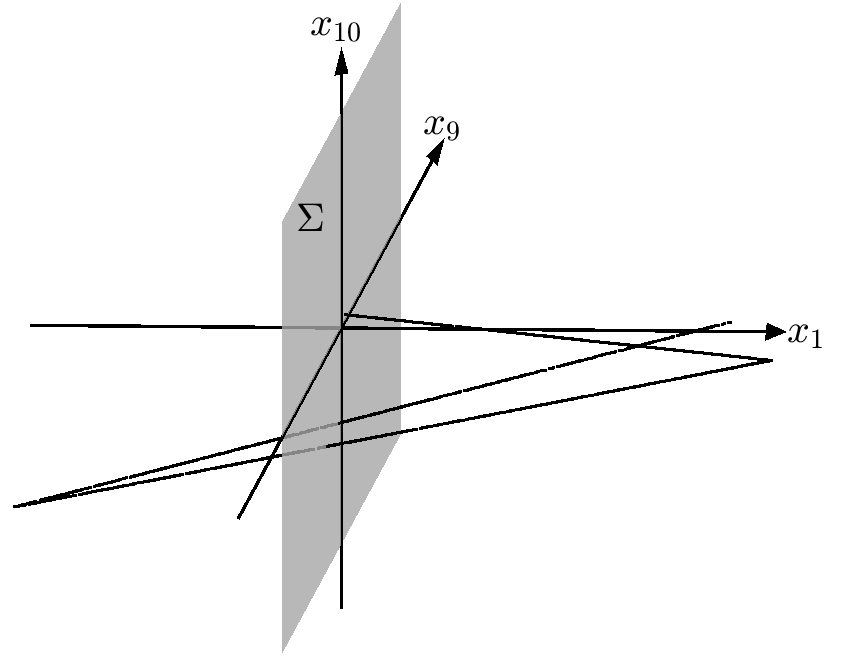}
\caption{
A phase portrait of a ten-dimensional
instance of the BCNF \eqref{eq:bcnf} with \eqref{eq:bcnfALARb}.
The parameter values are such that the eigenvalues of $A_L$ and $A_R$
are given by \eqref{eq:10dexample} with $r = 0.0039$.
}\label{fig:pp10D}
\end{figure}

Finally we revisit the two-dimensional setting
and compare Theorem \ref{th:main} to stronger results of earlier papers.
As an example we fix
\begin{align}
\alpha &= 1.3, & \beta &= 1.3,
\label{eq:2dexample}
\end{align}
and consider a range of values for the stable eigenvalues of $A_L$ and $A_R$,
denoted $\lambda^L_2$ and $\lambda^R_2$ respectively.

\begin{figure}[h]
\centering
\includegraphics[width=.5\linewidth]{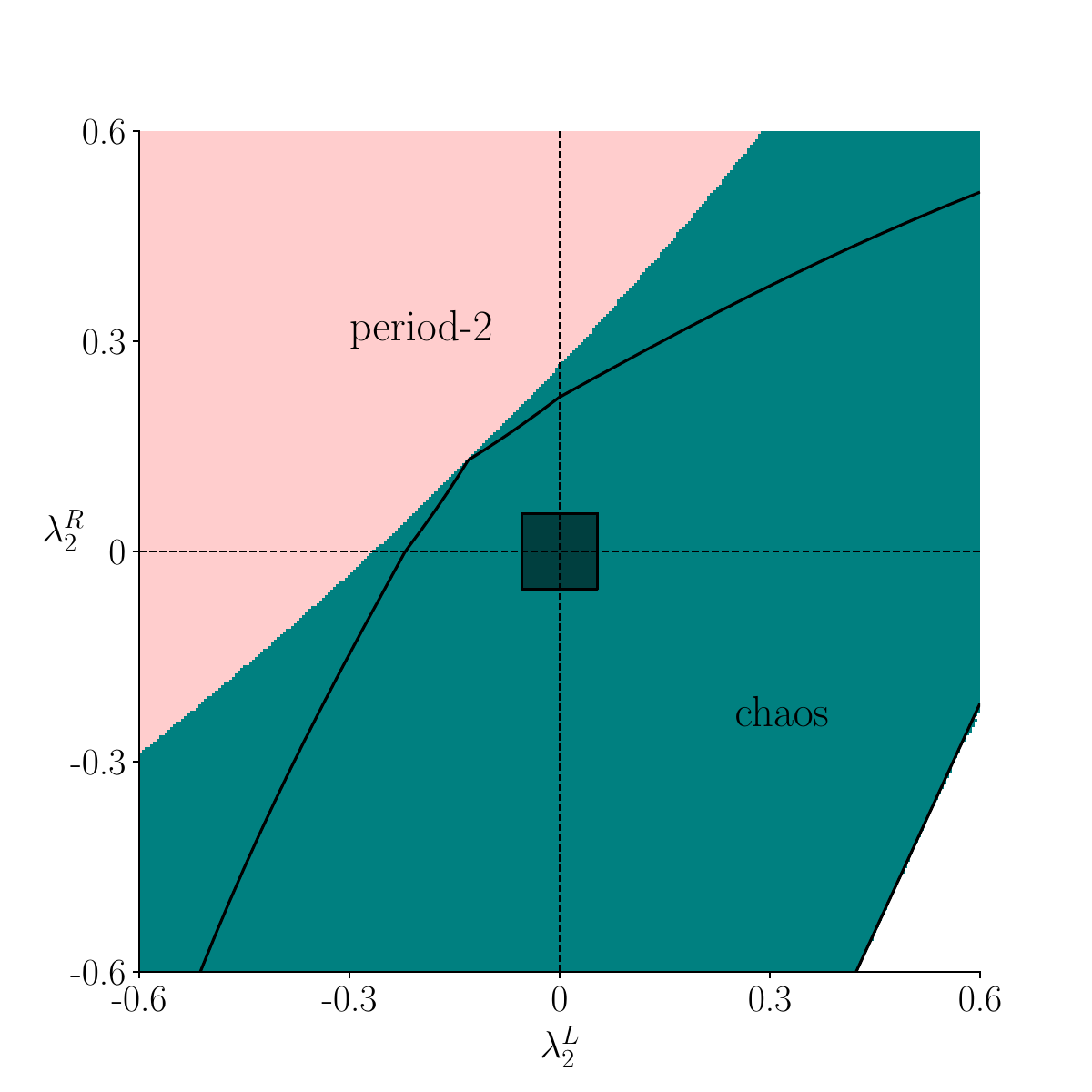}
\caption{
A two-parameter bifurcation diagram of the two-dimensional BCNF
with $\alpha = \beta = 1.3$.
The central square is where the conditions of Theorem \ref{th:main} are satisfied,
the region between the black curves are where the results
of earlier work \cite{GhMc23} verifies robust chaos,
while the green region is where numerical simulations suggest
a chaotic attractor exists.
}\label{fig:numerics}
\end{figure}

In Fig.~\ref{fig:numerics} the green region indicates values of 
$\lambda^L_2$ and $\lambda^L_3$ where the numerically computed 
forward orbit of $x = (0,0)$
appeared to converge to an attractor with a positive Lyapunov exponent.
The pink region is where this orbit appeared to converge to a period-two solution,
while the white region is where the orbit appeared to diverge.
In our earlier work \cite{GhMc23}
we proved that between the black curves
the map has a chaotic attractor.
As expected this is a subset of the green region where numerics suggests
the presence of a chaotic attractor.
The result of \cite{GhMc23}
was achieved by constructing a trapping region and an invariant expanding cone
using more specialised constructions than what we do below.
In contrast, Theorem \ref{th:main} implies the map has a chaotic attractor
throughout the square in the centre of Fig.~\ref{fig:numerics}.
The sides of this square are where the stable eigenvalues have modulus $r = 0.05$,
which is the minimum value of the right-hand sides
of \eqref{eq:rA}, \eqref{eq:rB}, and \eqref{eq:rC}.
Clearly Theorem \ref{th:main} gives a weaker result than \cite{GhMc23}
for the two-dimensional setting, but it applies to any number
of dimensions and only contains the simple constraints
\eqref{eq:rA}, \eqref{eq:rB}, and \eqref{eq:rC}.
Given the form of these constraints the bound of $r = 0.05$
in the above example cannot be greatly improved upon
due to the proximity of the pink region where the map has a stable period-two
solution.

\section{Geometric aspects of discrete dynamics}
\label{sec:basics}

In this section we perform the main constructions.
We define a set $\Omega \subset \mathbb{R}^n$
that over the next two sections we work towards proving is forward invariant,
subject to the assumptions in Theorem \ref{th:main},
and can be perturbed into a trapping region.
We also define a cone $\Psi \subset \mathbb{R}^n$
that in \S\ref{sec:cone} we prove to be contracting-invariant and expanding.

\subsection{Key definitions}
\label{sub:defns}

We first provide general definitions for the main dynamical tools that are used.
Note, ${\rm int}(\cdot)$ denotes {\em interior}.

\begin{definition}
For a continuous map $F : \mathbb{R}^n \to \mathbb{R}^n$, a set $S \subset \mathbb{R}^n$ is
\begin{enumerate}
\item
{\em forward invariant} if $F(S) \subset S$, and
\item
a {\em trapping region} if it is non-empty, compact, and $F(S) \subset {\rm int}(S)$.
\end{enumerate}
\label{df:trappingRegion}
\end{definition}

Fig.~\ref{fig:trapping_reg} illustrates the definition of a trapping region.
If $S$ is a trapping region for $F$,
then $\bigcap_{k \ge 0} F^k(S)$ is non-empty and invariant under $F$ \cite{Ea98}.
If this set satisfies an indivisibility property, e.g.~it contains a dense orbit,
then it is, by definition, a topological attractor \cite{Me07,Ro04}.

\begin{figure}[h]
\centering
\includegraphics[width=0.3\linewidth]{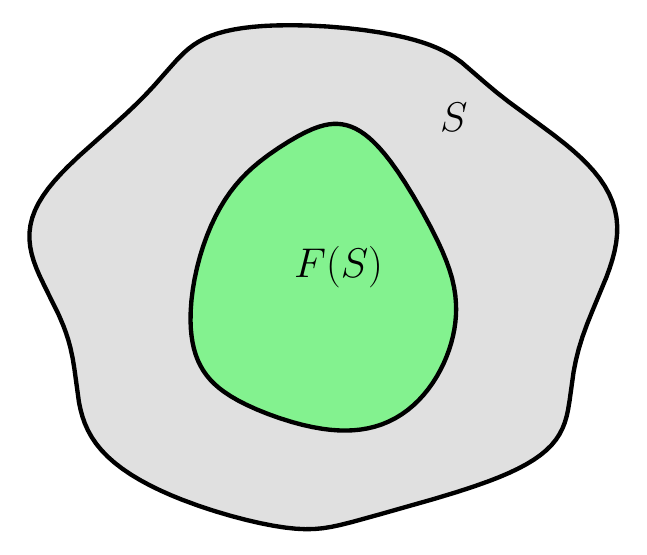}
\caption{A trapping region $S$ of a continuous map $F$.}\label{fig:trapping_reg}
\end{figure}

The next definition applies to cones.
A {\em cone} is a non-empty set $\Psi \subset \mathbb{R}^n$
with the property that $t v \in \Psi$ for all $v \in \Psi$ and $t \in \mathbb{R}$.
We use cones because matrix multiplication is linear,
so if any of the three properties in Definition \ref{df:iec} holds for some set $T$,
then it also holds for the smallest cone containing $T$.

\begin{definition}
For a collection $\mathcal{U}$ of real-valued $n \times n$ matrices, a cone $\Psi \subset \mathbb{R}^n$ is
\begin{enumerate}
\item
{\em invariant} if $A v \in \Psi$ for all $A \in \mathcal{U}$ and $v \in \Psi$,
\item
{\em contracting-invariant} if $A v \in {\rm int}(\Psi) \cup \{ \bO \}$ for all $A \in \mathcal{U}$ and $v \in \Psi$, and
\item
{\em expanding} if there exists $c > 1$ such that $\| A v \| \ge c \| v \|$ for all $A \in \mathcal{U}$ and $v \in \Psi$.
\end{enumerate}
\label{df:iec}
\end{definition}

Fig.~\ref{fig:inv_exp_cone} illustrates the definition of an invariant expanding cone.
As shown below, if the collection $\{ A_L, A_R \}$ has an invariant expanding cone
containing a non-zero vector,
then any invariant set of the BCNF has a positive Lyapunov exponent.
The stronger requirement of being contracting-invariant allows
us to accommodate nonlinear terms in the pieces of the map, see \S\ref{sec:bcb}.

\begin{figure}[h]
\centering
\includegraphics[width=0.6\linewidth]{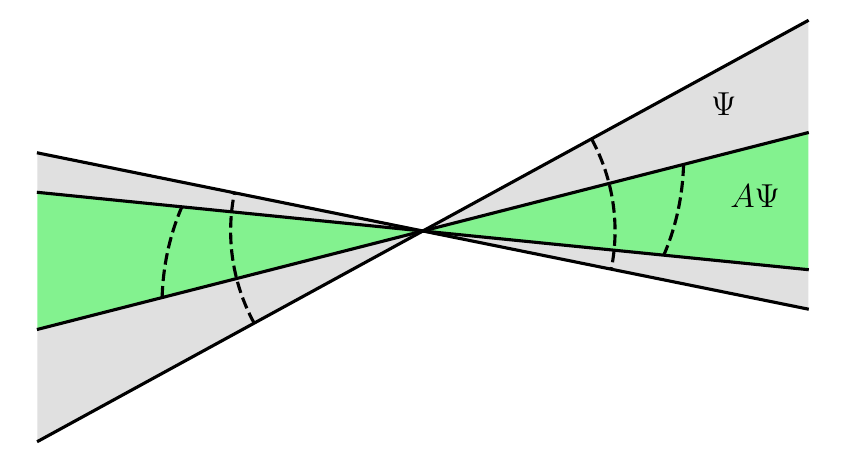}
\caption{
A cone $\Psi$ and the set $A \Psi = \left\{ A v \,\middle|\, v \in \Psi \right\}$
for some $A \in \mathbb{R}^{2 \times 2}$.
The dashed curves indicate unit vectors in $\Psi$
and their images under multiplication by $A$.
Here the cone $\Psi$ is continuous and expanding for the set $\mathcal{U} = \{ A \}$.
}\label{fig:inv_exp_cone}
\end{figure}

\subsection{Trapping region construction}
\label{sub:trappingRegion}

With Assumption \ref{as:main} the fixed point $Y$ has an $(n-1)$-dimensional stable manifold.
As this manifold emanates from $Y$ it coincides with the hyperplane $E$ through $Y$
with directions given by the stable eigenvectors of $A_L$, see already Fig.~\ref{fig:Omega}.
As shown in \S\ref{sub:formulas}, there exists a unique point $C$ on the $x_1$-axis that maps under $f_R$ to $E$.
As shown by Proposition \ref{pr:fCInOmega},
under an additional assumption the point $C$ belongs to the right half-space.
Notice that as $r \to 0$, the points $Y$ and $C$ converge to $(w,0,\ldots,0)$ and $(z,0,\ldots,0)$, respectively,
where $w$ and $z$ are the fixed point and preimage
identified in \S\ref{sub:skewTentMaps}
for the corresponding skew tent map.

For each $i = 2,\ldots,n$, let
\begin{equation}
M_i = \frac{2 \alpha}{\alpha - 1} \begin{pmatrix} n-1 \\ i-1 \end{pmatrix},
\label{eq:Mi}
\end{equation}
and
\begin{equation}
H = \left\{ x \in \mathbb{R}^n \,\middle|\, |x_i| \le M_i r^{i-1} \text{~for all~} i = 2,\ldots,n \right\}.
\label{eq:H}
\end{equation}
Then let
\begin{equation}
K = E \cap H,
\label{eq:K}
\end{equation}
and
\begin{equation}
\Omega = {\rm Conv}(C,K),
\label{eq:Omega}
\end{equation}
where ${\rm Conv}(\cdot)$ denotes {\em convex hull}.
A sketch of $\Omega$ is provided by
Fig.~\ref{fig:Omega} for the three-dimensional setting.

\begin{figure}[h]
\centering
\includegraphics[width=0.6\linewidth]{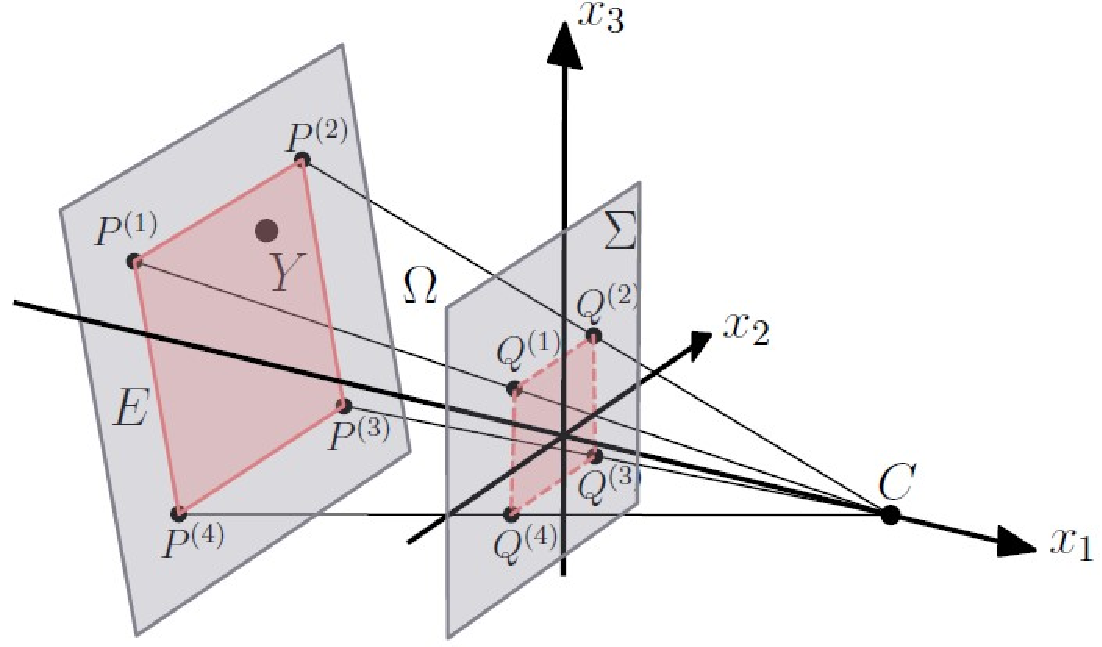}
\caption{
The construction of $\Omega$ for $n = 3$.
The hyperplane $E$ is the stable subspace associated with the saddle fixed point $Y$.
The points $P^{(s)}$ for $s = 1,\ldots,4$
belong to $E$ and satisfy $\left| P^{(s)}_2 \right| = M_2 r$
and $\left| P^{(s)}_3 \right| = M_3 r^2$ for each $s$.
The set $K$ is the convex hull of the points $P^{(s)}$,
the point $C$ belongs to the positive $x_1$-axis and maps to $E$,
and $\Omega$ is the convex hull of $K$ and $C$.
}\label{fig:Omega}
\end{figure}

\begin{proposition}
If Assumption \ref{as:main}, \eqref{eq:rA}, and \eqref{eq:rC} hold,
then $\Omega$ is forward invariant under $f$.
Moreover, $\Omega$ can be perturbed into a trapping region.
\label{pr:forwardInvariantRegion}
\end{proposition}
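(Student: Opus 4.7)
The plan is to exploit the piecewise-affinity of $f$ and the polytope structure of $\Omega$ by checking the image of each vertex. Since $f$ is piecewise-affine and $\Omega$ is a convex polytope, $f(\Omega) = f_L(\Omega_L) \cup f_R(\Omega_R)$ where $\Omega_L = \Omega \cap \{x_1 \le 0\}$ and $\Omega_R = \Omega \cap \{x_1 \ge 0\}$, and forward invariance reduces to showing that each vertex of $\Omega_L$ (resp.\ $\Omega_R$) maps into $\Omega$ under $f_L$ (resp.\ $f_R$). The polytope $\Omega$ has apex $C\in\{x_1>0\}$ together with the $2^{n-1}$ base vertices of $K$, one for each sign choice of $(x_2,\ldots,x_n)=(\pm M_2 r,\ldots,\pm M_n r^{n-1})$ with $x_1$ then determined by the equation defining $E$. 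Under \eqref{eq:rA}--\eqref{eq:rC}, a preliminary bound on $Y_1$ places all of these base vertices in $\{x_1<0\}$, so the vertices of $\Omega_L$ are the $K$-vertices together with the $2^{n-1}$ intersection points $p^{(s)} = \Sigma \cap [C,v^{(s)}]$, and the vertices of $\Omega_R$ are $\{C\}\cup\{p^{(s)}\}$. Forward invariance then amounts to three checks: (a) $f_R(C) \in \Omega$; (b) $f_L(v^{(s)}) \in \Omega$ for each $K$-vertex; and (c) $f(p^{(s)}) \in \Omega$ for each $\Sigma$-intersection, with $f_L=f_R$ on $\Sigma$ making this unambiguous.

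Check (a) follows from the defining property $f_R(C) \in E$: since $C$ lies on the $x_1$-axis, the remaining coordinates are $(f_R(C))_i = -C_1 a^R_i$, and the bound $|a^R_i| \le \binom{n-1}{i-1}\beta r^{i-1} + \binom{n-1}{i} r^i$ obtained from $\det(\lambda I - A_R) = (\lambda+\beta)\prod_{j\ge 2}(\lambda-\lambda^R_j)$, combined with an explicit bound on $|C_1|$ derived in \S\ref{sub:skewTentMaps}'s limit, yields $|(f_R(C))_i| \le M_i r^{i-1}$. For (b), use $f_L$-invariance of $E$ together with the explicit structure $(f_L(v))_i = -a^L_i v_1 + v_{i+1}$ for $2\le i<n$ and $(f_L(v))_n = -a^L_n v_1$; combining an analogous bound on $|a^L_i|$ with $|v_{i+1}| \le M_{i+1} r^i$ on $K$ and the bound on $|v_1|$ inherited from $v \in E\cap H$ gives $|(f_L(v))_i| \le M_i r^{i-1}$ provided \eqref{eq:rA} and \eqref{eq:rC} hold. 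For (c), exploit the fact that $f_L$ collapses $\Sigma$ onto the hyperplane $\{x_n=0\}$ via the pure shift $(0,p_2,\ldots,p_n) \mapsto (p_2+1, p_3,\ldots,p_n,0)$; writing $p^{(s)} = tC + (1-t)v^{(s)}$ for the unique $t \in (0,1)$ making $p^{(s)}_1 = 0$ and using linearity, the containment $f(p^{(s)})\in\Omega$ reduces to a convex-combination estimate built from the coordinate bounds already obtained in (a) and (b).

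For the trapping-region claim, the only image-vertex lying on $\partial\Omega$ is $f_R(C) \in E \cap H = K$, which sits on the face opposite $C$; the strict inequalities \eqref{eq:rA}--\eqref{eq:rC} place every other image-vertex in $\mathrm{int}(\Omega)$. A small inflation $\tilde\Omega$ of $\Omega$ --- for example, scaling $M_i \to (1+\epsilon)M_i$ to enlarge $K$ and translating $C$ outward by $\epsilon$ --- is then a compact superset with $\Omega \subset \mathrm{int}(\tilde\Omega)$ and, by continuity of $f$ and the strict interior containment elsewhere, $f(\tilde\Omega) \subset \mathrm{int}(\tilde\Omega)$ for all sufficiently small $\epsilon>0$. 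The main obstacle will be the algebraic closure of the inequalities in (a) and (b): the choice $M_i = \tfrac{2\alpha}{\alpha-1}\binom{n-1}{i-1}$ must simultaneously absorb the combinatorial growth $\binom{n-1}{i-1}$ of the elementary symmetric expressions in the eigenvalues, the blow-up of $|C_1|$ as $\tfrac{1}{\alpha}+\tfrac{1}{\beta}\to 1$, and the blow-up of $\tfrac{1}{\alpha-1}$ as $\alpha\to 1$, all uniformly in $n\ge 2$; the precise numerical constants $\tfrac{3}{7}$ and $\tfrac{1}{10}$ in \eqref{eq:rA}--\eqref{eq:rC} are presumably dictated by what is needed for a single choice of $M_i$ to close all three vertex-image inequalities at once.
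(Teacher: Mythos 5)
Your treatment of forward invariance follows the paper's argument essentially verbatim: split $\Omega$ into $\Omega_L$ and $\Omega_R$, and check that the images of the vertices $C$, the $K$-vertices $P^{(s)}$ (your $v^{(s)}$), and the $\Sigma$-intersections $Q^{(s)}$ (your $p^{(s)}$) land in $\Omega$; your checks (a), (b), (c) correspond to Propositions \ref{pr:fCInOmega}, \ref{pr:fPsInOmega}, and \ref{pr:fQsInOmega}, and your structural observations (the shift formula for $f_L$ on $\Sigma$, the invariance of $E$ under $f_L$) are correct. The deferred estimates are where the real work lies --- check (c) alone occupies four lemmas --- but as an outline this part is sound.

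The trapping-region argument, however, contains a genuine gap. First, it is not true that $f_R(C)$ is the only image-vertex on $\partial\Omega$: every $f_L\left(P^{(s)}\right)$ also lies on the face $K$, because each $P^{(s)}$ lies in the hyperplane $E$ and $E$ is invariant under $f_L$. This same invariance defeats your inflation strategy. Scaling $M_i \mapsto (1+\epsilon)M_i$ enlarges $K$ but keeps the left face of $\tilde\Omega$ inside $E$; every point of that face maps under $f_L$ back into $E$, hence onto $\partial\tilde\Omega$, so $f(\tilde\Omega) \subset {\rm int}(\tilde\Omega)$ can never hold for such a set. You also cannot enlarge past $E$, since forward orbits of points to its left diverge. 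Translating $C$ outward makes matters worse: by \eqref{eq:aRisum} the quantity $\sum_{i=1}^n \alpha^{1-i} a^R_i$ is close to $\beta > 0$, so increasing $C_1$ drives $u^{\sf T}\left(f_R(C) - Y\right)$ negative, i.e.\ sends the image of the apex to the \emph{left} of $E$ and out of $\tilde\Omega$ entirely. The correct perturbation goes the other way: the paper moves $C$ \emph{inward} by $\nu$ and each $P^{(s)}$ to the right of $E$ by the much smaller amount $\nu^2$. Then $f(C)$ lands a distance of order $\nu$ to the right of $E$, comfortably beyond the new left face at distance of order $\nu^2$, while the perturbed $P^{(s)}$ are pushed further from $E$ by their images because $u^{\sf T}\left(f_L(x) - Y\right) = \alpha\, u^{\sf T}(x - Y)$ with $\alpha > 1$. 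The asymmetric magnitudes $\nu$ versus $\nu^2$ and this transverse expansion at the saddle $Y$ are the ingredients your sketch is missing.
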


By the last sentence of Proposition \ref{pr:forwardInvariantRegion}
there exists a trapping region $\Omega_\nu$ for $f$,
where $\Omega_\nu$ is a close approximation to $\Omega$.
The set $\Omega$ is not a trapping region itself
because some points in $\Omega$ (e.g.~$C$)
map to the boundary of $\Omega$.

Let us now provide some reasoning behind the above construction.
The set $\Omega$ is polytope (the generalisation of polygon to more than two dimensions),
so $f(\Omega)$ is also polytope because $f$ is piecewise-linear.
Also $\Omega$ is convex, so to prove $f(\Omega) \subset \Omega$
we only need to show that every vertex of $f(\Omega)$ belongs to $\Omega$.

As $\frac{1}{\alpha} + \frac{1}{\beta} \to 1$ and $r \to 0$,
part of the attractor of $f$ approaches $E$ from the right,
so the left boundary of $\Omega$, denoted $K$, cannot be placed a fixed distance to the right of $E$.
Also $K$ cannot contain points located to the left of $E$
because the forward orbits of some such points diverge (while remaining in the
left half-space).
For this reason we have set the left boundary of $\Omega$ to be a subset of $E$.
Below we show $K$ is contained in the left half-space,
see \eqref{eq:PinOmegaProof1}, hence $f(K) \subset E$.
On $K$ the map $f$ draws orbits toward $Y$, which belongs to $K$,
so it is not surprising that $f(K) \subset K$, as needed for forward invariance.

In view of the constraints \eqref{eq:rA} and \eqref{eq:rC}
it is reasonable to treat $r$ as a small quantity.
In this context the parameters $a^L_i$ and $a^R_i$
are order $r^{i-1}$, see Proposition \ref{pr:charPolyCoeffs},
and this is reflected in the definition of $M_i$.
The specific factor $\begin{pmatrix} n-1 \\ i-1 \end{pmatrix}$ in $M_i$
comes naturally from bounds on the values of $a^L_i$, see \eqref{eq:aLi5}.
Also $\Omega$ must contain $Y$, which tends to infinity as $\alpha \to 1$,
so we need $M_i \to \infty$ as $\alpha \to 1$, so we have
included the factor $\alpha - 1$ in the denominator.
The remaining factor $2 \alpha$ in the numerator was chosen through trial-and-error.
With instead significantly smaller values for $M_i$,
the vertices $P^{(s)}$ of $\Omega$ would not map to $\Omega$.
With instead significantly larger values for $M_i$,
the points $Q^{(s)}$
(where the edges of $\Omega$ intersect $\Sigma$) would not map to $\Omega$.

\subsection{Cone construction}
\label{sub:iec}

For each $i = 1,\ldots,n-1$, let
\begin{equation}
N_i = \frac{\min[\alpha,\beta] - 1}{2} \begin{pmatrix} n-1 \\ i-1 \end{pmatrix},
\label{eq:Ni}
\end{equation}
and define the cone
\begin{equation}
\Psi = \left\{ \gamma v \,\big|\,
\gamma \in \mathbb{R},\, |m_i| \le N_i r^{i-1} \text{~for all~} i = 1,\ldots,n-1 \right\},
\label{eq:Psi}
\end{equation}
where
\begin{equation}
v = \begin{bmatrix} 1 \\ m_1 \\ m_2 \\ \vdots \\ m_{n-1} \end{bmatrix}.
\label{eq:v}
\end{equation}

\begin{proposition}
If Assumption \ref{as:main}, \eqref{eq:rA}, \eqref{eq:rB}, and \eqref{eq:rC} hold,
then $\Psi$ is contracting-invariant and expanding for $\{ A_L, A_R \}$.
\label{pr:cone}
\end{proposition}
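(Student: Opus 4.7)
The plan is to exploit homogeneity of $\Psi$: it suffices to consider vectors of the form $v=(1,m_1,\ldots,m_{n-1})^T$ with $|m_i|\le N_i r^{i-1}$, verify that $A_L v$ and $A_R v$ can be renormalised so that their first coordinate equals $\pm 1$, and check that the resulting new slopes $m_i'$ satisfy the corresponding \emph{strict} bounds. Using the companion-like structure of $A_L$ in \eqref{eq:bcnfALARb}, a direct computation yields
\begin{equation*}
A_L v=\gamma\,(1,m_1',\ldots,m_{n-1}')^T,\qquad \gamma:=-a^L_1+m_1,
\end{equation*}
with $m_i'=(m_{i+1}-a^L_{i+1})/\gamma$ for $1\le i\le n-2$ and $m_{n-1}'=-a^L_n/\gamma$. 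The argument then splits into (a) a lower bound on $|\gamma|$, needed both for well-definedness of the $m_i'$ and for expansion, and (b) strict upper bounds $|m_i'|<N_i r^{i-1}$, needed for contracting-invariance. The analysis for $A_R$ proceeds identically after noting that $\gamma_R$ turns out to be negative, and $\Psi$ is symmetric under $v\mapsto -v$.

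For (a), the trace identity $-a^L_1=\alpha+\sum_{j\ge 2}\lambda^L_j$ together with $|\lambda^L_j|\le r$ and $|m_1|\le N_1=(\min[\alpha,\beta]-1)/2$ gives $|\gamma|\ge \tfrac{1}{2}(\alpha+1)-(n-1)r$, which is strictly greater than $1$ thanks to \eqref{eq:rA}. The analogous computation for $A_R$, combining $-a^R_1=-\beta+\sum_{j\ge 2}\lambda^R_j$ with \eqref{eq:rB}, gives $\gamma_R<-1$.

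For (b), I would invoke the bounds on $|a^L_i|$ from \S\ref{sec:bounds} (Proposition \ref{pr:charPolyCoeffs}): the elementary-symmetric-function representation of the characteristic polynomial bounds $|a^L_i|$ by a constant times $\binom{n-1}{i-1}\,r^{i-1}$, which is precisely the combinatorial factor appearing in $N_i$. Together with $|m_{i+1}|\le N_{i+1}r^i$, the numerator $|m_{i+1}-a^L_{i+1}|$ is of order $r^i$ with an explicit prefactor involving $\alpha$ and a binomial coefficient; dividing by $|\gamma|>1$ and demanding the result be strictly less than $N_i r^{i-1}$ reduces, for each $i$, to a strict inequality of the shape
\begin{equation*}
r\cdot(\text{explicit function of }\alpha,\beta,n) \;<\; \text{const}.
\end{equation*}
A combination of \eqref{eq:rA} and \eqref{eq:rC} closes all such inequalities simultaneously. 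For expansion, $\|A_L v\|\ge |\gamma|$, while $\|v\|^2\le 1+N_1^2+O(r^2)\le 1+\tfrac{1}{4}(\min[\alpha,\beta]-1)^2+O(r^2)$; a short algebraic comparison using $\alpha>1$ then yields $\|A_L v\|\ge c\|v\|$ for an explicit $c>1$, and symmetrically for $A_R$.

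The main obstacle is the book-keeping in (b): the scaling $r^{i-1}$ must be preserved uniformly for all $i$ up to $n-1$, the binomial factors have to fit together to keep every inequality strict, and the admissible $r$ must shrink to zero at the correct rate as $\alpha\to 1$, $\beta\to 1$, or $\tfrac{1}{\alpha}+\tfrac{1}{\beta}\to 1$, matching the boundary behaviour of the skew tent map identified in \S\ref{sub:skewTentMaps}. It is this triple constraint---strictness, uniformity in $i$, and the correct boundary behaviour---that dictates the specific coefficients $\tfrac{3}{7}$ and $\tfrac{1}{10}$ in \eqref{eq:rA}--\eqref{eq:rC}, and any significant enlargement of $\Psi$ (e.g.\ by increasing the $N_i$) would break either contracting-invariance or the expansion factor.
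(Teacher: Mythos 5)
Your proposal is correct and follows essentially the same route as the paper: for contracting-invariance it normalises $v_1=1$, computes the updated slopes $m_i'=(m_{i+1}-a^Z_{i+1})/(-a^Z_1+m_1)$, and closes the strict bounds $|m_i'|<N_ir^{i-1}$ via the binomial estimates on $|a^Z_i|$ together with \eqref{eq:rA}--\eqref{eq:rB} (plus $r(n-1)\le\tfrac1{10}$ from \eqref{eq:rC}); for expansion, lower-bounding $\|A_Zv\|$ by its first component and comparing with $\|v\|^2\le 1+N_1^2+O(r^2)$ reduces to the same leading-order inequality $\alpha-1>O(r(n-1))$ that the paper obtains by expanding $\|A_Zv\|^2-\|v\|^2$ and discarding all but the $i=1$ term. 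The only caveat is that the ``$O(r^2)$'' tail $\sum_{i\ge2}m_i^2$ must be bounded via the full sum of squared binomials (as in Lemma \ref{le:G}) to keep the comparison uniform near $\alpha,\beta\to1$, which your sketch glosses over but which causes no real difficulty.
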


Our definition of $\Psi$ is about as simple as possible.
The value of each $m_i$ is constrained to a symmetric interval.
The specific values $N_i$ were chosen for similar reasons to the $M_i$,
but here we need $N_i \to 0$ as $\alpha \to 1$ or $\beta \to 1$
because in these limits the collection of vectors $v$ whose norms increase
under multiplication by $A_L$ and $A_R$ tends to $v_1$-axis.
Smaller values of $N_i$ fail invariance,
while large values of $N_i$ fail expansion.
The factor $\frac{1}{2}$ in the definition of $N_i$
was found to be a reasonably optimal middle value.

\section{Preliminary calculations and estimates}
\label{sec:bounds}

Throughout this section we assume Assumption \ref{as:main} is satisfied
and denote the stable eigenvalues of $A_L$ and $A_R$ by
$\lambda^L_i$ and $\lambda^R_i$, respectively, for $i = 2,\ldots,n$.

\subsection{Formulas for points on $\Omega$}
\label{sub:formulas}

We first derive formulas for the vertices of $\Omega$
and points where its edges meet $\Sigma$.

The fixed point $Y$ is given in terms of $A_L$ and $b$ by \eqref{eq:Y}.
Into \eqref{eq:Y} we substitute the expressions in \eqref{eq:bcnfALARb} for $A_L$ and $b$ to obtain
\begin{equation}
Y = \frac{1}{\Delta}
\begin{bmatrix}
1 \\
-\sum_{j=2}^n a^L_j \\
-\sum_{j=3}^n a^L_j \\
\vdots \\
-a^L_n
\end{bmatrix}
\label{eq:Y2}
\end{equation}
where
\begin{equation}
\Delta = \det(I - A_L) = 1 + \sum_{j=1}^n a^L_j \,.
\label{eq:Delta}
\end{equation}
Notice $\Delta$ is the product of the eigenvalues of $I - A_L$:
\begin{equation}
\Delta = (1-\alpha) \prod_{j=2}^n \left( 1 - \lambda^L_j \right),
\label{eq:Delta2}
\end{equation}
so $\Delta < 0$ because $\alpha > 1$ and $|\lambda^L_j| < 1$ for each $j$.

The row vector
\begin{equation}
u^{\sf T} = \begin{bmatrix}
\alpha^{n-1} & \alpha^{n-2} & \cdots & 1
\end{bmatrix}
\label{eq:u}
\end{equation}
is a left eigenvector of $A_L$ corresponding to the eigenvalue $\alpha$.
This vector is orthogonal to all (right) eigenvectors of $A_L$
corresponding to the stable eigenvalues.
Thus the hyperplane $E$, which contains the initial part of the stable manifold of $Y$,
can be expressed as
\begin{equation}
E = \left\{ x \in \mathbb{R}^n \,\middle|\, u^{\sf T} (x-Y) = 0 \right\}.
\label{eq:E}
\end{equation}
By definition, $C = (C_1,0,\ldots,0)$ satisfies $f_R(C) \in E$.
Thus $u^{\sf T} \left( A_R C + b - Y \right) = 0$, and by solving this equation for $C_1$ we obtain
\begin{equation}
C_1 = \frac{1 - \frac{1}{\Delta} + \frac{1}{\Delta} \sum_{i=2}^n \alpha^{1-i} \sum_{j=i}^n a^L_j}
{\sum_{i=1}^n \alpha^{1-i} a^R_i}.
\label{eq:C1}
\end{equation}
As shown below (Proposition \ref{pr:fCInOmega}),
$C$ belongs to the right half-space
under the assumptions of Proposition \ref{pr:forwardInvariantRegion}.

Let $P^{(s)}$ for $s = 1,2,\ldots,2^{n-1}$ denote the vertices of $\Omega$ on $K$, see Fig.~\ref{fig:Omega}.
For our purposes it is not necessary to specify which vertex each $P^{(s)}$ corresponds to,
only that each $P^{(s)}$ belongs to $E$ and the boundary of $H$.
So $u^{\sf T} \left( P^{(s)} - Y \right) = 0$ and
\begin{equation}
\left| P^{(s)}_i \right| = M_i r^{i-1},
\label{eq:absValuePsi}
\end{equation}
for all $i = 2,\ldots,n$.
By solving $u^{\sf T} \left( P^{(s)} - Y \right) = 0$ for $P^{(s)}_1$, we obtain
\begin{equation}
P^{(s)}_1 = \frac{1}{\Delta} - \sum_{i=2}^n \alpha^{1-i} \left( P^{(s)}_i + \frac{1}{\Delta} \sum_{j=i}^n a^L_j \right).
\label{eq:P1}
\end{equation}
As shown below (see \eqref{eq:PinOmegaProof1}),
under the assumptions of Proposition \ref{pr:forwardInvariantRegion}
each $P^{(s)}_1$ lies in the left half-space.

Finally, for each $s = 1,2,\ldots,2^{n-1}$, let
\begin{equation}
Q^{(s)} = \frac{C_1}{C_1 - P^{(s)}_1}
\begin{bmatrix} 0 \\ P^{(s)}_2 \\ P^{(s)}_3 \\ \vdots \\ P^{(s)}_n \end{bmatrix},
\label{eq:Q}
\end{equation}
be the point where the line through $P^{(s)}$ and $C$
intersects $\Sigma$, see again Fig.~\ref{fig:Omega}.

\subsection{Characteristic polynomial coefficients}
\label{sub:aZi}

Here we express the parameters of the BCNF in terms of the eigenvalues of $A_L$ and $A_R$.
This is straight-forward because the parameters of the BCNF
are the coefficients of the characteristic polynomials of $A_L$ and $A_R$.
We then use these expressions to obtain bounds on the parameter values
to be used in later proofs.

We first write
\begin{align}
\lambda^L_i &= \eta^L_i r, &
\lambda^R_i &= \eta^R_i r,
\label{eq:etaLi}
\end{align}
for each $i = 2,\ldots,n$.
Notice $\left| \eta^Z_i \right| \le 1$,
for all $i = 2,\ldots,n$ and each $Z \in \{ L, R \}$ by Assumption \ref{as:main}.
For each $i = 1,\ldots,n-1$ and $Z \in \{ L, R \}$, let
\begin{equation}
\xi^Z_i = \text{the sum of all products of $i$ distinct elements of
$\left\{ \eta^Z_2, \eta^Z_3, \ldots, \eta^Z_n \right\}$}.
\label{eq:xiZi}
\end{equation}
For example with $n = 4$,
$\xi^L_1 = \eta^L_2 + \eta^L_3 + \eta^L_4$,
$\xi^L_2 = \eta^L_2 \eta^L_3 + \eta^L_2 \eta^L_4 + \eta^L_3 \eta^L_4$, and
$\xi^L_3 = \eta^L_2 \eta^L_3 \eta^L_4$.

\begin{proposition}
The parameters of \eqref{eq:bcnf} with \eqref{eq:bcnfALARb} satisfy
\begin{equation}
\begin{split}
a^L_1 &= -\alpha - \xi^L_1 r, \\
a^L_i &= (-1)^i \left( \alpha \xi^L_{i-1} + \xi^L_i r \right) r^{i-1}, \qquad \text{for all $i = 2,\ldots,n-1$}, \\
a^L_n &= (-1)^n \alpha \xi^L_{n-1} r^{n-1},
\end{split}
\label{eq:aL}
\end{equation}
and
\begin{equation}
\begin{split}
a^R_1 &= \beta - \xi^R_1 r, \\
a^R_i &= (-1)^i \left( -\beta \xi^R_{i-1} + \xi^R_i r \right) r^{i-1}, \qquad \text{for all $i = 2,\ldots,n-1$}, \\
a^R_n &= (-1)^{n-1} \beta \xi^R_{n-1} r^{n-1}.
\end{split}
\label{eq:aR}
\end{equation}
\label{pr:charPolyCoeffs}
\end{proposition}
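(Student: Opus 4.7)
The plan is to use the fact that the coefficients of a monic polynomial are, up to sign, the elementary symmetric polynomials in its roots, and then to peel off the unstable eigenvalue from the rest.

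First, by \eqref{eq:charPolys}, the coefficient $a^L_k$ equals $(-1)^k$ times the $k$-th elementary symmetric polynomial in the eigenvalues $\alpha, \lambda^L_2, \ldots, \lambda^L_n$ of $A_L$; write this elementary symmetric polynomial as $e_k(\alpha, \lambda^L_2, \ldots, \lambda^L_n)$. I would split this symmetric polynomial according to whether or not each monomial uses the factor $\alpha$:
\begin{equation}
e_k\!\left(\alpha, \lambda^L_2, \ldots, \lambda^L_n\right)
= \alpha \cdot e_{k-1}\!\left(\lambda^L_2, \ldots, \lambda^L_n\right)
+ e_k\!\left(\lambda^L_2, \ldots, \lambda^L_n\right).
\nonumber
\end{equation}

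Next, I would substitute $\lambda^L_i = \eta^L_i r$ from \eqref{eq:etaLi}. Because $e_j$ is homogeneous of degree $j$, this yields $e_j(\lambda^L_2,\ldots,\lambda^L_n) = r^j \xi^L_j$, where $\xi^L_j$ is defined in \eqref{eq:xiZi}. Combining with the previous step gives
\begin{equation}
a^L_k = (-1)^k \left( \alpha \, \xi^L_{k-1} \, r^{k-1} + \xi^L_k \, r^k \right)
= (-1)^k \left( \alpha \, \xi^L_{k-1} + \xi^L_k \, r \right) r^{k-1}.
\nonumber
\end{equation}
I would then verify the two boundary cases: for $k=1$ we have $\xi^L_0 = 1$ (empty product), giving $a^L_1 = -\alpha - \xi^L_1 r$; and for $k=n$ the second term drops out because $\xi^L_n$ involves $n$ factors from a set of size only $n-1$ (so it is identically zero), leaving $a^L_n = (-1)^n \alpha \xi^L_{n-1} r^{n-1}$. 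This matches \eqref{eq:aL}.

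The computation for $A_R$ is identical except the unstable eigenvalue is $-\beta$ rather than $\alpha$. Replacing $\alpha$ with $-\beta$ in the display above yields
\begin{equation}
a^R_k = (-1)^k \left( -\beta \, \xi^R_{k-1} + \xi^R_k \, r \right) r^{k-1},
\nonumber
\end{equation}
and the boundary cases $k=1$ and $k=n$ reduce to $a^R_1 = \beta - \xi^R_1 r$ and $a^R_n = (-1)^{n-1} \beta \xi^R_{n-1} r^{n-1}$, matching \eqref{eq:aR}. There is no real obstacle here; the argument is pure bookkeeping with elementary symmetric polynomials, and the only point requiring care is keeping track of the signs and the homogeneity degree so that the powers of $r$ come out correctly in the two boundary cases.
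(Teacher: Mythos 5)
Your proposal is correct and follows essentially the same route as the paper: both expand the characteristic polynomial via its roots (Vieta's formulas / elementary symmetric polynomials), isolate the contribution of the unstable eigenvalue, substitute $\lambda^Z_i = \eta^Z_i r$, and use homogeneity to extract the powers of $r$, with the same treatment of the boundary cases $k=1$ and $k=n$. No gaps.
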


Before we prove Proposition \ref{pr:charPolyCoeffs}
we use it to derive formulas for two series that arose in \S\ref{sub:formulas}.
Reindexing \eqref{eq:aL} and \eqref{eq:aR} leads to
\begin{align}
\sum_{j=i}^n a^L_j
&= \sum_{j=i}^{n-1} (-1)^j \left( \alpha \xi^L_{j-1} r^{j-1} + \xi^L_j r^j \right) + (-1)^n \alpha \xi^L_{n-1} r^{n-1} \nonumber \\
&= (-1)^i \alpha \xi^L_{i-1} r^{i-1} - (\alpha-1) \sum_{j=i}^{n-1} (-1)^j \xi^L_j r^j, 
\label{eq:aLjsum}
\end{align}
for any $i = 2,\ldots,n$, and
\begin{align}
\sum_{i=1}^n \alpha^{1-i} a^R_i &= \beta + \sum_{i=2}^n \alpha^{1-i} (-1)^{i-1} \beta \xi^R_{i-1} r^{i-1}
+ \sum_{i=1}^{n-1} \alpha^{1-i} (-1)^i \xi^R_i r^i \nonumber \\
&= \beta + (\alpha + \beta) \sum_{i=1}^{n-1} (-1)^i \xi^R_i \left( \frac{r}{\alpha} \right)^i.
\label{eq:aRisum}
\end{align}

\begin{proof}[Proof of Proposition \ref{pr:charPolyCoeffs}]
We just derive \eqref{eq:aL} as \eqref{eq:aR} is analogous.
We write $\lambda^L_1 = \alpha$ so that the eigenvalues of $A_L$ are $\lambda^L_1,\lambda^L_2,\ldots,\lambda^L_n$ and
the characteristic polynomial of $A_L$ can be written as
\begin{equation}
\det(\lambda I - A_L) = \left( \lambda - \lambda^L_1 \right)
\left( \lambda - \lambda^L_2 \right) \cdots \left( \lambda - \lambda^L_n \right).
\nonumber
\end{equation}
By expanding the brackets and matching the result to \eqref{eq:charPolys} we obtain
\begin{equation}
\begin{split}
a^L_1 &= -\sum_{i=1}^n \lambda^L_i \,, \\
a^L_2 &= \sum_{i=1}^{n-1} \sum_{j=i+1}^n \lambda^L_i \lambda^L_j \,, \\
&\hspace{2.5mm}\vdots \\
a^L_n &= (-1)^n \lambda^L_1 \lambda^L_2 \cdots \lambda^L_n \,.
\end{split}
\nonumber
\end{equation}
Then substituting $\lambda^L_1 = \alpha$ and $\lambda^L_i = \eta^L_i r$ for $i = 2,\ldots,n$ gives
\begin{equation}
\begin{split}
a^L_1 &= -\alpha - \sum_{i=2}^n \eta^L_i r, \\
a^L_2 &= \alpha \sum_{i=2}^n \eta^L_i r + \sum_{i=2}^{n-1} \sum_{j=i+1}^n \eta^L_i \eta^L_j r^2, \\
&\hspace{2.5mm}\vdots \\
a^L_n &= (-1)^n  \alpha \eta^L_2 \cdots \eta^L_n r^{n-1}.
\end{split}
\nonumber
\end{equation}
Finally by inserting the definition of $\xi^L_i$ we obtain \eqref{eq:aL}.
\end{proof}

\begin{corollary}
The parameters of the BCNF satisfy
\begin{align}
\left| a^L_1 + \alpha \right| &\le r(n-1), \label{eq:aL15} \\
\left| a^L_i \right| &\le \begin{pmatrix} n-1 \\ i-1 \end{pmatrix} \left( \alpha + \frac{(n-i) r}{i} \right) r^{i-1},
\qquad \text{for all $i = 2,\ldots,n-1$}, \label{eq:aLi5} \\
\left| a^L_n \right| &\le \alpha r^{n-1}, \label{eq:aLn5}
\end{align}
and
\begin{align}
\left| a^R_1 - \beta \right| &\le r(n-1), \label{eq:aR15} \\
\left| a^R_i \right| &\le \begin{pmatrix} n-1 \\ i-1 \end{pmatrix} \left( \beta + \frac{(n-i) r}{i} \right) r^{i-1},
\qquad \text{for all $i = 2,\ldots,n-1$}, \label{eq:aRi5} \\
\left| a^R_n \right| &\le \beta r^{n-1}. \label{eq:aRn5}
\end{align}
\label{co:aZiBounds}
\end{corollary}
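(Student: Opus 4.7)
The plan is to derive each bound directly from the explicit formulas in Proposition \ref{pr:charPolyCoeffs} by estimating the elementary symmetric polynomials $\xi^Z_i$. Since Assumption \ref{as:main} guarantees $|\lambda^Z_j| \le r$ for $j = 2, \ldots, n$ and $Z \in \{L, R\}$, the substitution \eqref{eq:etaLi} yields $|\eta^Z_j| \le 1$. Because $\xi^Z_i$ is defined in \eqref{eq:xiZi} as the sum of all products of $i$ distinct elements drawn from a set of $n-1$ numbers each of modulus at most $1$, the triangle inequality gives the uniform bound
\begin{equation}
\left| \xi^Z_i \right| \le \binom{n-1}{i}, \qquad i = 1, \ldots, n-1.
\nonumber
\end{equation}

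Next, I would plug this estimate into each line of \eqref{eq:aL} and \eqref{eq:aR}. For $i = 1$, the formula $a^L_1 = -\alpha - \xi^L_1 r$ immediately gives $|a^L_1 + \alpha| \le r(n-1)$, establishing \eqref{eq:aL15}; the analogous computation yields \eqref{eq:aR15}. For $i = n$, the single-term formula $a^L_n = (-1)^n \alpha \xi^L_{n-1} r^{n-1}$ together with $|\xi^L_{n-1}| \le \binom{n-1}{n-1} = 1$ gives \eqref{eq:aLn5}, and similarly for \eqref{eq:aRn5}.

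For the intermediate range $2 \le i \le n-1$, the triangle inequality applied to \eqref{eq:aL} gives
\begin{equation}
\left| a^L_i \right| \le \left( \alpha \binom{n-1}{i-1} + \binom{n-1}{i} r \right) r^{i-1}.
\nonumber
\end{equation}
The key algebraic step is to factor out $\binom{n-1}{i-1}$ using the identity
\begin{equation}
\binom{n-1}{i} = \frac{n-i}{i} \binom{n-1}{i-1},
\nonumber
\end{equation}
which follows directly from the definition of the binomial coefficient. This yields exactly the bound \eqref{eq:aLi5}, and the parallel computation for $a^R_i$ yields \eqref{eq:aRi5}.

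There is no real obstacle here: the result is a mechanical corollary of Proposition \ref{pr:charPolyCoeffs}, the assumption $|\lambda^Z_j| \le r$, and the binomial identity above. The only point worth being careful about is keeping track of which symmetric function appears at which power of $r$, so that the final form matches the statement's grouping of an $\alpha$ term with a lower-order $r$ correction inside a single binomial prefactor.
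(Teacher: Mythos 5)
Your proposal is correct and follows exactly the paper's own argument: bound each elementary symmetric function by its number of terms, $|\xi^Z_i| \le \binom{n-1}{i}$, substitute into the formulas of Proposition \ref{pr:charPolyCoeffs}, and use $\binom{n-1}{i} = \frac{n-i}{i}\binom{n-1}{i-1}$ to reach the stated form. The paper leaves the final substitution implicit, whereas you spell it out, but the approach is identical.
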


\begin{proof}
Since $\left| \eta^Z_i \right| \le 1$ for all $i = 2,\ldots,n$,
the absolute value of each sum $\xi^Z_i$ is bounded by the number of terms in the sum.
That is,
\begin{equation}
\left| \xi^Z_i \right| \le \begin{pmatrix} n-1 \\ i \end{pmatrix},
\label{eq:xiZiBound}
\end{equation}
because the terms come from choosing $i$ values from a set with $n-1$ elements.
The result then follows by applying \eqref{eq:xiZiBound} to \eqref{eq:aL} and \eqref{eq:aR}.
\end{proof}

\subsection{Polynomial to linear bounds}

Since $\left| \lambda^L_j \right| \le r$ for all $j = 2,\ldots,n$,
the formula \eqref{eq:Delta2} for the determinant $\Delta$ of $I - A_L$ immediately gives
\begin{equation}
-(\alpha-1)(1+r)^{n-1} \le \Delta \le -(\alpha-1)(1-r)^{n-1}.
\label{eq:DeltaBound}
\end{equation}
These bounds are polynomial functions of $r$.
Here we derive bounds that are weaker but linear functions of $r$,
so easier to work with in later sections.

\begin{proposition}
If $0 < \ee < 1$, $n \ge 2$, and $0 \le r(n-1) \le \ee$, then
\begin{equation}
|\Delta + \alpha - 1| \le \frac{\alpha-1}{1-\ee} \,r(n-1),
\label{eq:DeltaBound2}
\end{equation}
and
\begin{equation}
\left| \frac{1}{\Delta} + \frac{1}{\alpha - 1} \right| \le \frac{1}{(\alpha-1)(1-\ee)} \,r(n-1).
\label{eq:DeltaBound4}
\end{equation}
\label{pr:DeltaBounds}
\end{proposition}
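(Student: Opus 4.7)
The plan is to express both quantities in terms of the product $P = \prod_{j=2}^n(1-\lambda^L_j)$, which by \eqref{eq:Delta2} satisfies $\Delta = -(\alpha-1)P$, and then estimate how far $P$ lies from $1$.

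For \eqref{eq:DeltaBound2}, observe that $\Delta+(\alpha-1) = -(\alpha-1)(P-1)$, so it suffices to bound $|P-1|$. Expanding the product by grouping terms by the number of $\lambda$-factors and applying $\left|\xi^L_k\right| \le \begin{pmatrix} n-1 \\ k \end{pmatrix}$ from \eqref{eq:xiZiBound} gives
\begin{equation*}
|P-1| \le \sum_{k=1}^{n-1}\begin{pmatrix} n-1 \\ k \end{pmatrix} r^k = (1+r)^{n-1}-1.
\end{equation*}
To convert this polynomial bound into a linear one, I would compare term-by-term with a geometric series using $\begin{pmatrix} n-1 \\ k \end{pmatrix} \le (n-1)^k$, obtaining $(1+r)^{n-1} \le \sum_{k=0}^{\infty}((n-1)r)^k = \frac{1}{1-(n-1)r}$ (convergent because $(n-1)r \le \ee < 1$). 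Hence $(1+r)^{n-1}-1 \le \frac{(n-1)r}{1-(n-1)r} \le \frac{(n-1)r}{1-\ee}$, and multiplying by $\alpha-1$ yields \eqref{eq:DeltaBound2}.

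For \eqref{eq:DeltaBound4}, a short calculation gives $\frac{1}{\Delta}+\frac{1}{\alpha-1} = \frac{1}{\alpha-1}\bigl(1-\frac{1}{P}\bigr)$. The key observation is that $|P|$ may be less than $1$, so combining the natural lower bound $|P| \ge (1-r)^{n-1} \ge 1-\ee$ with the already-established bound on $|P-1|$ would cost an extra factor of $\frac{1}{1-\ee}$. To avoid this loss I estimate $|1/P-1|$ directly: expanding each $(1-\lambda^L_j)^{-1} = \sum_{m \ge 0}(\lambda^L_j)^m$ (valid since $|\lambda^L_j| \le r<1$), multiplying the $n-1$ series together, and taking absolute values of the non-constant monomials gives
\begin{equation*}
\left|\frac{1}{P}-1\right| \le \prod_{j=2}^n \frac{1}{1-r}-1 = \frac{1-(1-r)^{n-1}}{(1-r)^{n-1}}.
\end{equation*}
Bernoulli's inequality then supplies both $1-(1-r)^{n-1}\le (n-1)r$ and $(1-r)^{n-1}\ge 1-(n-1)r \ge 1-\ee$, which closes the estimate.

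The main obstacle is recognising this mismatch in part (ii): the naive route via $|P-1|/|P|$ wastes a factor of $1/(1-\ee)$, so the bound on $|1/P-1|$ must be derived on its own using the series expansion of $1/P$. Once that point is identified, both inequalities reduce to the same family of elementary polynomial-versus-linear comparisons, all driven by the hypothesis $r(n-1)\le \ee < 1$.
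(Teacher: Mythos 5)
Your argument is correct and follows essentially the same route as the paper: both start from the factorisation $\Delta = -(\alpha-1)\prod_{j=2}^n\left(1-\lambda^L_j\right)$ in \eqref{eq:Delta2} and reduce the two estimates to the linear-in-$r$ bounds $(1+r)^{n-1} \le 1 + \frac{r(n-1)}{1-\ee}$ and $(1-r)^{-(n-1)} \le 1 + \frac{r(n-1)}{1-\ee}$ of Lemma \ref{le:linearBounds}. The only differences are cosmetic: you re-derive these elementary inequalities via a geometric-series comparison and Bernoulli's inequality rather than the paper's convexity/monotonicity argument, and you control the reciprocal by expanding $1/\prod_{j}(1-\lambda^L_j)$ as a product of geometric series, where the paper simply inverts the two-sided sandwich \eqref{eq:DeltaBound} on the real number $\Delta$ — both devices avoid the extra factor of $\frac{1}{1-\ee}$ you correctly identify as the one pitfall.
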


To prove Proposition \ref{pr:DeltaBounds} we use the following estimates, illustrated in Fig.~\ref{fig:linearBounds}.

\begin{lemma}
If $0 < \ee < 1$, $n \ge 2$, and $0 \le r(n-1) \le \ee$, then
\begin{align}
(1+r)^{n-1} &\le 1 + \frac{r (n-1)}{1-\ee}, \label{eq:bound1} \\
(1+r)^{-(n-1)} &\ge 1 - r (n-1), \label{eq:bound2} \\
(1-r)^{n-1} &\ge 1 - r (n-1), \label{eq:bound3} \\
(1-r)^{-(n-1)} &\le 1 + \frac{r (n-1)}{1-\ee}. \label{eq:bound4}
\end{align}
\label{le:linearBounds}
\end{lemma}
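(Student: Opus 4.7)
The plan is to prove the four bounds in a specific order that exploits how they chain together. The starting point is Bernoulli's inequality $(1+x)^k \ge 1+kx$ for $x \ge -1$ and integer $k \ge 1$, which yields \eqref{eq:bound3} immediately by taking $x = -r$ and $k = n-1$.

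Next I would obtain \eqref{eq:bound2} from \eqref{eq:bound3} via the elementary identity $(1-r)(1+r) = 1 - r^2 \le 1$, which implies $1 - r \le (1+r)^{-1}$ and hence $(1-r)^{n-1} \le (1+r)^{-(n-1)}$; combining this with \eqref{eq:bound3} gives \eqref{eq:bound2}. The same trick in reverse, namely $1 + r \le (1-r)^{-1}$, will let me derive \eqref{eq:bound1} from \eqref{eq:bound4} at the very end.

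The real content therefore lies in \eqref{eq:bound4}, and this is the step I expect to be the main obstacle. Using \eqref{eq:bound3} again I can write $(1-r)^{-(n-1)} \le \frac{1}{1 - r(n-1)}$, which is valid because $r(n-1) \le \ee < 1$. It then suffices to verify the scalar inequality
\begin{equation}
\frac{1}{1-s} \le 1 + \frac{s}{1-\ee}, \qquad s \in [0,\ee].
\nonumber
\end{equation}
Clearing denominators, this is equivalent to $s(\ee - s) \ge 0$, which holds on $[0,\ee]$. Substituting $s = r(n-1)$ gives \eqref{eq:bound4}.

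Finally, \eqref{eq:bound1} follows from $(1+r)^{n-1} \le (1-r)^{-(n-1)}$ together with \eqref{eq:bound4}. The only genuinely delicate piece is \eqref{eq:bound4}: the factor $\frac{1}{1-\ee}$ is what prevents this bound from being an immediate consequence of Bernoulli, and the proof must make explicit use of the hypothesis $r(n-1) \le \ee$ rather than merely $r(n-1) < 1$. Everything else is algebraic bookkeeping built on top of Bernoulli's inequality.
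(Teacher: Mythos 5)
Your proof is correct, and for the key estimate \eqref{eq:bound4} it takes a genuinely different route from the paper. The paper proves \eqref{eq:bound3} by a tangent-line (convexity) argument — essentially your Bernoulli step — and then establishes \eqref{eq:bound4} by showing that $F(r,n) = \frac{(1-r)^{-(n-1)}-1}{r(n-1)}$ is increasing in $r$ and that $F\left(\frac{\ee}{n-1},n\right)$ is decreasing in $n$, reducing everything to the extremal case $n=2$, $r=\ee$, where the ratio equals $\frac{1}{1-\ee}$; the remaining two bounds are asserted to follow "similarly." You instead chain \eqref{eq:bound4} through the intermediate inequality $(1-r)^{-(n-1)} \le \frac{1}{1-r(n-1)}$ (legitimate, since \eqref{eq:bound3} gives $(1-r)^{n-1} \ge 1-r(n-1) > 0$ and reciprocation reverses the inequality) and then verify the scalar inequality $\frac{1}{1-s} \le 1+\frac{s}{1-\ee}$ on $[0,\ee]$, which after clearing denominators is just $s(\ee-s)\ge 0$; you then get \eqref{eq:bound1} and \eqref{eq:bound2} for free from $1+r \le (1-r)^{-1}$ and $1-r \le (1+r)^{-1}$. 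Your version is more elementary and more self-contained: it replaces the paper's two monotonicity claims (which are stated but not verified there) with a one-line algebraic identity, and it makes explicit exactly where the hypothesis $r(n-1)\le\ee$ enters. What the paper's reduction-to-the-extreme-case argument buys in exchange is the observation that the constant $\frac{1}{1-\ee}$ is attained at $n=2$, $r=\ee$, i.e.\ that the bound is sharp within this family — a fact your derivation does not surface. Either proof is acceptable.
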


\begin{figure}[h]
\centering
\begin{tabular}{cc}
  \includegraphics[scale=.5]{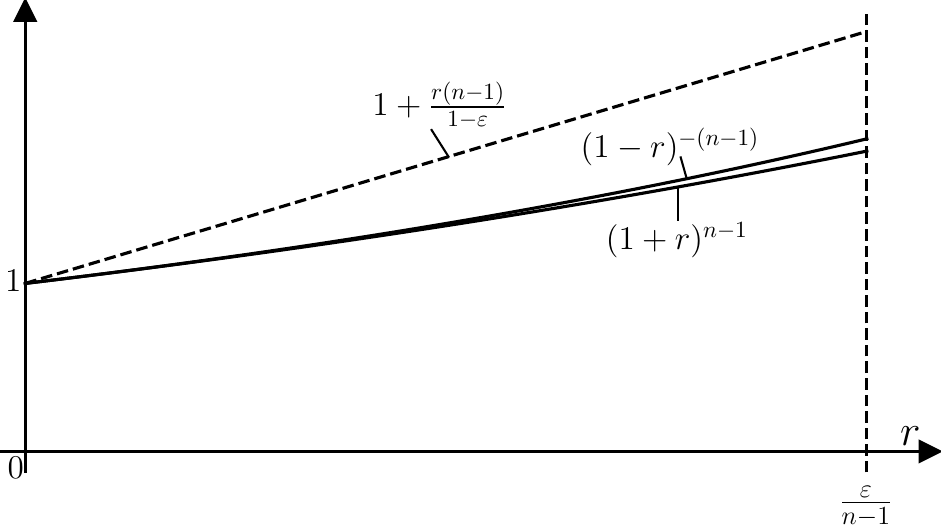} &   \includegraphics[scale=0.5]{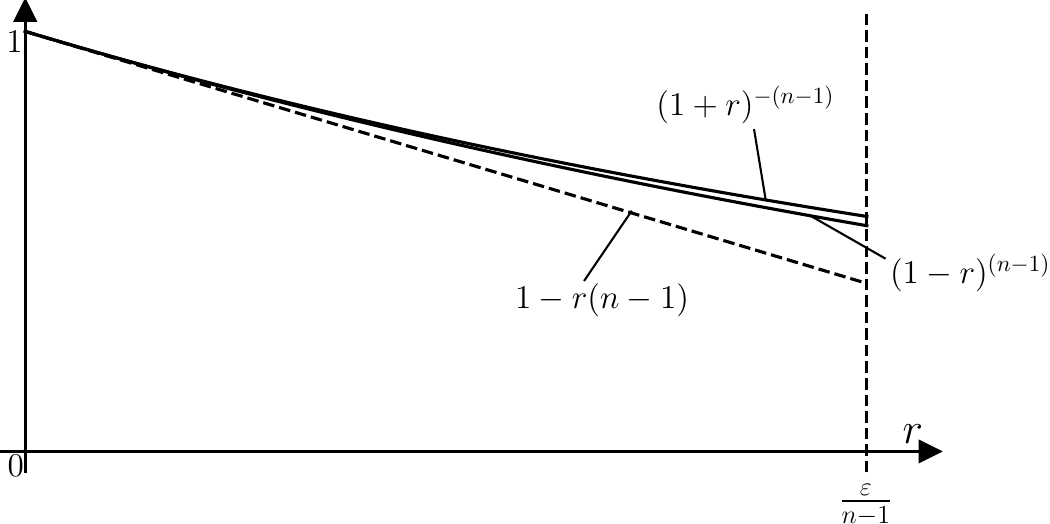} \\
(a)  & (b)  \\[6pt]
\end{tabular}
\caption{Linear bounds represented by the estimates~\eqref{eq:bound1}--\eqref{eq:bound4}.}\label{fig:linearBounds}
\end{figure}

\begin{proof}
We just derive \eqref{eq:bound3} and \eqref{eq:bound4}
as \eqref{eq:bound1} and \eqref{eq:bound2} can be obtained similarly.
As a function of $r$, the left-hand side of \eqref{eq:bound3} is concave up,
so is bounded from below by its tangent line at $r=0$, which is the right-hand side of \eqref{eq:bound3}.
Now observe $F(r,n) = \frac{(1-r)^{-(n-1)} - 1}{r(n-1)}$ is an increasing function of $r$,
thus $F(r,n) \le F \left( \frac{\ee}{n-1}, n \right)$ for all $0 \le r(n-1) \le \ee$.
But $F \left( \frac{\ee}{n-1}, n \right)$ is a decreasing function of $n$,
so $F(r,n) \le F \left( \ee, 2 \right)$ because $n \ge 2$.
That is,
\begin{equation}
\frac{(1-r)^{-(n-1)} - 1}{r(n-1)} \le \frac{(1-\ee)^{-\ee} - 1}{\ee} = \frac{1}{1-\ee},
\nonumber
\end{equation}
which verifies \eqref{eq:bound4}.
\end{proof}

\begin{proof}[Proof of Proposition \ref{pr:DeltaBounds}]
By applying \eqref{eq:bound1} and \eqref{eq:bound3} to \eqref{eq:DeltaBound} we obtain
\begin{equation}
-(\alpha-1) \left( 1 + \tfrac{r(n-1)}{1-\ee} \right) \le \Delta \le -(\alpha-1)(1 - r(n-1)),
\label{eq:DeltaBound1}
\end{equation}
and hence \eqref{eq:DeltaBound2}.
By inverting \eqref{eq:DeltaBound} and then applying \eqref{eq:bound2} and \eqref{eq:bound4} we obtain
\begin{equation}
\frac{-1}{\alpha-1} \left( 1 + \tfrac{r(n-1)}{1-\ee} \right) \le \frac{1}{\Delta} \le \frac{-1}{\alpha-1}(1 - r(n-1)),
\nonumber
\end{equation}
and hence \eqref{eq:DeltaBound4}.
\end{proof}

\subsection{Bounds on series}

Here we derive bounds on two series that appeared in \S\ref{sub:formulas}.
To do this we use the binomial theorem, specifically
\begin{equation}
\sum_{k=0}^{n-1} \begin{pmatrix} n-1 \\ k \end{pmatrix} t^k = (1 + t)^{n-1},
\label{eq:binomialTheorem}
\end{equation}
for any $t \in \mathbb{R}$.

\begin{proposition}
If $0 < \ee < 1$, $n \ge 2$, and $0 \le r(n-1) \le \ee$, then
\begin{equation}
\frac{1}{|\Delta|} \sum_{i=2}^n \alpha^{1-i} \left| \sum_{j=i}^n a^L_j \right| \le
\frac{(1+\ee)}{(\alpha-1)(1-\ee)} \,r(n-1),
\label{eq:aLdoubleSumBound}
\end{equation}
and
\begin{equation}
\left| \frac{1}{\sum_{i=1}^n \alpha^{1-i} a^R} - \frac{1}{\beta} \right|
\le \frac{2}{\beta (1 - 3 \ee)} \,r(n-1).
\label{eq:aRisumBound}
\end{equation}
\label{pr:sumBounds}
\end{proposition}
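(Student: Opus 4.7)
The plan is to use the explicit formulas \eqref{eq:aLjsum} and \eqref{eq:aRisum}, reduce each expression to a polynomial in $r$ via triangle inequalities together with $|\xi^Z_k|\le\binom{n-1}{k}$ (cf.~\eqref{eq:xiZiBound}), evaluate the resulting sums with the binomial theorem, and then replace them by linear-in-$r(n-1)$ bounds using Lemma~\ref{le:linearBounds} and the denominator estimate $|\Delta|\ge(\alpha-1)(1-r)^{n-1}$ from \eqref{eq:DeltaBound}.

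For the first inequality, substituting \eqref{eq:aLjsum} and applying the triangle inequality gives
\begin{equation*}
\sum_{i=2}^n \alpha^{1-i}\left|\sum_{j=i}^n a^L_j\right| \;\le\; S_1+S_2,
\end{equation*}
where $S_1=\sum_{i=2}^n\alpha^{2-i}\binom{n-1}{i-1}r^{i-1}$ and $S_2=(\alpha-1)\sum_{i=2}^n\alpha^{1-i}\sum_{j=i}^{n-1}\binom{n-1}{j}r^j$. The binomial theorem evaluates $S_1=\alpha[(1+r/\alpha)^{n-1}-1]$; swapping the order of summation in $S_2$ and using $\sum_{i=2}^j\alpha^{1-i}=(1-\alpha^{-(j-1)})/(\alpha-1)$ yields $S_2=(1+r)^{n-1}-1-\alpha[(1+r/\alpha)^{n-1}-1]$. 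The $\alpha$-dependent pieces cancel, producing the clean identity $S_1+S_2=(1+r)^{n-1}-1$. Dividing by $|\Delta|$ and applying Lemma~\ref{le:linearBounds} to bound $(1+r)^{n-1}-1$ from above and $(1-r)^{-(n-1)}$ from above (both linearly in $r(n-1)$) then yields \eqref{eq:aLdoubleSumBound}.

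For the second inequality, \eqref{eq:aRisum} reads $\sum_{i=1}^n\alpha^{1-i}a^R_i=\beta+\sigma$ with $\sigma=(\alpha+\beta)\sum_{i=1}^{n-1}(-1)^i\xi^R_i(r/\alpha)^i$, and the elementary identity $(\beta+\sigma)^{-1}-\beta^{-1}=-\sigma/[\beta(\beta+\sigma)]$ reduces the task to bounding $|\sigma|$ from above and $|\beta+\sigma|$ from below. A triangle inequality, the bound $|\xi^R_k|\le\binom{n-1}{k}$, the binomial theorem, and Lemma~\ref{le:linearBounds} applied with $r/\alpha$ in place of $r$ yield
\begin{equation*}
|\sigma| \le (\alpha+\beta)\bigl[(1+r/\alpha)^{n-1}-1\bigr] \le \frac{(\alpha+\beta)r(n-1)}{\alpha(1-\ee)} = \Bigl(\tfrac{1}{\alpha}+\tfrac{1}{\beta}\Bigr)\frac{\beta r(n-1)}{1-\ee} \le \frac{2\beta r(n-1)}{1-\ee},
\end{equation*}
using $\alpha,\beta>1$. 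Then $|\beta+\sigma|\ge\beta-|\sigma|\ge\beta[1-\ee-2r(n-1)]/(1-\ee)$; since $r(n-1)\le\ee$ we have $1-\ee-2r(n-1)\ge 1-3\ee$, which after combining delivers \eqref{eq:aRisumBound} (the implicit assumption $\ee<1/3$ keeps $|\beta+\sigma|$ strictly positive).

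The main obstacle is the first bound: naive termwise triangle-inequality estimates produce coefficients that grow with $n$, so the cancellation that collapses $S_1+S_2$ into the single quantity $(1+r)^{n-1}-1$ is essential for obtaining a bound linear in $r(n-1)$. Once this identity is in hand, both estimates follow by careful, but largely routine, application of Lemma~\ref{le:linearBounds}.
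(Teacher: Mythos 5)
Your argument follows the same skeleton as the paper's: the triangle inequality applied to \eqref{eq:aLjsum}, the bound $|\xi^Z_k|\le\binom{n-1}{k}$, the binomial theorem, a swap of the order of summation in the double sum, and Lemma~\ref{le:linearBounds}; your treatment of \eqref{eq:aRisumBound} is essentially identical to the paper's (same decomposition $T=\beta+\sigma$, same bound on $|\sigma|$, same use of $\tfrac{1}{\alpha}+\tfrac{1}{\beta}<2$ and $r(n-1)\le\ee$ to reach $1-3\ee$). The one genuine difference is in the first bound: where the paper estimates the inner geometric sum crudely via $\sum_{i=2}^{j}\alpha^{1-i}<\tfrac{1}{\alpha-1}$ and keeps the two pieces separate, you evaluate it exactly and exploit the cancellation $S_1+S_2=(1+r)^{n-1}-1$. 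I checked this identity and it is correct; it yields a cleaner and in fact sharper intermediate bound than the paper's.

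One caveat concerns the final constant. Your last step gives $\frac{(1+r)^{n-1}-1}{|\Delta|}\le\frac{r(n-1)}{(\alpha-1)(1-\ee)^2}$, and since $\frac{1}{(1-\ee)^2}-\frac{1+\ee}{1-\ee}=\frac{\ee^2}{(1-\ee)^2}>0$ you have not literally obtained the stated coefficient $\frac{1+\ee}{1-\ee}$. However, the paper's own proof has the same feature: combining \eqref{eq:sumBoundsZProof10} and \eqref{eq:sumBoundsZProof20} in \eqref{eq:sumBoundsZProof2} produces $\frac{(1+\ee)\,r(n-1)}{(\alpha-1)(1-\ee)^2}$, not the stated bound, and the downstream application in Lemma~\ref{le:P1Bound} sums its three contributions to $\frac{2(2-\ee)}{(1-\ee)^2}$ exactly when the coefficient $\frac{1+\ee}{(1-\ee)^2}$ is used. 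So the missing square in the statement of \eqref{eq:aLdoubleSumBound} appears to be a typo; your constant $\frac{1}{(1-\ee)^2}\le\frac{1+\ee}{(1-\ee)^2}$ is strictly stronger than what the paper's proof actually delivers, and nothing downstream is affected. Your remark that $\ee<\tfrac13$ is implicitly required for \eqref{eq:aRisumBound} is also correct and applies equally to the paper's proof.
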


\begin{proof}
By applying \eqref{eq:xiZiBound} to \eqref{eq:aLjsum},
\begin{equation}
\left| \sum_{j=i}^n a^L_j \right| \le \alpha \begin{pmatrix} n-1 \\ i-1 \end{pmatrix} r^{i-1}
+ (\alpha-1) \sum_{j=i}^{n-1} \begin{pmatrix} n-1 \\ j \end{pmatrix} r^j.
\label{eq:aLjsum2}
\end{equation}
By \eqref{eq:DeltaBound1} and $r(n-1) \le \ee$,
\begin{equation}
|\Delta| \ge (\alpha-1)(1 - \ee),
\nonumber
\end{equation}
and so by \eqref{eq:aLjsum2},
\begin{equation}
\frac{1}{|\Delta|} \sum_{i=2}^n \alpha^{1-i} \left| \sum_{j=i}^n a^L_j \right| \le
\frac{\alpha}{(\alpha-1)(1-\ee)} 
\sum_{i=2}^n \begin{pmatrix} n-1 \\ i-1 \end{pmatrix} \left( \frac{r}{\alpha} \right)^{i-1}
+ \frac{1}{1-\ee} \sum_{i=2}^n \alpha^{1-i} \sum_{j=i}^{n-1} \begin{pmatrix} n-1 \\ j \end{pmatrix} r^j.
\label{eq:sumBoundsZProof2}
\end{equation}
By the binomial theorem
\begin{equation}
\sum_{i=2}^n \begin{pmatrix} n-1 \\ i-1 \end{pmatrix} \left( \frac{r}{\alpha} \right)^{i-1}
= -1 + \left( 1 + \frac{r}{\alpha} \right)^{n-1},
\nonumber
\end{equation}
and so by \eqref{eq:bound1} (with $\frac{r}{\alpha}$ in place of $r$),
\begin{equation}
\sum_{i=2}^n \begin{pmatrix} n-1 \\ i-1 \end{pmatrix} \left( \frac{r}{\alpha} \right)^{i-1}
\le \frac{r (n-1)}{\alpha (1-\ee)}.
\label{eq:sumBoundsZProof10}
\end{equation}
Next we bound the double series
\begin{equation}
D = \sum_{i=2}^n \alpha^{1-i} \sum_{j=i}^{n-1} \begin{pmatrix} n-1 \\ j \end{pmatrix} r^j,
\nonumber
\end{equation}
that appears in \eqref{eq:sumBoundsZProof2}.
Switching the order of summation gives
\begin{equation}
D = \sum_{j=2}^{n-1} \begin{pmatrix} n-1 \\ j \end{pmatrix} r^j \sum_{i=2}^j \alpha^{1-i}.
\nonumber
\end{equation}
Then by $\sum_{i=2}^j \alpha^{1-i} < \sum_{i=2}^\infty \alpha^{1-i} = \frac{1}{\alpha - 1}$,
for the value of a geometric series,
\begin{equation}
D \le \frac{1}{\alpha - 1} \sum_{j=2}^{n-1} \begin{pmatrix} n-1 \\ j \end{pmatrix} r^j.
\nonumber
\end{equation}
By the binomial theorem
\begin{equation}
D \le \frac{1}{\alpha - 1} \left( -1 - r(n-1) + (1+r)^{n-1} \right),
\nonumber
\end{equation}
and then by \eqref{eq:bound1}
\begin{equation}
D \le \frac{\ee r (n-1)}{(\alpha-1)(1-\ee)}.
\label{eq:sumBoundsZProof20}
\end{equation}
The desired bound \eqref{eq:aLdoubleSumBound} results
from substituting \eqref{eq:sumBoundsZProof10} and \eqref{eq:sumBoundsZProof20} into \eqref{eq:sumBoundsZProof2}.

Next we derive \eqref{eq:aRisumBound}.
By \eqref{eq:aRisum},
\begin{equation}
\frac{1}{\sum_{i=1}^n \alpha^{1-i} a^R} - \frac{1}{\beta}
= \frac{1}{\beta} \left( -1 + \frac{1}{1 + \left( 1 + \frac{\alpha}{\beta} \right)
\sum_{i=1}^{n-1} (-1)^i \xi^R_i \left( \frac{r}{\alpha} \right)^i} \right).
\label{eq:aRisumBoundProof1}
\end{equation}
By \eqref{eq:xiZiBound}, the binomial theorem, and \eqref{eq:bound1} (with $\frac{r}{\alpha}$ in place of $r$),
\begin{align}
\left| \sum_{i=1}^{n-1} (-1)^i \xi^R_i \left( \frac{r}{\alpha} \right)^i \right|
\le \sum_{i=1}^{n-1} \begin{pmatrix} n-1 \\ i \end{pmatrix} \left( \frac{r}{\alpha} \right)^i
= -1 + \left( 1 + \frac{r}{\alpha} \right)^{n-1}
\le \frac{r(n-1)}{\alpha (1-\ee)}.
\nonumber
\end{align}
Substituting this into \eqref{eq:aRisumBoundProof1} gives
\begin{align}
\left| \frac{1}{\sum_{i=1}^n \alpha^{1-i} a^R} - \frac{1}{\beta} \right|
\le \frac{1}{\beta} \left( -1 + \frac{1}{1 - \left( 1 + \frac{\alpha}{\beta} \right)
\frac{r(n-1)}{\alpha (1-\ee)}} \right)
= \frac{1}{\beta} \frac{r(n-1)}{\frac{\alpha \beta (1-\ee)}{\alpha + \beta} - r(n-1)}.
\label{eq:aRisumBoundProof3}
\end{align}
Finally, $\alpha > 1$ and $\beta > 1$ imply $\frac{1}{\alpha} + \frac{1}{\beta} < 2$,
so $\frac{\alpha \beta}{\alpha + \beta} > \frac{1}{2}$.
By inserting this and $r(n-1) \le \ee$ into the denominator of
the last expression in \eqref{eq:aRisumBoundProof3}
we arrive at \eqref{eq:aRisumBound}.
\end{proof}

\section{Calculations for the polytope $\Omega$}
\label{sec:forwardInvariantRegion}

In this section we prove Proposition \ref{pr:forwardInvariantRegion}. 
In \S\ref{sub:Ps} we show each $f \left( P^{(s)} \right)$ belongs to $\Omega$,
in \S\ref{sub:C} we show $f(C)$ belongs to $\Omega$,
and in \S\ref{sub:Qs} we show each $f \left( Q^{(s)} \right)$ belongs to $\Omega$.
Proposition \ref{pr:forwardInvariantRegion} then follows from the convexity of $\Omega$
and the linearity of each piece of $f$, see \S\ref{sub:OmegaFinal}.

Several results and proofs involve a small quantity $\ee > 0$.
For the purposes of proving Proposition \ref{pr:forwardInvariantRegion}
is sufficient to consider $\ee = \frac{1}{10}$ (the coefficient in \eqref{eq:rC}),
but in many places we leave $\ee$ unspecified to make the calculations more transparent.

\subsection{Each $P^{(s)}$ maps into $\Omega$}
\label{sub:Ps}

\begin{proposition}
If Assumption \ref{as:main} holds and $r(n-1) \le \frac{1}{10}$,
then $P^{(s)}_1 < 0$ and $f \left( P^{(s)} \right) \in E \cap {\rm int}(H)$ for all $s = 1,2,\ldots,2^{n-1}$.
\label{pr:fPsInOmega}
\end{proposition}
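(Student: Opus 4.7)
The statement has three components: the negativity of $P^{(s)}_1$, the containment $f(P^{(s)}) \in E$, and the strict containment $f(P^{(s)}) \in {\rm int}(H)$. Once negativity is established, $f = f_L$ on $P^{(s)}$, and the invariance $f_L(E) \subset E$ follows algebraically: since $u^{\sf T} A_L = \alpha u^{\sf T}$ and $A_L Y + b = Y$, we have $u^{\sf T}(f_L(x) - Y) = \alpha u^{\sf T}(x - Y)$, which vanishes whenever $x \in E$. Thus the real work is (i) bounding $P^{(s)}_1$ from above by a negative number, and (ii) bounding each $\bigl|(A_L P^{(s)} + b)_i\bigr|$ strictly below $M_i r^{i-1}$ for $i = 2,\ldots,n$.

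For (i), the plan is to apply the explicit formula \eqref{eq:P1}. The dominant term $\frac{1}{\Delta}$ is close to $-\frac{1}{\alpha-1}$ by Proposition \ref{pr:DeltaBounds}; the tail $\sum_{i=2}^n \alpha^{1-i} P^{(s)}_i$ is controlled using $|P^{(s)}_i| \le M_i r^{i-1}$, the definition \eqref{eq:Mi} of $M_i$, and the binomial estimate \eqref{eq:sumBoundsZProof10} (applied with $r/\alpha$ in place of $r$); and the remaining double sum is bounded directly by \eqref{eq:aLdoubleSumBound}. With $\ee = \tfrac{1}{10}$ the total correction is a small fraction of $\frac{1}{\alpha-1}$, so not only is $P^{(s)}_1 < 0$ but in fact $|P^{(s)}_1| < \frac{c}{\alpha-1}$ for an explicit constant $c < 2$.

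For (ii), the companion structure of $A_L$ in \eqref{eq:bcnfALARb} gives $(A_L P^{(s)} + b)_i = -a^L_i P^{(s)}_1 + P^{(s)}_{i+1}$ for $i = 2,\ldots,n-1$ and $(A_L P^{(s)} + b)_n = -a^L_n P^{(s)}_1$. Combining Corollary \ref{co:aZiBounds} on $|a^L_i|$, the bound on $|P^{(s)}_1|$ from step (i), and $|P^{(s)}_{i+1}| \le M_{i+1} r^i$, then dividing through by $M_i r^{i-1} = \frac{2\alpha}{\alpha-1}\binom{n-1}{i-1} r^{i-1}$ and using $\binom{n-1}{i}/\binom{n-1}{i-1} = (n-i)/i$, the desired inequality reduces to an elementary estimate of the form $\tfrac{c}{2} + O\bigl(\tfrac{(n-i)r}{i}\bigr) < 1$, which holds whenever $r(n-1) \le \tfrac{1}{10}$. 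The main obstacle will be keeping numerical constants tight enough: the coefficient $\frac{2\alpha}{\alpha-1}$ in \eqref{eq:Mi} is precisely large enough to absorb both the deviation of $P^{(s)}_1$ from $-\frac{1}{\alpha-1}$ in step (i) and the cross-term $P^{(s)}_{i+1}$ in step (ii); the binding case should be $i = 2$, where $(n-i)/i$ attains its largest value.
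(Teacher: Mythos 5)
Your proposal is correct and follows essentially the same route as the paper: the paper also isolates the bound $\left| P^{(s)}_1 + \frac{1}{\alpha-1} \right| \le \frac{2(2-\ee)}{(\alpha-1)(1-\ee)^2}\,r(n-1)$ as a separate lemma (combining \eqref{eq:DeltaBound4}, \eqref{eq:aLdoubleSumBound}, and the binomial estimate \eqref{eq:sumBoundsZProof10}), deduces $-\frac{2}{\alpha-1} < P^{(s)}_1 < 0$ and the invariance of $E$ under $f_L$, and then verifies $\left|f(P^{(s)})_i\right| < M_i r^{i-1}$ componentwise from the companion form, with the same reduction via $\frac{n-i}{i} < \frac{n-1}{2}$ to an explicit constant less than $2$ (the paper's $\kappa_1 = \frac{887}{540}$). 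Your division by $M_i r^{i-1}$ is merely a cosmetic rearrangement of the paper's factoring, and your identification of $i=2$ as the binding case matches the paper's use of that worst case.
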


To prove Proposition \ref{pr:fPsInOmega}
we first derive bounds on the first component of $P^{(s)}$.

\begin{lemma}
If Assumption \ref{as:main} holds and $r(n-1) \le \ee < 1$, then
\begin{equation}
\left| P^{(s)}_1 + \frac{1}{\alpha - 1} \right| \le \frac{2(2-\ee)}{(\alpha-1)(1-\ee)^2} \,r(n-1),
\label{eq:P1Bound}
\end{equation}
for all $s = 1,2,\ldots,2^{n-1}$.
\label{le:P1Bound}
\end{lemma}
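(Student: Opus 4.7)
The plan is to start from the explicit formula \eqref{eq:P1} for $P^{(s)}_1$, add $\frac{1}{\alpha-1}$ to both sides, and apply the triangle inequality to split the error into three pieces: a $\bigl|\tfrac{1}{\Delta}+\tfrac{1}{\alpha-1}\bigr|$ contribution, a sum involving the known components $P^{(s)}_i$ for $i=2,\ldots,n$, and a double sum involving the parameters $a^L_j$. Two of these three pieces have already been bounded for us: Proposition~\ref{pr:DeltaBounds} (specifically \eqref{eq:DeltaBound4}) handles the first, and Proposition~\ref{pr:sumBounds} (specifically \eqref{eq:aLdoubleSumBound}) handles the third. So the only term requiring fresh work is the middle one.

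For the middle term, the plan is to substitute the defining identity $\bigl|P^{(s)}_i\bigr| = M_i r^{i-1}$ with $M_i = \frac{2\alpha}{\alpha-1}\binom{n-1}{i-1}$ from \eqref{eq:Mi}, factor out $\frac{2\alpha}{\alpha-1}$, and reindex so that the resulting sum $\sum_{i=2}^n \binom{n-1}{i-1}(r/\alpha)^{i-1}$ equals $-1+(1+r/\alpha)^{n-1}$ by the binomial theorem \eqref{eq:binomialTheorem}. A single application of \eqref{eq:bound1} of Lemma~\ref{le:linearBounds}, with $r/\alpha$ in place of $r$ (which still satisfies $r/\alpha \cdot (n-1) \le \ee$ since $\alpha > 1$), then yields the clean bound $\frac{2}{(\alpha-1)(1-\ee)}\,r(n-1)$ for this middle piece.

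Adding the three bounds together gives a total of the form $\frac{C}{(\alpha-1)(1-\ee)}\,r(n-1)$ with $C = 1 + (1+\ee) + 2 = 4+\ee$. The statement to be proved has the stronger-looking denominator $(1-\ee)^2$ and numerator $2(2-\ee) = 4-2\ee$, so the final step is the elementary verification that
\begin{equation}
\frac{4+\ee}{1-\ee}\;\le\;\frac{4-2\ee}{(1-\ee)^2},
\nonumber
\end{equation}
which upon clearing denominators reduces to $-\ee-\ee^2 \le 0$ and is trivially true for $0<\ee<1$.

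I expect no serious obstacle: every ingredient is already packaged in the earlier propositions, and the only judgement call is how to present the three-term triangle inequality cleanly. The mildly fiddly bit is keeping track of the arithmetic of the $\ee$-dependent constants so that the final combined constant $4+\ee$ can be absorbed into the more conservative form $\frac{2(2-\ee)}{(1-\ee)^2}$ stated in the lemma; this is presumably the form chosen because it composes more neatly with the other estimates used downstream in \S\ref{sub:Qs} and \S\ref{sub:contractingInvariant}.
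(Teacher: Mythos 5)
Your proposal is correct and follows essentially the same route as the paper: the same three-way triangle-inequality split of \eqref{eq:P1}, the same invocations of \eqref{eq:DeltaBound4} and \eqref{eq:aLdoubleSumBound}, and the same binomial-theorem-plus-\eqref{eq:bound1} estimate for the $\sum_i \alpha^{1-i}\bigl|P^{(s)}_i\bigr|$ term (which the paper packages as \eqref{eq:sumBoundsZProof10}). Your explicit verification that $\frac{4+\ee}{1-\ee}\le\frac{2(2-\ee)}{(1-\ee)^2}$ is the one step the paper leaves implicit, and your arithmetic there is right.
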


\begin{proof}
The first component of $P^{(s)}$ is given by \eqref{eq:P1}, so
\begin{equation}
\left| P^{(s)}_1 + \frac{1}{\alpha - 1} \right| \le
\left| \frac{1}{\Delta} + \frac{1}{\alpha - 1} \right| +
\sum_{i=2}^n \alpha^{1-i} \left( \left| P^{(s)}_i \right| + \frac{1}{|\Delta|} \left| \sum_{j=i}^n a^L_j \right| \right),
\label{eq:P1BoundProof1}
\end{equation}
using the triangle inequality.
By \eqref{eq:Mi} and \eqref{eq:absValuePsi},
\begin{equation}
\sum_{i=2}^n \alpha^{1-i} \left| P^{(s)}_i \right|
= \frac{2 \alpha}{1-\alpha} \sum_{i=2}^n \begin{pmatrix} n-1 \\ i-1 \end{pmatrix} \left( \frac{r}{\alpha} \right)^{i-1},
\nonumber
\end{equation}
so by \eqref{eq:sumBoundsZProof10} we have
\begin{equation}
\sum_{i=2}^n \alpha^{1-i} \left| P^{(s)}_i \right|
\le \frac{2 r (n-1)}{(1-\alpha)(1-\ee)}.
\label{eq:P1BoundProof2}
\end{equation}
By substituting \eqref{eq:DeltaBound4}, \eqref{eq:aLdoubleSumBound}, and \eqref{eq:P1BoundProof2}
into \eqref{eq:P1BoundProof1} we obtain \eqref{eq:P1Bound}.
\end{proof}

\begin{proof}[Proof of Proposition \ref{pr:fPsInOmega}]
Fix $\ee = \frac{1}{10}$.
Since $r(n-1) \le \frac{1}{10}$ and $\frac{2(2-\ee)}{(1-\ee)^2} = \frac{380}{81} < 10$,
the right-hand side of \eqref{eq:P1Bound} is less than $\frac{1}{\alpha-1}$.
Thus
\begin{equation}
-\frac{2}{\alpha - 1} < P^{(s)}_1 < 0.
\label{eq:PinOmegaProof1}
\end{equation}
so $P^{(s)}$ maps under $f_L$ as claimed.
Moreover, $f \left( P^{(s)} \right) \in E$ because $P^{(s)} \in E$
and $E$ is invariant under $f_L$.

It remains to show $f \left( P^{(s)} \right) \in {\rm int}(H)$,
that is $\left| f \left( P^{(s)} \right) \right|_i < M_i r^{i-1}$, for all $i = 2,\ldots,n$.
We have
\begin{equation}
f \left( P^{(s)} \right) = \begin{bmatrix}
a^L_1 P^{(s)}_1 + P^{(s)}_2 + 1 \\
a^L_2 P^{(s)}_1 + P^{(s)}_3 \\
\vdots \\
a^L_{n-1} P^{(s)}_1 + P^{(s)}_n \\
a^L_n P^{(s)}_1
\end{bmatrix}.
\label{eq:fP}
\end{equation}
In particular,
\begin{equation}
\left| f \left( P^{(s)} \right)_n \right| = \left| a^L_n \right| \left| P^{(s)}_1 \right|,
\nonumber
\end{equation}
and by \eqref{eq:aLn5} and \eqref{eq:PinOmegaProof1},
\begin{equation}
\left| f \left( P^{(s)} \right) \right|_n < \alpha r^{n-1} \times \frac{2}{\alpha-1} = M_n r^{n-1}.
\nonumber
\end{equation}
Now let $i = 2,\ldots,n-1$.
By \eqref{eq:fP},
\begin{equation}
\left| f \left( P^{(s)} \right)_i \right| \le \left| a^L_i \right| \left| P^{(s)}_1 \right| + \left| P^{(s)}_{i+1} \right|.
\nonumber
\end{equation}
Inserting \eqref{eq:aLi5}, \eqref{eq:P1Bound}, and $\left| P^{(s)}_{i+1} \right| = M_{i+1} r^i$ gives
\begin{equation}
\left| f \left( P^{(s)} \right)_i \right| \le \frac{\alpha}{\alpha - 1} \begin{pmatrix} n-1 \\ i-1 \end{pmatrix} r^{i-1}
\left( \left( 1 + \frac{(n-i) r}{\alpha i} \right) \left( 1 + \frac{2(2-\ee)}{(1-\ee)^2} \,r(n-1) \right) + \frac{2 (n-i) r}{i} \right),
\nonumber
\end{equation}
after factoring.
Next we substitute $\frac{n-i}{i} < \frac{n-1}{2}$, $\alpha > 1$, and $r(n-1) \le \ee$ resulting in
\begin{equation}
\left| f \left( P^{(s)} \right)_i \right|
\le \frac{\kappa_1 \alpha}{\alpha - 1} \begin{pmatrix} n-1 \\ i-1 \end{pmatrix} r^{i-1}
= \frac{\kappa_1 M_i}{2} \,r^{i-1},
\nonumber
\end{equation}
where
\begin{equation}
\kappa_1 = \left( 1 + \frac{\ee}{2} \right) \left( 1 + \frac{2 \ee (2 - \ee)}{(1-\ee)^2} \right) + \ee.
\nonumber
\end{equation}
But $\ee = \frac{1}{10}$, so $\kappa_1 = \frac{887}{540} < 2$,
thus $\left| f \left( P^{(s)} \right)_i \right| < M_i r^{i-1}$ as required.
\end{proof}

\subsection{The vertex $C$ maps into $\Omega$}
\label{sub:C}

\begin{proposition}
If Assumption \ref{as:main} holds and $r(n-1) \le \frac{1}{10}$,
then $C_1 > 0$ and $f(C) \in E \cap {\rm int}(H)$.
\label{pr:fCInOmega}
\end{proposition}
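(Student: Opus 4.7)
The plan is to mirror the structure of the proof of Proposition \ref{pr:fPsInOmega}. First I would establish a sharp bound on $C_1$ showing it lies close to the one-dimensional preimage $z = \frac{\alpha}{(\alpha-1) \beta}$ from \eqref{eq:preimage}, which simultaneously yields $C_1 > 0$ and a usable upper bound on $C_1$. Once $C_1 > 0$ is known we have $f(C) = f_R(C)$, and since $f_R(C) \in E$ by the defining property of $C$, it only remains to verify that $f_R(C) \in {\rm int}(H)$.

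For the first step I would prove a lemma analogous to Lemma \ref{le:P1Bound} of the form
\begin{equation}
\left| C_1 - \frac{\alpha}{(\alpha-1) \beta} \right| \le \frac{\kappa(\ee)}{(\alpha-1) \beta} \, r(n-1),
\nonumber
\end{equation}
with an explicit constant $\kappa(\ee)$ that stays moderate at $\ee = \frac{1}{10}$. Starting from \eqref{eq:C1} I would write $C_1 = N/D$ and estimate numerator and denominator separately: Proposition \ref{pr:DeltaBounds} gives $\frac{1}{\Delta} \approx -\frac{1}{\alpha-1}$, and \eqref{eq:aLdoubleSumBound} controls the correction $\frac{1}{\Delta} \sum_{i=2}^n \alpha^{1-i} \sum_{j=i}^n a^L_j$ at order $r(n-1)$, so $N$ is an $r(n-1)$-order perturbation of $\frac{\alpha}{\alpha-1}$; estimate \eqref{eq:aRisumBound} of Proposition \ref{pr:sumBounds} gives $\frac{1}{D} \approx \frac{1}{\beta}$. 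The two estimates combine via the identity $\frac{N}{D} - \frac{\alpha/(\alpha-1)}{\beta} = N\bigl(\frac{1}{D} - \frac{1}{\beta}\bigr) + \frac{1}{\beta}\bigl(N - \frac{\alpha}{\alpha-1}\bigr)$. With $\ee = \frac{1}{10}$ and $r(n-1) \le \frac{1}{10}$ the resulting bound is strictly smaller than $\frac{\alpha}{(\alpha-1) \beta}$, confirming $C_1 > 0$.

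For the second step, since $C = (C_1, 0, \ldots, 0)$ lies on the positive $x_1$-axis,
\begin{equation}
f(C) = f_R(C) = \begin{bmatrix} -a^R_1 C_1 + 1 \\ -a^R_2 C_1 \\ \vdots \\ -a^R_n C_1 \end{bmatrix},
\nonumber
\end{equation}
and the task reduces to verifying $|a^R_i C_1| < M_i r^{i-1}$ for each $i = 2, \ldots, n$. Corollary \ref{co:aZiBounds} bounds $|a^R_i| \le \binom{n-1}{i-1}\bigl(\beta + \frac{(n-i) r}{i}\bigr) r^{i-1}$ for $i < n$ and $|a^R_n| \le \beta r^{n-1}$. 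Multiplying by the upper bound $\beta C_1 \le \frac{\alpha}{\alpha-1}\bigl(1 + \kappa(\ee) r(n-1)\bigr)$ extracted from step 1, and using $\frac{n-i}{i} < \frac{n-1}{2}$ exactly as in the proof of Proposition \ref{pr:fPsInOmega}, each term takes the form $\frac{\kappa'(\ee) \alpha}{\alpha-1}\binom{n-1}{i-1} r^{i-1}$ for some $\kappa'(\ee)$ that must be shown to stay below $2$ at $\ee = \frac{1}{10}$. This yields $|a^R_i C_1| < M_i r^{i-1}$, giving $f_R(C) \in {\rm int}(H)$.

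The main obstacle is the constant tracking in the first step: the factor $2$ in the definition \eqref{eq:Mi} of $M_i$ leaves only a 100\% margin, but the estimates on $1/\Delta$, on the double sum entering $N$, and on $1/D$ combine multiplicatively, so every individual constant must be pinned down carefully to ensure the final $\kappa'(\frac{1}{10})$ sits below $2$. Once the sharp estimate on $C_1$ is in hand, the componentwise verification is routine and essentially parallels the corresponding computation for $P^{(s)}$.
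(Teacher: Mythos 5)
Your proposal follows essentially the same route as the paper: a sharp estimate $\bigl| C_1 - \tfrac{\alpha}{(\alpha-1)\beta} \bigr| = O(r(n-1))$ built from Propositions \ref{pr:DeltaBounds} and \ref{pr:sumBounds} (the paper's Lemma \ref{le:C1Bounds}, whose constant is $\tfrac{2\alpha+2+\ee}{1-3\ee}$), followed by the componentwise check $|a^R_i|\,|C_1| < M_i r^{i-1}$ with constants $\kappa < 2$ at $\ee = \tfrac{1}{10}$, exactly parallel to Proposition \ref{pr:fPsInOmega}. The only difference is a trivially different grouping of terms in the quotient decomposition, so the approaches are the same.
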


To prove Proposition \ref{pr:fCInOmega} we use tight bounds on the value of $C_1$:

\begin{lemma}
If Assumption \ref{as:main} holds and $r(n-1) \le \ee < 1$, then
\begin{equation}
\left| C_1 - \frac{\alpha}{(\alpha-1) \beta} \right| \le \frac{2 \alpha + 2 + \ee}{(1-3\ee) (\alpha - 1) \beta} \,r(n-1).
\label{eq:C1Bound}
\end{equation}
\label{le:C1Bounds}
\end{lemma}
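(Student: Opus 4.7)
The plan is to treat $C_1 = N/D$ with numerator and denominator
\begin{align}
N &= 1 - \tfrac{1}{\Delta} + \tfrac{1}{\Delta} \sum_{i=2}^n \alpha^{1-i} \sum_{j=i}^n a^L_j, &
D &= \sum_{i=1}^n \alpha^{1-i} a^R_i,
\nonumber
\end{align}
from \eqref{eq:C1}, and to compare each of $N$ and $1/D$ to its $r \to 0$ limit, which is $N_0 := \frac{\alpha}{\alpha-1}$ and $\frac{1}{\beta}$ respectively. Since the target $\frac{\alpha}{(\alpha-1)\beta}$ equals $N_0/\beta$, I would use the identity
\begin{equation}
C_1 - \frac{\alpha}{(\alpha-1)\beta} \;=\; N \left( \frac{1}{D} - \frac{1}{\beta} \right) + \frac{N - N_0}{\beta},
\nonumber
\end{equation}
so the problem reduces by the triangle inequality to estimating $|N-N_0|$, $|N|$, and $\left|\frac{1}{D} - \frac{1}{\beta}\right|$.

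For $|N - N_0|$ the key observation is that
\begin{equation}
N - N_0 \;=\; -\!\left( \tfrac{1}{\Delta} + \tfrac{1}{\alpha-1} \right) + \tfrac{1}{\Delta} \sum_{i=2}^n \alpha^{1-i} \sum_{j=i}^n a^L_j,
\nonumber
\end{equation}
which the triangle inequality combined with \eqref{eq:DeltaBound4} and \eqref{eq:aLdoubleSumBound} bounds by $\frac{(2+\ee)\,r(n-1)}{(\alpha-1)(1-\ee)}$. A second triangle-inequality application then gives $|N| \le N_0 + |N-N_0| \le \frac{1}{\alpha-1}\left( \alpha + \frac{(2+\ee)\ee}{1-\ee} \right)$ using $r(n-1)\le\ee$. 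The bound $\left|\frac{1}{D} - \frac{1}{\beta}\right| \le \frac{2\,r(n-1)}{\beta(1-3\ee)}$ is supplied directly by \eqref{eq:aRisumBound}.

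Plugging these three estimates into the identity above and factoring out $\frac{r(n-1)}{(\alpha-1)\beta}$, the bound to verify becomes
\begin{equation}
2(\alpha-1)|N| \cdot \tfrac{1}{1-3\ee} + \tfrac{2+\ee}{1-\ee} \;\le\; \tfrac{2\alpha + 2 + \ee}{1-3\ee}.
\nonumber
\end{equation}
Inserting $(\alpha-1)|N| \le \alpha + \frac{(2+\ee)\ee}{1-\ee}$ and clearing denominators turns the left-hand side into $2\alpha + \frac{(2+\ee)(2\ee + 1 - 3\ee)}{1-\ee}$. The algebraic miracle driving the sharpness of the stated constant is that $2\ee + (1-3\ee) = 1 - \ee$, so the last fraction collapses to $2 + \ee$ and both sides agree exactly.

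The main obstacle is not conceptual but bookkeeping: the estimate is tight, so one must be careful to apply \eqref{eq:aLdoubleSumBound} and \eqref{eq:DeltaBound4} to $N - N_0$ (and not to each of the three terms of $N$ separately), otherwise an extraneous constant depending on $\alpha$ or $\beta$ appears and the right-hand side is strictly worse than what the lemma claims. I would also note in passing that the lemma is vacuous unless $\ee < \tfrac{1}{3}$, which is harmless since in the intended application $\ee = \tfrac{1}{10}$.
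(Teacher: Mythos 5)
Your proposal is correct and takes essentially the same route as the paper: the paper starts from the same formula \eqref{eq:C1}, uses the same three estimates \eqref{eq:DeltaBound4}, \eqref{eq:aLdoubleSumBound}, and \eqref{eq:aRisumBound}, and differs only in splitting the product the other way --- it writes $C_1 - \frac{\alpha}{(\alpha-1)\beta} = \frac{\alpha}{\alpha-1}\left(\frac{1}{T}-\frac{1}{\beta}\right) - \frac{1}{T}\left(\frac{1}{\Delta}+\frac{1}{\alpha-1}\right) + \frac{S}{T}$ with $T = \sum_{i=1}^n \alpha^{1-i} a^R_i$, bounding $\frac{1}{|T|}$ by $\frac{1-\ee}{\beta(1-3\ee)}$ where you instead carry $|N|$ into the first term. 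The extra contribution each variant incurs is identical, namely $\frac{2\ee(2+\ee)}{(\alpha-1)(1-\ee)\beta(1-3\ee)}\,r(n-1)$, so both land on exactly the constant $2\alpha+2+\ee$; your remark that the statement is only meaningful for $\ee < \tfrac{1}{3}$ is accurate and harmless.
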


\begin{proof}
The formula \eqref{eq:C1} is
\begin{equation}
C_1 = \frac{1}{T} \left( 1 - \frac{1}{\Delta} + S \right),
\label{eq:C12}
\end{equation}
where we now write $S = \frac{1}{\Delta} \sum_{i=2}^n \alpha^{1-i} \sum_{j=i}^n a^L_j$
and $T = \sum_{i=1}^n \alpha^{1-i} a^R_i$.
Equation \eqref{eq:C12} can be rearranged as
\begin{equation}
C_1 - \frac{\alpha}{(\alpha-1) \beta} = \frac{\alpha}{\alpha - 1} \left( \frac{1}{T} - \frac{1}{\beta} \right)
- \frac{1}{T} \left( \frac{1}{\Delta} + \frac{1}{\alpha - 1} \right) + \frac{S}{T},
\label{eq:C1BoundsProof10}
\end{equation}
so by the triangle inequality
\begin{equation}
\left| C_1 - \frac{\alpha}{(\alpha-1) \beta} \right| \le
\frac{\alpha}{\alpha - 1} \left| \frac{1}{T} - \frac{1}{\beta} \right|
+ \frac{1}{|T|} \left( \left| \frac{1}{\Delta} + \frac{1}{\alpha - 1} \right| + |S| \right).
\label{eq:C1BoundsProof11}
\end{equation}
Putting $r(n-1) \le \ee$ into \eqref{eq:aRisumBound} gives
\begin{equation}
\frac{1}{|T|} \le \frac{1}{\beta} \left( 1 + \frac{2 \ee}{1 - 3 \ee} \right) = \frac{1-\ee}{\beta (1 - 3 \ee)}.
\label{eq:aRisumBoundWeak}
\end{equation}
By then substituting \eqref{eq:DeltaBound4}, \eqref{eq:aLdoubleSumBound},
\eqref{eq:aRisumBound}, and \eqref{eq:aRisumBoundWeak}
into \eqref{eq:C1BoundsProof11} we obtain
\begin{align}
\left| C_1 - \frac{\alpha}{(\alpha-1) \beta} \right|
&\le \frac{\alpha}{\alpha - 1} \,\frac{2 r(n-1)}{\beta (1-3\ee)} + \frac{1-\ee}{\beta (1 - 3 \ee)}
\left( \frac{r(n-1)}{(\alpha-1)(1-\ee)} + \frac{(1+\ee) r(n-1)}{(\alpha-1)(1-\ee)} \right) \nonumber \\
&= \frac{2 \alpha + 2 + \ee}{(1-3\ee)(\alpha - 1) \beta} \,r(n-1),
\label{eq:C1BoundsProof20}
\end{align}
as required.
\end{proof}

\begin{proof}[Proof of Proposition \ref{pr:fCInOmega}]
Fix $\ee = \frac{1}{10}$; this allows us to apply Lemma \ref{le:C1Bounds}.
To achieve some simplification, in the right-hand side of \eqref{eq:C1Bound}
we replace the numerator $2 \alpha + 2 + \ee$ with the larger value $(4 + \ee) \alpha$
to produce
\begin{equation}
\left| C_1 - \frac{\alpha}{(\alpha-1) \beta} \right| \le \frac{(4 + \ee) \alpha}{(1-3\ee) (\alpha - 1) \beta} \,r(n-1).
\label{eq:C1BoundWeaker}
\end{equation}
This implies
\begin{equation}
C_1 \ge \frac{\alpha}{(\alpha - 1) \beta} \left( 1 - \frac{\ee (4 + \ee)}{1 - 3 \ee} \right),
\nonumber
\end{equation}
and notice $1 - \frac{\ee (4 + \ee)}{1 - 3 \ee} = \frac{29}{70} > 0$, so $C_1 > 0$ as claimed.
Thus $f(C) = f_R(C)$, hence $f(C) \in E$ by the definition of $C$.

It remains to show $f(C) \in {\rm int}(H)$,
that is, $|f(C)_i| < M_i r^{i-1}$ for all $i = 2,\ldots,n$.
Since $C_1 > 0$ is the only non-zero component of $C$,
\begin{equation}
f(C) = \begin{bmatrix} a^R_1 C_1 + 1 \\ a^R_2 C_1 \\ \vdots \\ a^R_n C_1 \end{bmatrix}.
\label{eq:fC}
\end{equation}
In particular,
\begin{equation}
|f(C)_n| = \left| a^R_n \right| |C_1|,
\nonumber
\end{equation}
and by \eqref{eq:aRn5} and \eqref{eq:C1BoundWeaker},
\begin{align}
|f(C)_n| &\le \beta r^{n-1} \left( \frac{\alpha}{(\alpha-1) \beta} + \frac{(4 + \ee) \alpha}{(1-3 \ee)(\alpha-1) \beta} \,r(n-1) \right)
\nonumber \\
&= \frac{\alpha}{\alpha - 1} \left( 1 + \frac{4+\ee}{1-3 \ee} \,r(n-1) \right) r^{n-1}. \nonumber 
\end{align}
By using $r(n-1) \le \ee$ this reduces to
\begin{equation}
|f(C)_n| \le \frac{\kappa_2 \alpha}{\alpha - 1} \,r^{n-1}
= \frac{\kappa_2 M_n}{2} \,r^{n-1},
\nonumber
\end{equation}
where $\kappa_2 = \frac{1 + \ee + \ee^2}{1 - 3 \ee}$.
But $\ee = \frac{1}{10}$, so $\kappa_2 = \frac{111}{70} < 2$,
thus $|f(C)_n| < M_n r^{n-1}$.

Now let $i = 2,\ldots,n-1$.
By \eqref{eq:fC},
\begin{equation}
|f(C)_i| = \left| a^R_i \right| |C_1|,
\nonumber
\end{equation}
and by \eqref{eq:aRi5} and \eqref{eq:C1BoundWeaker},
\begin{align}
|f(C)_i| &\le \begin{pmatrix} n-1 \\ i-1 \end{pmatrix} \left( \beta + \frac{(n-i) r}{i} \right) r^{i-1}
\left( \frac{\alpha}{(\alpha-1) \beta} + \frac{(4 + \ee) \alpha}{(1-3 \ee)(\alpha-1) \beta} \,r(n-1) \right) \nonumber \\
&\le \frac{\alpha}{\alpha-1} \begin{pmatrix} n-1 \\ i-1 \end{pmatrix}
\left( 1 + \frac{r (n-1)}{2 \beta} \right)
\left( 1 + \frac{4+\ee}{1-3 \ee} \,r(n-1) \right) r^{i-1},
\nonumber
\end{align}
using also $\frac{(n-i) r}{i} < \frac{r (n-1)}{2}$ to produce the second line.
By then using $\beta > 1$ and $r(n-1) \le \ee$ this reduces to
\begin{equation}
|f(C)_i| \le \frac{\kappa_3 \alpha}{\alpha - 1} \begin{pmatrix} n-1 \\ i-1 \end{pmatrix} r^{i-1}
= \frac{\kappa_3 M_i}{2} \,r^{i-1},
\nonumber
\end{equation}
where $\kappa_3 = \left( 1 + \frac{\ee}{2} \right) \kappa_2$.
But $\ee = \frac{1}{10}$, so $\kappa_3 = \frac{333}{200} < 2$,
thus $|f(C)_i| < M_i r^{i-1}$ as required.
\end{proof}

\subsection{Each $Q^{(s)}$ maps into $\Omega$}
\label{sub:Qs}

\begin{proposition}
If Assumption \ref{as:main}, \eqref{eq:rA}, and \eqref{eq:rC} hold,
then $f \left( Q^{(s)} \right) \in {\rm int}(\Omega)$ for all $s = 1,2,\ldots,2^{n-1}$.
\label{pr:fQsInOmega}
\end{proposition}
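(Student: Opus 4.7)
Since $Q^{(s)}$ lies on $\Sigma$ where the two pieces of $f$ agree, $f(Q^{(s)})$ is unambiguously defined, and writing $Q^{(s)} = (1-t) P^{(s)} + t C$ with $t = \frac{|P^{(s)}_1|}{C_1 + |P^{(s)}_1|} \in (0,1)$, the affineness of $f_L$ gives $f(Q^{(s)}) = (1-t)\,f_L(P^{(s)}) + t\,f_L(C)$. A direct computation from \eqref{eq:bcnfALARb} using $Q^{(s)}_1 = 0$ yields the explicit form $f(Q^{(s)})_1 = (1-t) P^{(s)}_2 + 1$, $f(Q^{(s)})_i = (1-t) P^{(s)}_{i+1}$ for $i = 2,\ldots,n-1$, and $f(Q^{(s)})_n = 0$.

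The key step is to exploit the convex structure of $\Omega = {\rm Conv}(C, K)$. Since $E$ is a hyperplane and $C \notin E$, every point of $\mathbb{R}^n$ admits a unique decomposition as $\lambda C + (1-\lambda) y$ with $y \in E$, and such a point lies in ${\rm int}(\Omega)$ precisely when $\lambda \in (0,1)$ and $|y_i| < M_i r^{i-1}$ for every $i = 2,\ldots,n$. Applying the left eigenvector $u^{\sf T}$ of $A_L$ to both sides and using $u^{\sf T} f_L(P^{(s)}) = u^{\sf T} Y$ (from $P^{(s)} \in E$) together with $u^{\sf T} b = (1-\alpha) u^{\sf T} Y$ (from the fixed-point equation for $Y$), a short calculation produces the clean identity $\lambda = \alpha\, t$.

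The condition $\lambda \in (0,1)$ then reduces to $(\alpha - 1)|P^{(s)}_1| < C_1$. Substituting the estimates of Lemma \ref{le:P1Bound} and Lemma \ref{le:C1Bounds}, the difference $C_1 - (\alpha-1)|P^{(s)}_1|$ is, to leading order, $\frac{\alpha}{\alpha-1}\bigl(\frac{1}{\alpha} + \frac{1}{\beta} - 1\bigr)$ minus corrections of order $r(n-1)$; condition \eqref{eq:rC} makes the leading term at least $10\,r(n-1)$, which dominates the corrections. For the constraint on $y$, we have $y_n = 0$ trivially, and for $2 \leq i \leq n-1$ the formula $y_i = \frac{(1-t) P^{(s)}_{i+1}}{1 - \alpha t}$ combined with $|P^{(s)}_{i+1}| \leq M_{i+1} r^i$ and $M_{i+1}/M_i = (n-i)/i \leq (n-1)/2$ reduces $|y_i| < M_i r^{i-1}$ to the mild strengthening $(\alpha-1)|P^{(s)}_1| < C_1\bigl(1 - \tfrac{(n-1)r}{2}\bigr)$, which costs at most a fraction $1/20$ of $C_1$ since $(n-1)r \leq 1/10$.

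The main obstacle will be the numerical bookkeeping: the constants $\frac{2(2-\ee)}{(1-\ee)^2}$ and $\frac{4+\ee}{1-3\ee}$ from Lemmas \ref{le:P1Bound} and \ref{le:C1Bounds} accumulate in the error terms, and one must verify that the margin $10\,r(n-1)$ from \eqref{eq:rC} beats their sum uniformly in $\alpha$ and $\beta$. The squeeze is tightest in the combined limits $\alpha \to \infty$ (where $\frac{\alpha}{\alpha-1} \to 1$ weakens the dominant term) and $\beta \to 1^+$, so the estimates must be carried out carefully to ensure the net inequality survives in those regimes.
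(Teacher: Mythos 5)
Your argument is correct and takes a genuinely different route from the paper. The paper proves membership in ${\rm int}(\Omega)$ by comparing $f\left(Q^{(s)}\right)$ coordinate-wise against auxiliary points $W^{(s,t)}$ placed on every edge from $P^{(t)}$ to $C$ at the same $x_1$-level, which requires a separate positivity lemma for $f\left(Q^{(s)}\right)_1$ (using \eqref{eq:rA}), two lemmas bounding $1-\tfrac{1}{C_1}$ and $\tfrac{P^{(s)}_2}{C_1-P^{(s)}_1}$, and a further lemma controlling the ratio $\tfrac{C_1-P^{(s)}_1}{C_1-P^{(t)}_1}$ across different vertices. You instead use pyramid coordinates $\lambda C+(1-\lambda)y$, $y\in E$, and the left-eigenvector identity $\lambda=\alpha t$ (which I verified: $u^{\sf T}f_L(P^{(s)})=u^{\sf T}Y$ and $u^{\sf T}A_L(C-Y)=\alpha\,u^{\sf T}(C-Y)$ give it immediately, provided you note $u^{\sf T}(C-Y)\ne 0$, i.e.\ $C\notin E$ --- this follows since $C_1>0$ while $E$ meets the $x_1$-axis near $\tfrac{-1}{\alpha-1}<0$ by the same estimates as Lemma \ref{le:P1Bound}). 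This collapses the whole proposition to the single scalar inequality $(\alpha-1)\left|P^{(s)}_1\right|<C_1\left(1-\tfrac{r(n-1)}{2}\right)$, eliminates the cross-indices $t$ entirely, and in fact never uses \eqref{eq:rA}. What you buy is a cleaner and slightly stronger result; what you defer is the one numerical check. That check does close: writing $\rho=r(n-1)$ and using $\tfrac{\alpha}{(\alpha-1)\beta}=1+\tfrac{\alpha\phi}{\alpha-1}$ together with Lemmas \ref{le:P1Bound} and \ref{le:C1Bounds} at $\ee=\tfrac1{10}$ and $\phi\ge 10\rho$, the inequality reduces (after dividing by $\tfrac{\alpha}{\alpha-1}\rho$) to
\begin{equation}
\frac{19}{2} \;>\; \frac{2+\frac{2.1}{\alpha}}{0.7\,\beta}+\left(\frac{380}{81}+\frac12\right)\left(1-\frac1\alpha\right),
\nonumber
\end{equation}
whose right-hand side is at most $\tfrac{4.1}{0.7}\approx 5.86$ as $\alpha\to1$ and at most $\tfrac{2}{0.7}+\tfrac{380}{81}+\tfrac12\approx 8.05$ as $\alpha\to\infty$, $\beta\to1$ --- so it holds uniformly, with the tight regime exactly where you predicted. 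You should write this computation out to make the proof complete, but the structure is sound and, to my mind, tidier than the paper's.
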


Proposition \ref{pr:fQsInOmega} is proved below after we establish four lemmas.
First Lemma \ref{le:gQs1} shows that each $f \left( Q^{(s)} \right)$
belongs to the right half-space,
so certainly lies to the right of the boundary $K$ of $\Omega$.
To show $f \left( Q^{(s)} \right)$ lies to the left of $C$,
we use the formula \eqref{eq:Q} for $Q^{(s)}$ to obtain
\begin{align}
C_1 - f \left( Q^{(s)} \right)_1 = C_1 - \left( \frac{C_1 P^{(s)}_2}{C_1 - P^{(s)}_1} + 1 \right)
= C_1 \left( 1 - \frac{1}{C_1} - \frac{P^{(s)}_2}{C_1 - P^{(s)}_1} \right).
\label{eq:gQsinOmegaProof1}
\end{align}
Lemmas \ref{le:QsPart1} and \ref{le:QsPart2}
provide bounds on $1 - \frac{1}{C_1}$ and $\frac{P^{(s)}_2}{C_1 - P^{(s)}_1}$, respectively.
These show $C_1 - f \left( Q^{(s)} \right)_1 > 0$,
i.e.~$f \left( Q^{(s)} \right)$ lies to the left of $C$.
But more importantly they show $C_1 - f \left( Q^{(s)} \right)_1$ is large enough
that we can then show $f \left( Q^{(s)} \right)$ lies within the remaining boundaries of $\Omega$.
Lemma \ref{le:QsPart3} provides an additional bound that enables us to complete the proof.

\begin{lemma}
If Assumption \ref{as:main} and \eqref{eq:rA} hold and $r(n-1) \le \frac{1}{10}$,
then $f \left( Q^{(s)} \right)_1 > 0$ for all $s = 1,2,\ldots,2^{n-1}$.
\label{le:gQs1}
\end{lemma}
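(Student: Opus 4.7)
The plan is to exploit that $Q^{(s)}$ lies on the switching manifold $\Sigma$ by construction --- its first component is zero in formula \eqref{eq:Q}. By continuity of $f$ either piece computes $f(Q^{(s)})_1$, and combining the first row of $A_L$ with $Q^{(s)}_1 = 0$ collapses this to
\begin{equation}
f \bigl( Q^{(s)} \bigr)_1 = Q^{(s)}_2 + 1.
\nonumber
\end{equation}
Hence the lemma reduces to the single estimate $\bigl| Q^{(s)}_2 \bigr| < 1$.

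To obtain this estimate, I would read off $Q^{(s)}_2 = \frac{C_1 P^{(s)}_2}{C_1 - P^{(s)}_1}$ from \eqref{eq:Q} and invoke the two propositions already established in this section: Proposition \ref{pr:fPsInOmega} (applicable since $r(n-1) \le \frac{1}{10}$) gives $P^{(s)}_1 < 0$, while Proposition \ref{pr:fCInOmega} gives $C_1 > 0$. Together these imply $C_1 - P^{(s)}_1 > C_1 > 0$, so the prefactor $\frac{C_1}{C_1 - P^{(s)}_1}$ lies strictly in $(0,1)$. Consequently $\bigl| Q^{(s)}_2 \bigr| < \bigl| P^{(s)}_2 \bigr| = M_2 r$, using \eqref{eq:absValuePsi}.

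Finally, since $M_2 = \frac{2\alpha(n-1)}{\alpha-1}$, hypothesis \eqref{eq:rA} yields
\begin{equation}
M_2 r \;<\; \frac{2\alpha}{\alpha-1} \cdot \frac{3}{7}\left( 1 - \frac{1}{\alpha} \right) = \frac{6}{7},
\nonumber
\end{equation}
so $f(Q^{(s)})_1 > 1 - \frac{6}{7} = \frac{1}{7} > 0$. The argument poses no real obstacle; its only subtlety is observing that the sign of $P^{(s)}_1$ makes the prefactor a genuine contraction, so no slack is lost in passing from $P^{(s)}_2$ to $Q^{(s)}_2$ and the crude bound \eqref{eq:rA} --- rather than any tighter estimate on $C_1$ or $P^{(s)}_1$ --- already suffices. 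It is also worth noting that the constants $M_i$ were calibrated exactly so that this step closes.
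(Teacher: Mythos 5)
Your proposal is correct and follows essentially the same route as the paper: both reduce the claim to $f(Q^{(s)})_1 = \frac{C_1 P^{(s)}_2}{C_1 - P^{(s)}_1} + 1$, observe that the prefactor $\frac{C_1}{C_1 - P^{(s)}_1}$ lies in $(0,1)$ by the sign information from Propositions \ref{pr:fPsInOmega} and \ref{pr:fCInOmega}, and then bound $|P^{(s)}_2| = M_2 r = \frac{2\alpha r(n-1)}{\alpha-1} < \frac{6}{7}$ using \eqref{eq:rA}. The only cosmetic difference is that you derive the first-component formula from the first row of $A_Z$ and $Q^{(s)}_1 = 0$, while the paper reads it directly off \eqref{eq:Q}.
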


\begin{proof}
By \eqref{eq:Q},
\begin{equation}
f \left( Q^{(s)} \right)_1 = \frac{C_1 P^{(s)}_2}{C_1 - P^{(s)}_1} + 1.
\nonumber
\end{equation}
Observe $0 < \frac{C_1}{C_1 - P^{(s)}_1} < 1$, because
$P^{(s)}_1 < 0$ (Proposition \ref{pr:fPsInOmega}) and $C_1 > 0$ (Proposition \ref{pr:fCInOmega}).
Also
\begin{equation}
\left| P^{(s)}_2 \right| = M_2 r = \frac{2 \alpha r(n-1)}{\alpha - 1},
\label{eq:gQs1Proof2}
\end{equation}
so
\begin{equation}
f \left( Q^{(s)} \right)_1 > 1 - \frac{2 \alpha r(n-1)}{\alpha - 1}.
\nonumber
\end{equation}
Then substituting \eqref{eq:rA} gives
\begin{equation}
f \left( Q^{(s)} \right)_1 > 1 - \frac{6}{7} > 0.
\nonumber
\end{equation}
\end{proof}

\begin{lemma}
If Assumption \ref{as:main} and \eqref{eq:rC} hold, then
\begin{equation}
1 - \frac{1}{C_1} > \frac{2 \beta \phi}{3},
\label{eq:Qspart1}
\end{equation}
where
\begin{equation}
\phi = \frac{1}{\alpha} + \frac{1}{\beta} - 1.
\label{eq:phi}
\end{equation}
\label{le:QsPart1}
\end{lemma}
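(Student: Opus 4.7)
The key observation is that at $r=0$ one has $1 - \tfrac{1}{C_1} = \beta\phi$ \emph{exactly}, so the target $\tfrac{2\beta\phi}{3}$ leaves a margin of $\tfrac{\beta\phi}{3}$ to absorb the $O(r(n-1))$ perturbation terms, and the constraint $r(n-1) < \tfrac{\phi}{10}$ from \eqref{eq:rC} should be precisely what is needed to close this gap.

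The plan is to work with the representation \eqref{eq:C12}, namely $C_1 = D/T$ where $D = 1 - \tfrac{1}{\Delta} + S$ and $T = \sum_{i=1}^n \alpha^{1-i} a^R_i$. By Proposition \ref{pr:fCInOmega}, $C_1 > 0$; the estimates cited below show that the relative perturbation $|D - D_0|/D_0$ is bounded well away from $1$, so $D > 0$, and hence $T = D/C_1 > 0$. Since $\beta\phi = 1 - \tfrac{\beta(\alpha-1)}{\alpha} < 1$, the inequality \eqref{eq:Qspart1} is equivalent to $L > 0$, where
\begin{equation*}
L \;=\; D(3 - 2\beta\phi) - 3T.
\end{equation*}
Substituting $r=0$ gives $\Delta = -(\alpha-1)$, $S = 0$, $T = \beta$, and a short calculation using the identity $\alpha\beta\phi = \alpha + \beta - \alpha\beta$ yields $L_0 = \tfrac{\alpha\beta\phi}{\alpha-1}$.

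To bound the perturbation I would apply $3 - 2\beta\phi \le 3$ together with Proposition \ref{pr:DeltaBounds} (controlling $|\tfrac{1}{\Delta} + \tfrac{1}{\alpha-1}|$), the bound \eqref{eq:aLdoubleSumBound} (controlling $|S|$), and the estimate $|T - \beta| \le \tfrac{(\alpha+\beta)r(n-1)}{\alpha(1-\ee)}$ that sits inside the proof of Proposition \ref{pr:sumBounds}, to get
\begin{equation*}
|L - L_0| \;\le\; 3 \left[ \frac{(2+\ee)\,r(n-1)}{(\alpha-1)(1-\ee)} + \frac{(\alpha+\beta)\,r(n-1)}{\alpha(1-\ee)} \right].
\end{equation*}
Setting $\ee = \tfrac{1}{10}$ (as elsewhere in this section) and inserting $r(n-1) < \tfrac{\phi}{10}$, the sufficient condition $L_0 > |L - L_0|$ reduces, after cancelling $\phi$ and clearing denominators, to the purely algebraic inequality
\begin{equation*}
3 \alpha \beta \;>\; 1.1 + \alpha + \beta - \frac{\beta}{\alpha},
\end{equation*}
which I would verify directly for all $\alpha, \beta \ge 1$: at $\alpha = \beta = 1$ the difference $3\alpha\beta - 1.1 - \alpha - \beta + \tfrac{\beta}{\alpha}$ equals $0.9$, and a one-line check shows its partial derivatives in $\alpha$ and $\beta$ are positive on $\alpha,\beta \ge 1$.

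The hard part is that the constants are fairly tight in the regime $\alpha, \beta \to 1^+$ (where $\phi \to 1$ is largest and \eqref{eq:rC} therefore permits the largest $r(n-1)$): the final algebraic inequality has slack only $0.9$ out of a leading $3\alpha\beta$. The factor $\tfrac{2}{3}$ in \eqref{eq:Qspart1} and the $\tfrac{1}{10}$ in \eqref{eq:rC} appear to be calibrated exactly so that everything closes, which is why I expect one needs the full strength of the earlier estimates rather than cruder bounds.
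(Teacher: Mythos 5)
Your argument is correct, and it reaches the same destination by a recognisably different bookkeeping. The paper's proof first packages all the perturbation estimates into Lemma \ref{le:C1Bounds}, obtaining the lower bound \eqref{eq:Qspart1Proof20} on $C_1$ itself, and then shows by a chain of rearrangements (seeded by the quadratic inequality \eqref{eq:Qspart1Proof50} in $\alpha$ alone, with $\beta>1$ used only once) that this lower bound exceeds $\tfrac{3}{3-2\beta\phi}$. You instead clear denominators at the outset, linearising the functional $L = D(3-2\beta\phi)-3T$ about $r=0$; this lets you use the direct estimate on $|T-\beta|$ from inside the proof of Proposition \ref{pr:sumBounds} and avoid the reciprocal bounds \eqref{eq:aRisumBound}--\eqref{eq:aRisumBoundWeak} altogether, at the cost of ending with a genuinely two-variable inequality $3\alpha\beta > 1.1+\alpha+\beta-\beta/\alpha$, which your monotonicity check disposes of cleanly (the slack $0.9$ at $\alpha=\beta=1$ is comparable to the paper's margin). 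Your identification of $L_0 = \alpha\beta\phi/(\alpha-1)$ and the equivalence $1-\tfrac1{C_1}>\tfrac{2\beta\phi}{3} \iff L>0$ both check out. One point you should make explicit rather than leave implicit: you need $\phi>0$ both to assert $|3-2\beta\phi|\le 3$ in the bound on $|L-L_0|$ and to cancel $\phi$ in the final reduction; this does follow from \eqref{eq:rC} since $r>0$ and $n\ge 2$ force the right-hand side of \eqref{eq:rC} to be positive (the paper relies on the same fact silently). Also, the positivity of $D$ and $T$ is cheaper to get directly from your own estimates ($|D-D_0|<D_0$ and $|T-\beta|<\beta$) than by routing through Proposition \ref{pr:fCInOmega}, which keeps the lemma self-contained.
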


\begin{proof}
Fix $\ee = \frac{1}{10}$.
By \eqref{eq:C1Bound}
\begin{equation}
C_1 \ge \frac{\alpha}{(\alpha-1) \beta} \left( 1 - \frac{2 \alpha + 2 + \ee}{(1 - 3 \ee) \alpha} \,r(n-1) \right).
\label{eq:Qspart1Proof20}
\end{equation}
We now perform a series of calculations that lead towards a bound on the right-hand side of \eqref{eq:Qspart1Proof20}.
First observe
\begin{equation}
\alpha^2 - \frac{23}{10} \,\alpha + \frac{21}{5} > 0,
\nonumber
\end{equation}
because $\alpha > 1$.
Since $\ee = \frac{1}{10}$ this is
\begin{equation}
\left( \frac{1}{\ee} - 9 \right) \alpha^2 - (2 + 3 \ee) \alpha + 2(2+\ee) > 0,
\label{eq:Qspart1Proof50}
\end{equation}
which can be rearranged to form
\begin{equation}
3 - \frac{2}{\alpha} < \frac{(1 - 3 \ee) \alpha}{\ee (2 \alpha + 2 + \ee)}.
\nonumber
\end{equation}
Since $\beta > 1$ this implies
\begin{equation}
2 + \frac{1}{\beta} - \frac{2}{\alpha} < \frac{(1 - 3 \ee) \alpha}{\ee (2 \alpha + 2 + \ee)},
\nonumber
\end{equation}
which can be rearranged as
\begin{equation}
\ee < \frac{(1 - 3 \ee) \alpha \beta}{(2 \alpha + 2 + \ee)(3 - 2 \beta \phi)}.
\label{eq:Qspart1Proof10}
\end{equation}
The bound \eqref{eq:rC} is $r(n-1) \le \ee \phi$, so
\begin{equation}
r(n-1) < \frac{(1-3 \ee) \alpha \beta \phi}{(2 \alpha + 2 + \ee)(3 - 2 \beta \phi)},
\nonumber
\end{equation}
which can be rearranged as
\begin{equation}
\frac{\alpha}{(\alpha - 1) \beta} \left( 1 - \frac{2 \alpha + 2 + \ee}{(1 - 3 \ee) \alpha} \,r(n-1) \right)
> \frac{3}{3 - 2 \beta \phi}.
\nonumber
\end{equation}
So by \eqref{eq:Qspart1Proof20} we have
\begin{equation}
C_1 > \frac{3}{3 - 2 \beta \phi},
\nonumber
\end{equation}
which rearranges to \eqref{eq:Qspart1}.
\end{proof}

\begin{lemma}
If Assumption \ref{as:main} and \eqref{eq:rC} hold, then
\begin{equation}
\frac{\left| P^{(s)}_2 \right|}{C_1 - P^{(s)}_1} < \frac{\beta \phi}{3},
\label{eq:Qspart2}
\end{equation}
for all $s = 1,2,\ldots,2^{n-1}$ and
where $\phi$ is given by \eqref{eq:phi}.
\label{le:QsPart2}
\end{lemma}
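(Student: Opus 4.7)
The plan is to combine the asymptotic formulas for $P^{(s)}_1$ and $C_1$ provided by Lemmas \ref{le:P1Bound} and \ref{le:C1Bounds} in order to bound $C_1 - P^{(s)}_1$ from below by (essentially) $\frac{\alpha+\beta}{(\alpha-1)\beta}$, then divide the explicit formula $|P^{(s)}_2| = M_2 r = \frac{2\alpha(n-1)r}{\alpha-1}$ by this bound and apply \eqref{eq:rC}. Throughout I would fix $\ee = \frac{1}{10}$ so that both auxiliary lemmas apply and $r(n-1) < \phi/10$.

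First, from Lemma \ref{le:P1Bound} with $\ee = \frac{1}{10}$ one gets $-P^{(s)}_1 \ge \frac{1}{\alpha-1}\bigl(1 - \frac{380}{81} r(n-1)\bigr)$, and from \eqref{eq:C1BoundWeaker} one gets $C_1 \ge \frac{\alpha}{(\alpha-1)\beta}\bigl(1 - \frac{41}{7} r(n-1)\bigr)$. Adding these two inequalities and pulling out a common factor gives
\begin{equation}
C_1 - P^{(s)}_1 \;\ge\; \frac{1}{(\alpha-1)\beta}\left[\,\alpha + \beta - \Bigl(\tfrac{380\,\beta}{81} + \tfrac{41\,\alpha}{7}\Bigr) r(n-1)\right].
\nonumber
\end{equation}
Next, I would apply \eqref{eq:rC} in the form $r(n-1) < \phi/10$ to the bracketed expression to obtain the cleaner lower bound $C_1 - P^{(s)}_1 > \frac{1}{(\alpha-1)\beta}\bigl[\alpha(1 - \tfrac{41\phi}{70}) + \beta(1 - \tfrac{38\phi}{81})\bigr]$.

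I would then divide $|P^{(s)}_2| < \frac{\alpha\phi}{5(\alpha-1)}$ (obtained from $|P^{(s)}_2| = M_2 r$ and \eqref{eq:rC}) by the above lower bound; the desired inequality \eqref{eq:Qspart2} is then equivalent to
\begin{equation}
\tfrac{41\alpha}{14}\,\phi + \tfrac{190\beta}{81}\,\phi \;<\; 2\alpha + 5\beta .
\nonumber
\end{equation}
The core algebraic step is verifying this. I would use the elementary facts $\phi < \tfrac{1}{\alpha}$ and $\phi < \tfrac{1}{\beta}$, which follow immediately from $\alpha,\beta>1$; these bound the left-hand side above by $\frac{41}{14} + \frac{190}{81} = \frac{5981}{1134} < 6$, while the right-hand side is strictly greater than $7$ because $\alpha,\beta > 1$. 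The resulting gap of roughly one unit is comfortable.

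The only mildly subtle point, and what I would treat as the main obstacle, is making sure that the numerical constants coming out of the two auxiliary lemmas combine into the reduction above with room to spare; the computation works out because the specific factor $\frac{\beta\phi}{3}$ on the right-hand side of \eqref{eq:Qspart2} is slack enough to absorb both the $\frac{41}{7}$ perturbation of $C_1$ and the $\frac{380}{81}$ perturbation of $P^{(s)}_1$ when $r(n-1)$ is shrunk by the factor $\frac{1}{10}$ in \eqref{eq:rC}. No additional structural arguments are needed beyond the two lemmas already invoked in the neighbouring proofs.
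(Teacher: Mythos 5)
Your proposal is correct and follows essentially the same route as the paper: both proofs bound $C_1 - P^{(s)}_1$ from below via Lemmas \ref{le:P1Bound} and \ref{le:C1Bounds}, bound $\left| P^{(s)}_2 \right| = M_2 r$ using \eqref{eq:rC}, and reduce the claim to an explicit polynomial inequality in $\alpha$ and $\beta$. The only difference is in the final algebraic verification --- the paper rearranges a manifestly positive combination of $\alpha$, $\beta$, while you use $\alpha\phi < 1$ and $\beta\phi < 1$ to settle $\tfrac{41\alpha}{14}\phi + \tfrac{190\beta}{81}\phi < 2\alpha + 5\beta$ --- and your constants ($\tfrac{380}{81}$, $\tfrac{41}{7}$, $\tfrac{5981}{1134} < 6 < 7$) all check out.
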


\begin{proof}
Fix $\ee = \frac{1}{10}$.
By \eqref{eq:P1Bound},
\begin{equation}
P^{(s)}_1 \le \frac{1}{\alpha - 1} \left( 1 - \frac{2(2-\ee)}{(1-\ee)^2} \,r(n-1) \right),
\label{eq:Qspart2Proof1}
\end{equation}
while $P^{(s)}_2$ and $C_1$ are bounded by \eqref{eq:gQs1Proof2} and \eqref{eq:Qspart1Proof20}, respectively.
By substituting these bounds into the left-hand side of \eqref{eq:Qspart2},
and also $r(n-1) \le \ee$ in \eqref{eq:Qspart1Proof20} and \eqref{eq:Qspart2Proof1}, we obtain
\begin{equation}
\frac{\left| P^{(s)}_2 \right|}{C_1 - P^{(s)}_1} \le
\frac{2 \alpha}{(\alpha - 1) U} \,r(n-1),
\label{eq:Qspart2Proof10}
\end{equation}
where
\begin{equation}
U = \frac{\alpha}{(\alpha-1) \beta} \left( 1 - \frac{(2 \alpha + 2 + \ee) \ee}{(1 - 3 \ee) \alpha} \right)
+ \frac{1}{\alpha - 1} \left( 1 - \frac{2 \ee (2-\ee)}{(1-\ee)^2} \right).
\label{eq:Qspart2Proof12}
\end{equation}
Observe
\begin{equation}
\frac{4}{35} \alpha + \frac{187}{810} \beta + \frac{3}{10} (\beta - 1) > 0,
\nonumber
\end{equation}
which can be rearranged to
\begin{equation}
\frac{\alpha}{10} < \frac{1}{6} \left( \frac{5 \alpha}{7} + \frac{43 \beta}{81} - \frac{3}{10} \right).
\nonumber
\end{equation}
Since $\ee = \frac{1}{10}$, this is equivalent to
\begin{align}
\ee \alpha < \frac{1}{6} \left( \left( 1 - \frac{2 \ee}{1 - 3 \ee} \right) \alpha
+ \left( 1 - \frac{2 \ee (2 - \ee)}{(1-\ee)^2} \right) \beta
- \frac{\ee (2 + \ee)}{1 - 3 \ee} \right)
= \frac{(\alpha-1) \beta U}{6}.
\nonumber
\end{align}
But \eqref{eq:rC} is $r(n-1) \le \ee \phi$, so
\begin{equation}
r(n-1) < \frac{(\alpha-1) \beta \phi U}{6 \alpha},
\nonumber
\end{equation}
and with \eqref{eq:Qspart2Proof10} this gives \eqref{eq:Qspart2}.
\end{proof}

\begin{lemma}
If Assumption \ref{as:main} holds and $r(n-1) \le \frac{1}{10}$, then
\begin{equation}
\frac{C_1 - P^{(s)}_1}{C_1 - P^{(t)}_1} > \frac{3}{20 \beta},
\label{eq:Qspart3}
\end{equation}
for all $s,t = 1,2,\ldots,2^{n-1}$.
\label{le:QsPart3}
\end{lemma}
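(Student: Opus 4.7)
The plan is to use the explicit bounds on $P^{(s)}_1$ and $P^{(t)}_1$ from Lemma \ref{le:P1Bound}, together with the sign information $C_1 > 0$ (Proposition \ref{pr:fCInOmega}) and $P^{(s)}_1, P^{(t)}_1 < 0$ (Proposition \ref{pr:fPsInOmega}), to bound the ratio below by a constant depending on neither $\alpha$ nor $\beta$, and then observe that this constant comfortably exceeds $\tfrac{3}{20\beta}$ because $\beta > 1$. Notably, the tight estimate on $C_1$ from Lemma \ref{le:C1Bounds} will not be needed.

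First I would set $\ee = \tfrac{1}{10}$ in Lemma \ref{le:P1Bound} and use $r(n-1) \le \tfrac{1}{10}$ so that the coefficient $\tfrac{2(2-\ee)}{(1-\ee)^2} r(n-1)$ on the right-hand side of \eqref{eq:P1Bound} reduces to $\tfrac{38}{81}$, giving
\begin{equation}
\frac{43}{81(\alpha-1)} \;\le\; -P^{(s)}_1 \;\le\; \frac{119}{81(\alpha-1)}
\nonumber
\end{equation}
for every $s$. Replacing $|P^{(s)}_1|$ by its lower bound and $|P^{(t)}_1|$ by its upper bound (both of which decrease the ratio), I would then write
\begin{equation}
\frac{C_1 - P^{(s)}_1}{C_1 - P^{(t)}_1} = \frac{C_1 + |P^{(s)}_1|}{C_1 + |P^{(t)}_1|}
\;\ge\; \frac{C_1 + \tfrac{43}{81(\alpha-1)}}{C_1 + \tfrac{119}{81(\alpha-1)}}.
\nonumber
\end{equation}

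The elementary observation that $C \mapsto \tfrac{C+a}{C+b}$ with $0 < a < b$ is increasing on $C \ge 0$ (since $(C+b)^{-2}(b-a) > 0$) implies the right-hand side is minimized at $C_1 = 0$, giving the uniform lower bound $\tfrac{43}{119}$. Since $\tfrac{43}{119} > \tfrac{3}{20}$ (as $43 \cdot 20 = 860 > 357 = 3 \cdot 119$) and $\beta > 1$ by Assumption \ref{as:main}, I would conclude
\begin{equation}
\frac{C_1 - P^{(s)}_1}{C_1 - P^{(t)}_1} \;\ge\; \frac{43}{119} \;>\; \frac{3}{20} \;>\; \frac{3}{20 \beta},
\nonumber
\end{equation}
proving \eqref{eq:Qspart3}.

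There is essentially no obstacle here; the quantitative work was already carried out in Lemma \ref{le:P1Bound}, and the crudeness of the bound $\tfrac{3}{20\beta}$ — independent of the delicate quantity $\phi$ featured in Lemmas \ref{le:QsPart1} and \ref{le:QsPart2} — suggests this estimate serves only as an auxiliary control on a less sensitive term in the proof of Proposition \ref{pr:fQsInOmega}.
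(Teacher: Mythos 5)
Your proof is correct, and it takes a genuinely different route from the paper. The paper bounds the numerator $C_1 - P^{(s)}_1$ from below and the denominator $C_1 - P^{(t)}_1$ from above using \emph{both} Lemma \ref{le:P1Bound} and Lemma \ref{le:C1Bounds}, obtaining a ratio $U/V$ of explicit rational functions of $\alpha$ and $\beta$, and then verifies $U/V > \frac{3}{20\beta}$ by writing the difference as a single fraction whose numerator is a visibly positive combination of terms like $(\beta-1)$ and $(\alpha-1)$. You instead exploit the monotonicity of $C \mapsto \frac{C+a}{C+b}$ for $0 < a < b$ to discard $C_1$ altogether (needing only $C_1 > 0$ from Proposition \ref{pr:fCInOmega}), and since your bounds $a = \frac{43}{81(\alpha-1)}$ and $b = \frac{119}{81(\alpha-1)}$ share the factor $\frac{1}{\alpha-1}$, the ratio collapses to the universal constant $\frac{43}{119}$. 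This is shorter, avoids the $C_1$ estimate entirely, and yields a strictly stronger, parameter-independent lower bound ($\frac{43}{119} > \frac{3}{20} > \frac{3}{20\beta}$), which would still suffice where the lemma is invoked in the proof of Proposition \ref{pr:fQsInOmega}. The paper's form $\frac{3\varepsilon}{2\beta}$ is tailored to cancel against the factor $\frac{\beta\phi C_1}{3}$ in \eqref{eq:gQsinOmegaProof40}, but nothing is lost by substituting your sharper constant there.
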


\begin{proof}
Fix $\ee = \frac{1}{10}$.
Using \eqref{eq:P1Bound}, \eqref{eq:C1Bound}, and $r(n-1) \le \ee$,
\begin{equation}
\frac{C_1 - P^{(s)}_1}{C_1 - P^{(t)}_1} \ge \frac{U}{V}
\nonumber
\end{equation}
where $U$ is given by \eqref{eq:Qspart2Proof12} and
\begin{equation}
V = \frac{\alpha}{(\alpha-1) \beta} \left( 1 + \frac{(2 \alpha + 2 + \ee) \ee}{(1 - 3 \ee) \alpha} \right)
+ \frac{1}{\alpha - 1} \left( 1 + \frac{2 \ee (2-\ee)}{(1-\ee)^2} \right).
\nonumber
\end{equation}
Since $\ee = \frac{1}{10}$,
\begin{equation}
\frac{U}{V} - \frac{3 \ee}{2 \beta} = \frac{\frac{2}{945} \alpha \beta + \frac{86}{81} \beta^2 + \frac{27}{70} \alpha (\beta - 1)
+ \frac{281}{270} (\alpha-1) \beta + \frac{9}{100}}
{2 (\alpha - 1) \beta^2 V},
\nonumber
\end{equation}
which, by inspection, is positive, verifying \eqref{eq:Qspart3}.
\end{proof}

\begin{proof}[Proof of Proposition \ref{pr:fQsInOmega}]
Choose any $s = 1,2,\ldots,2^{n-1}$.
The point $f \left( Q^{(s)} \right)$ lies to the right of $K$
because $K$ is a subset of the left half-space (Proposition \ref{pr:fPsInOmega})
while $f \left( Q^{(s)} \right)$ lies in the right half-space (Lemma \ref{le:gQs1}).
By \eqref{eq:gQsinOmegaProof1}, \eqref{eq:Qspart1}, and \eqref{eq:Qspart2},
\begin{equation}
C_1 - f \left( Q^{(s)} \right)_1 > C_1 \left( \frac{2 \beta \phi}{3} - \frac{\beta \phi}{3} \right)
= \frac{\beta \phi C_1}{3}
\label{eq:gQsinOmegaProof2}
\end{equation}
is positive, thus $f \left( Q^{(s)} \right)$ lies to the left of $C$.
To show $f \left( Q^{(s)} \right) \in \Omega$,
let $t = 1,2,\ldots,2^{n-1}$ and $W^{(s,t)}$
be the point on the line through $P^{(t)}$ and $C$
whose first component is the same as the first component of $f \left( Q^{(s)} \right)$.
It remains for us to show
\begin{equation}
\left| f \left( Q^{(s)} \right)_i \right| < \left| W^{(s,t)}_i \right|,
\label{eq:gQsinOmegaProof20}
\end{equation}
for all $t = 1,2,\ldots,2^{n-1}$ and all $i = 2,\ldots,n$.

By \eqref{eq:Q},
\begin{equation}
f \left( Q^{(s)} \right) = \begin{bmatrix}
\frac{C_1 P^{(s)}_2}{C_1 - P^{(s)}_1} + 1 \\
\frac{C_1 P^{(s)}_3}{C_1 - P^{(s)}_1} \\
\vdots \\
\frac{C_1 P^{(s)}_n}{C_1 - P^{(s)}_1} \\
0
\end{bmatrix}.
\label{eq:fQ}
\end{equation}
In particular, $f \left( Q^{(s)} \right)_n = 0$,
so \eqref{eq:gQsinOmegaProof20} is true with $i = n$
because $P^{(t)}_n \ne 0$ and hence $W^{(s,t)}_n \ne 0$.

Choose any $t = 1,2,\ldots,2^{n-1}$ and $i = 2,\ldots,n-1$.
The point $W^{(s,t)}$ is defined by $W^{(s,t)} = \theta P^{(t)} + (1-\theta) C$,
where $\theta$ is such that $W^{(s,t)}_1 = f \left( Q^{(s)} \right)_1$,
so $\theta = \frac{C_1 - f \left( Q^{(s)} \right)_1}{C_1 - P^{(t)}_1}$.
Also $W^{(s,t)}_i = \theta P^{(t)}_i$, because $C_i = 0$, so by also using \eqref{eq:fQ} we obtain
\begin{equation}
\left| W^{(s,t)}_i \right| - \left| f \left( Q^{(s)} \right)_i \right|
= \frac{C_1 - f \left( Q^{(s)} \right)_1}{C_1 - P^{(t)}_1} \left| P^{(t)}_i \right|
- \frac{C_1}{C_1 - P^{(s)}_1} \left| P^{(s)}_{i+1} \right|.
\nonumber
\end{equation}
Then by \eqref{eq:Mi} and \eqref{eq:absValuePsi},
\begin{align}
\left| W^{(s,t)}_i \right| - \left| f \left( Q^{(s)} \right)_i \right| &= \frac{2 \alpha}{\alpha - 1} \left(
\frac{C_1 - f \left( Q^{(s)} \right)_1}{C_1 - P^{(t)}_1} \begin{pmatrix} n-1 \\ i-1 \end{pmatrix} r^{i-1}
- \frac{C_1}{C_1 - P^{(s)}_1} \begin{pmatrix} n-1 \\ i \end{pmatrix} r^i \right) \nonumber \\
&> \frac{2 \alpha}{\alpha - 1} \begin{pmatrix} n-1 \\ i-1 \end{pmatrix} \left(
\frac{C_1 - f \left( Q^{(s)} \right)_1}{C_1 - P^{(t)}_1}
- \frac{C_1}{2 \big( C_1 - P^{(s)}_1 \big)} \,r(n-1) \right) r^{i-1},
\end{align}
where we have also used
\begin{equation}
\begin{pmatrix} n-1 \\ i \end{pmatrix}
= \frac{n-i}{i} \begin{pmatrix} n-1 \\ i-1 \end{pmatrix}
< \frac{n-1}{2} \begin{pmatrix} n-1 \\ i-1 \end{pmatrix},
\nonumber
\end{equation}
because $i \ge 2$.
Then by \eqref{eq:Qspart3} and \eqref{eq:gQsinOmegaProof2}
\begin{equation}
\left| W^{(s,t)}_i \right| - \left| f \left( Q^{(s)} \right)_i \right|
> \frac{2 \alpha}{(\alpha - 1) \big( C_1 - P^{(s)}_1 \big)} \begin{pmatrix} n-1 \\ i-1 \end{pmatrix} \left(
\frac{3 \ee}{2 \beta} \times \frac{\beta \phi C_1}{3}
- \frac{C_1}{2} \,r(n-1) \right) r^{i-1},
\label{eq:gQsinOmegaProof40}
\end{equation}
where $\ee = \frac{1}{10}$.
But \eqref{eq:rC} is $r(n-1) \le \ee \phi$, 
so the right-hand side of \eqref{eq:gQsinOmegaProof40} is positive,
verifying \eqref{eq:gQsinOmegaProof20}.
\end{proof}

\subsection{Final arguments}
\label{sub:OmegaFinal}

\begin{proof}[Proof of Proposition \ref{pr:forwardInvariantRegion}]
Let $\Omega_L$ and $\Omega_R$ be the parts of $\Omega$ in $x_1 \le 0$ and $x_1 \ge 0$ respectively.
These are polytopes; the vertices of $\Omega_L$ are $P^{(s)}$ and $Q^{(s)}$ for $s = 1,2,\ldots,2^{n-1}$,
while the vertices of $\Omega_R$ are $C$ and $Q^{(s)}$ for $s = 1,2,\ldots,2^{n-1}$.
The set $\Omega_L$ maps under $f_L$, which is affine,
so $f(\Omega_L)$ is a polytope with vertices given by the images of each $P^{(s)}$ and $Q^{(s)}$ under $f_L$.
Similarly $\Omega_R$ maps under $f_R$,
so $f(\Omega_R)$ is a polytope with vertices given by the images of $C$ and each $Q^{(s)}$ under $f_R$.
By Propositions \ref{pr:fPsInOmega}, \ref{pr:fCInOmega}, and \ref{pr:fQsInOmega},
the images of $C$ and each $P^{(s)}$ and $Q^{(s)}$ belong to $\Omega$.
Thus $f(\Omega_L)$ and $f(\Omega_R)$ are subsets of $\Omega$ because $\Omega$ is convex.
Finally, $f(\Omega) = f(\Omega_L) \cup f(\Omega_R)$, thus $\Omega$ is forward invariant under $f$.

We now perturb $\Omega$ to form a new polytope $\Omega_\nu$
by decreasing the first component of $C$ by a small amount $\nu > 0$,
increasing the first components of each $P^{(s)}$ by $\nu^2$,
and defining $\Omega_\nu$ to be the convex hull of the perturbed points.
It remains for us to show that $\Omega_\nu$ is a trapping region for $f$ if $\nu > 0$ is sufficiently small.
Refer to \cite{GhMc23} for calculations verifying this in a more explicit manner
for a similar construction in the two-dimensional setting.

Each $Q^{(s)}$ maps to the interior of $\Omega$ (Proposition \ref{pr:fQsInOmega}),
so the analogous intersection points of $\Omega_\nu$ with $\Sigma$
map to the interior of $\Omega_\nu$ for sufficiently small $\nu > 0$.
Let $K_\nu$ denote the left boundary of $\Omega_\nu$.
This face is parallel to $K$ and is an order-$\nu^2$ perturbation of $K$.
The point $C$ maps to $K$ and inside all other boundaries of $\Omega$ (Proposition \ref{pr:fCInOmega}).
Consequently the corresponding vertex of $\Omega_\nu$ maps to the right of $K$
and thus to the right of $K_\nu$, because this boundary has been perturbed by a far smaller amount,
so to the interior of $\Omega_\nu$ if $\nu > 0$ is sufficiently small.
Finally, each $P^{(s)}$ maps to $K$ and inside all other boundaries of $\Omega$ (Proposition \ref{pr:fPsInOmega}).
Thus the corresponding vertices of $\Omega_\nu$ map to the right of $K_\nu$, due to the saddle nature of $Y$,
and hence to the interior of $\Omega_\nu$ if $\nu > 0$ is sufficiently small.
Since $\Omega_\nu$ is convex and $f_L$ and $f_R$ are affine,
the entire set $\Omega_\nu$ maps under $f$ to its interior.
\end{proof}

\section{Calculations for the cone $\Psi$}
\label{sec:cone}

We now perform calculations for the cone $\Psi$.
In \S\ref{sub:contractingInvariant} we prove contracting-invariance;
in \S\ref{sub:expanding} we prove expansion.
This verifies Proposition \ref{pr:cone},
then in \S\ref{sub:PsiFinal} we prove Theorem \ref{th:main}.

\subsection{Contracting-invariance}
\label{sub:contractingInvariant}

\begin{proposition}
If Assumption \ref{as:main}, \eqref{eq:rA}, and \eqref{eq:rB} hold, and $r(n-1) \le \frac{1}{10}$,
then $\Psi$ is contracting-invariant for $\{ A_L, A_R \}$.
\label{pr:coneInvariantL}
\end{proposition}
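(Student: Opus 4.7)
The plan is to verify directly that $A_L v \in {\rm int}(\Psi) \cup \{\bO\}$ and analogously $A_R v \in {\rm int}(\Psi) \cup \{\bO\}$ for every $v \in \Psi$. By the definition \eqref{eq:Psi} of $\Psi$ and its symmetry under negation, it suffices to restrict attention to $v$ of the canonical form \eqref{eq:v} with $|m_i| \le N_i r^{i-1}$ for each $i = 1,\ldots,n-1$. Using the companion structure \eqref{eq:bcnfALARb},
\begin{equation*}
A_L v = \bigl( -a^L_1 + m_1,\; -a^L_2 + m_2,\; \ldots,\; -a^L_{n-1} + m_{n-1},\; -a^L_n \bigr)^{\sf T}.
\end{equation*}
To place $A_L v$ into ${\rm int}(\Psi)$, I normalise by the first component and must establish the strict inequalities
\begin{equation*}
\bigl| -a^L_{i+1} + m_{i+1} \bigr| < N_i r^{i-1} \bigl( -a^L_1 + m_1 \bigr), \quad i = 1, \ldots, n-2, \qquad \bigl| {-a^L_n} \bigr| < N_{n-1} r^{n-2} \bigl( -a^L_1 + m_1 \bigr).
\end{equation*}

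The first step is a clean positive lower bound on $-a^L_1 + m_1$. Corollary \ref{co:aZiBounds} gives $-a^L_1 \ge \alpha - r(n-1)$, while \eqref{eq:Ni} gives $|m_1| \le N_1 = \tfrac{\min[\alpha,\beta]-1}{2} \le \tfrac{\alpha-1}{2}$; combined with $r(n-1) \le \tfrac{1}{10}$ these yield
\begin{equation*}
-a^L_1 + m_1 \;\ge\; \tfrac{\alpha+1}{2} - r(n-1) \;>\; 0,
\end{equation*}
bounded away from zero uniformly. In particular, $(A_L v)_1$ has a fixed sign, so $A_L v$ can be written as $\gamma \tilde v$ with $\tilde v$ of the canonical form.

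The second step is to estimate each numerator using Corollary \ref{co:aZiBounds} together with $|m_{i+1}| \le N_{i+1} r^i$. The triangle inequality yields, for $i = 1,\ldots,n-2$,
\begin{equation*}
\bigl| -a^L_{i+1} + m_{i+1} \bigr| \le \binom{n-1}{i} \! \left( \alpha + \tfrac{(n-i-1)r}{i+1} \right) r^i + \tfrac{\min[\alpha,\beta]-1}{2} \binom{n-1}{i} r^i,
\end{equation*}
and similarly $|-a^L_n| \le \alpha r^{n-1}$ for the boundary case. Dividing by the lower bound on $-a^L_1 + m_1$ and invoking the identity $\binom{n-1}{i} = \tfrac{n-i}{i} \binom{n-1}{i-1}$ with $\tfrac{n-i}{i} \le \tfrac{n-1}{2}$, I can rewrite the ratio as $N_i r^{i-1}$ times a numerical constant. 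Hypothesis \eqref{eq:rA}, rearranged as $\alpha r(n-1) \le \tfrac{3}{7}(\alpha-1)$, is then exactly what is needed to force that constant to be strictly less than $1$, giving the required strict inequality. The argument for $A_R$ is structurally identical, with $\alpha$ replaced by $\beta$ throughout and \eqref{eq:rB} replacing \eqref{eq:rA}.

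The main obstacle is keeping the inequalities strict uniformly as $\min[\alpha,\beta] \to 1$. In that limit the admissible range $N_i r^{i-1}$ for the slopes collapses, so the contribution $\tfrac{\min[\alpha,\beta]-1}{2} \binom{n-1}{i} r^i$ to the numerator, inherited from the bound on $|m_{i+1}|$, must remain dominated by the right-hand side $N_i r^{i-1}(-a^L_1 + m_1)$. This forces the factor $\tfrac{1}{2}$ in \eqref{eq:Ni} and the coefficient $\tfrac{3}{7}$ in \eqref{eq:rA}--\eqref{eq:rB} to balance the competing terms precisely, and it is this balancing that makes the otherwise straightforward arithmetic delicate.
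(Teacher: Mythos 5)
Your overall strategy is the paper's: normalise $v$ so $v_1 = 1$, use the companion form to write the slopes of $A_Z v$ as ratios, bound numerator and denominator via Corollary \ref{co:aZiBounds} and \eqref{eq:Ni}, and close with \eqref{eq:rA}--\eqref{eq:rB}. However, two quantitative steps are wrong, and they do not repair each other. First, the inequality $\tfrac{n-i}{i} \le \tfrac{n-1}{2}$ that you invoke to convert $\binom{n-1}{i}$ into $\binom{n-1}{i-1}$ is false at $i=1$, where $\tfrac{n-i}{i} = n-1$; the correct uniform bound is $\binom{n-1}{i} < (n-1)\binom{n-1}{i-1}$ (this is \eqref{eq:ChoosyBound} in the paper). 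The bound $\tfrac{n-1}{2}$ is legitimate only for the coefficient estimate \eqref{eq:aLi6}, where the relevant index $i+1$ is at least $2$. Second, your denominator bound $-a^L_1 + m_1 \ge \tfrac{\alpha+1}{2} - r(n-1)$ replaces $|m_1| \le \tfrac{\min[\alpha,\beta]-1}{2}$ by the weaker $\tfrac{\alpha-1}{2}$. When $\beta < \alpha$ this loss is fatal: with the corrected binomial factor $n-1$, your final constant becomes
\begin{equation*}
\frac{6\left(2\alpha + \min[\alpha,\beta] - \tfrac{9}{10}\right)}{7 \min[\alpha,\beta]\left(\alpha + \tfrac{4}{5}\right)},
\end{equation*}
which exceeds $1$ for, e.g., $\alpha = 10$, $\beta = 1.1$ (it is about $1.46$ there), so the strict inequality you need does not follow. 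The paper avoids this by retaining $\tfrac{\min[\alpha,\beta]-1}{2}$ in the denominator, which yields the quantity $J$ of \eqref{eq:J} with denominator $2\alpha - \min[\alpha,\beta] + \tfrac{4}{5}$ rather than $\alpha + \tfrac{4}{5}$, and then verifying $\tfrac{3}{7} < \tfrac{\min[\alpha,\beta]}{2J}$ (Lemma \ref{le:boundForContractingInv}).

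Note that your two errors happen to cancel numerically (halving the binomial factor compensates for roughly halving the denominator), which is why your sketch appears to close; but each individual step as written is invalid, and fixing only the binomial factor breaks the argument. A related slip: you assert the $A_L$ case uses only \eqref{eq:rA}, but since the cone widths $N_i$ involve $\min[\alpha,\beta]$, the $A_L$ computation genuinely needs $r(n-1) < \tfrac{3}{7}\bigl(1 - \tfrac{1}{\min[\alpha,\beta]}\bigr)$, i.e.\ both \eqref{eq:rA} and \eqref{eq:rB}, exactly as the paper states. Finally, the boundary case $i = n-1$ (where the numerator is $|a^Z_n|$ alone) deserves its own short estimate rather than a passing remark.
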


\begin{proof}
Choose any $v \in \Psi$.
Our task is to show $A_Z v \in {\rm int}(\Psi) \cup \{ \bO \}$ for each $Z \in \{ L, R \}$.
By linearity it suffices to assume $v_1 = 1$.
Then we can write $v$ as \eqref{eq:v}, where
\begin{equation}
|m_i| \le N_i r^{i-1}, \text{~and~} N_i = \frac{\min[\alpha,\beta] - 1}{2} \begin{pmatrix} n-1 \\ i-1 \end{pmatrix},
\label{eq:NiAgain}
\end{equation}
for all $i = 1,\ldots,n-1$.
Multiplication by $A_Z$ gives
\begin{equation}
A_Z v = \begin{bmatrix} -a^Z_1 + m_1 \\ -a^Z_2 + m_2 \\ \vdots \\ -a^Z_{n-1} + m_{n-1} \\ -a^Z_n \end{bmatrix}.
\label{eq:coneInvariantLProof1}
\end{equation}
This is a scalar multiple of
\begin{equation}
\begin{bmatrix} 1 \\ m_1' \\ m_2' \\ \vdots \\ m_{n-1}' \end{bmatrix},
\nonumber
\end{equation}
where
\begin{align}
m_i' &= \frac{-a^Z_{i+1} + m_{i+1}}{-a^Z_1 + m_1}, \qquad \text{for all $i = 1,\ldots,n-2$}, \label{eq:miPrime} \\
m_{n-1}' &= \frac{-a^Z_n}{-a^Z_1 + m_1}. \label{eq:mnm1Prime}
\end{align}
It remains for us show $\left| m_i' \right| < N_i r^{i-1}$ for all $i = 1,\ldots,n-1$.

For the remainder of the proof we just treat the case $Z = L$
because the calculations required for $Z = R$ are the same except the roles of $\alpha$ and $\beta$ are reversed.
This is because $N_i$ and the combination of \eqref{eq:rA} and \eqref{eq:rB}
are unchanged when $\alpha$ and $\beta$ are interchanged,
and because the bounds on $a^R_i$ obtained from \eqref{eq:aR15}--\eqref{eq:aRn5}
are the same as those on $a^L_i$ obtained from \eqref{eq:aL15}--\eqref{eq:aLn5}
with $\alpha$ and $\beta$ interchanged.

We first weaken \eqref{eq:aLi5} slightly to simply the ensuing algebra.
With $i \ge 2$ we have $\frac{n-i}{i} < \frac{n-1}{2}$, so
\begin{equation}
\left| a^L_i \right| < \begin{pmatrix} n-1 \\ i-1 \end{pmatrix} \left( \alpha + \frac{r(n-1)}{2} \right) r^{i-1},
\qquad \text{for all $i = 2,\ldots,n-1$}.
\label{eq:aLi6}
\end{equation}
Now choose any $i = 1,\ldots,n-2$ (the case $i = n-1$ is done at the end).
Into \eqref{eq:miPrime} we substitute \eqref{eq:aL15}, \eqref{eq:NiAgain}, and \eqref{eq:aLi6} to produce
\begin{align}
\left| m_i' \right| <
\frac{\begin{pmatrix} n-1 \\ i \end{pmatrix} \left( \alpha + \frac{r(n-1)}{2} \right) r^i
+ \dfrac{\min[\alpha,\beta] - 1}{2} \begin{pmatrix} n-1 \\ i \end{pmatrix} r^i}
{\alpha - r(n-1) - \dfrac{\min[\alpha,\beta] - 1}{2}}.
\nonumber
\end{align}
By also substituting $r(n-1) \le \frac{1}{10}$ this becomes
\begin{align}
\left| m_i' \right| < J \begin{pmatrix} n-1 \\ i \end{pmatrix} r^i,
\label{eq:coneInvariantLProof21}
\end{align}
where
\begin{equation}
J = \frac{2 \alpha + \min[\alpha,\beta] - \frac{9}{10}}{2 \alpha - \min[\alpha,\beta] + \frac{4}{5}}.
\label{eq:J}
\end{equation}
In Appendix \ref{app:proofs} (Lemma \ref{le:boundForContractingInv}) we show 
\begin{equation}
\frac{3}{7} < \frac{\min[\alpha,\beta]}{2 J}.
\label{eq:boundForContractingInv}
\end{equation}
But $r(n-1) < \frac{3}{7} \left( 1 - \frac{1}{\min[\alpha,\beta]} \right)$, by \eqref{eq:rA} and \eqref{eq:rB},
so $r(n-1) < \frac{\min[\alpha,\beta] - 1}{2 J}$.
That is,
\begin{equation}
J < \frac{\min[\alpha,\beta] - 1}{2 r(n-1)},
\label{eq:rAB}
\end{equation}
and so \eqref{eq:coneInvariantLProof21} reduces to
\begin{equation}
\left| m_i' \right| < \frac{\min[\alpha,\beta] - 1}{2 r (n-1)} \begin{pmatrix} n-1 \\ i \end{pmatrix} r^i.
\label{eq:coneInvariantLProof31}
\end{equation}
Into \eqref{eq:coneInvariantLProof31} we insert
\begin{equation}
\begin{pmatrix} n-1 \\ i \end{pmatrix}
= \frac{n-i}{i} \begin{pmatrix} n-1 \\ i-1 \end{pmatrix}
< (n-1) \begin{pmatrix} n-1 \\ i-1 \end{pmatrix},
\label{eq:ChoosyBound}
\end{equation}
to obtain the desired bound
\begin{equation}
\left| m_i' \right| < \frac{\min[\alpha,\beta] - 1}{2} \begin{pmatrix} n-1 \\ i-1 \end{pmatrix} r^{i-1}
= N_i r^{i-1}.
\nonumber
\end{equation}

Finally we treat $i=n-1$.
Into \eqref{eq:mnm1Prime} we substitute \eqref{eq:aL15}, \eqref{eq:aLn5},
and $|m_1| \le \frac{\alpha-1}{2}$ to produce
\begin{equation}
\left| m_{n-1}' \right| < \frac{\alpha r^{n-1}}{\alpha - r(n-1) - \frac{\min[\alpha,\beta] - 1}{2}}
\le \frac{2 \alpha r^{n-1}}{2 \alpha - \min[\alpha,\beta] + \frac{4}{5}},
\nonumber
\end{equation}
using also $r(n-1) \le \frac{1}{10}$.
Then by \eqref{eq:rAB},
\begin{equation}
\left| m_{n-1}' \right|
< \frac{\min[\alpha,\beta] - 1}{2 r(n-1)} \times \frac{2 \alpha r^{n-1}}{2 \alpha + \min[\alpha,\beta] - \frac{9}{10}},
\nonumber
\end{equation}
which implies $\left| m_{n-1}' \right| < N_{n-1} r^{n-2}$, as required,
because $\frac{1}{(n-1)^2} \le 1$
and $\frac{2 \alpha}{2 \alpha + \min[\alpha,\beta] - \frac{9}{10}} < 1$.
\end{proof}

\subsection{Expansion}
\label{sub:expanding}

\begin{proposition}
If Assumption \ref{as:main}, \eqref{eq:rA}, \eqref{eq:rB}, and \eqref{eq:rC} hold,
then $\Psi$ is expanding for $\{ A_L, A_R \}$.
\label{pr:coneExpandingL}
\end{proposition}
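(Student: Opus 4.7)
The plan is to bound $\|A_Z v\|$ from below using only the first component of $A_Z v$, and to bound $\|v\|$ from above using the coordinate constraints that define $\Psi$. By the homogeneity of $\Psi$ and linearity of $A_Z$, it suffices to restrict to $v = (1, m_1, \ldots, m_{n-1})^T \in \Psi$, so that $|m_i| \le N_i r^{i-1}$ for each $i = 1, \ldots, n-1$, and to exhibit an explicit constant $c > 1$ (depending only on $\alpha$, $\beta$, $r$, $n$) with $\|A_Z v\| \ge c \|v\|$ for both $Z \in \{L, R\}$.

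Writing $\mu = \min[\alpha, \beta]$, the first step is the lower bound $|{-a^Z_1 + m_1}| \ge (\mu+1)/2 - r(n-1)$. For $Z = L$ this follows from \eqref{eq:aL15}, which gives $-a^L_1 \ge \alpha - r(n-1)$, combined with $m_1 \ge -N_1 = -(\mu-1)/2$ and $\alpha \ge \mu$; the case $Z = R$ is analogous via \eqref{eq:aR15}. This quantity is positive because \eqref{eq:rC} forces $r(n-1) \le 1/10$, so $\|A_Z v\|^2 \ge ((\mu+1)/2 - r(n-1))^2$. For the upper bound on $\|v\|^2 = 1 + \sum_{i=1}^{n-1} m_i^2$, I would use the elementary estimate $\binom{n-1}{k} r^k \le [r(n-1)]^k / k! \le [r(n-1)]^k$ to obtain $\sum_{k=0}^{n-2} \binom{n-1}{k}^2 r^{2k} \le \sum_{k=0}^{\infty} [r(n-1)]^{2k} = 1/(1 - r(n-1)^2) \le 100/99$, and hence $\|v\|^2 \le 1 + 25(\mu-1)^2/99$.

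The main obstacle is the final algebraic check that the ratio of these two bounds exceeds $1$ throughout the parameter region specified in Theorem \ref{th:main}. A crucial preliminary observation is that \eqref{eq:rC} forces $1/\alpha + 1/\beta > 1$, which in turn forces $\mu < 2$ (since $2/\mu \ge 1/\alpha + 1/\beta > 1$). Expanding $((\mu+1)/2 - r(n-1))^2 > 1 + 25(\mu-1)^2/99$ and discarding the favourable $+r(n-1)^2$ term reduces it, after multiplying through by $396$ and factoring, to the sufficient condition $r(n-1) < (\mu-1)(397 - \mu)/[396(\mu+1)]$. Comparing this against \eqref{eq:rA}/\eqref{eq:rB} in the form $r(n-1) < 3(\mu-1)/(7\mu)$ then reduces the whole question to the quadratic inequality $7\mu^2 - 1591\mu + 1188 \le 0$, whose roots lie near $0.757$ and $226.5$, so the inequality holds throughout the relevant range $\mu \in (1, 2)$. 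The expansion factor may then be taken as $c = ((\mu+1)/2 - r(n-1))/\sqrt{1 + 25(\mu-1)^2/99}$, which degenerates to $1$ precisely as $\mu \to 1$ or as $r(n-1)$ approaches the boundary of \eqref{eq:rA}/\eqref{eq:rB}, consistently with the optimality discussion in Section~\ref{sub:main}.
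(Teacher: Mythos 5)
Your proposal is correct, and it takes a genuinely different route from the paper. The paper works with the difference of squares $H = \| A_Z v \|^2 - \| v \|^2$ and exploits an exact cancellation: because $A_Z$ is a companion-type matrix, the terms $\sum_i m_i^2$ appear identically in $\| A_Z v \|^2$ and $\| v \|^2$, so $H = \sum_i (a^Z_i)^2 - 2 \sum_i a^Z_i m_i - 1$ and one only has to control the cross terms (via the appendix bound $G < 2r^2(n-1)^2$ and a separate case analysis on $\alpha \lessgtr \tfrac54$); the expansion constant $c$ is then extracted non-constructively by compactness of $\Psi_1$. You instead discard all components of $A_Z v$ except the first, pay the full price of $\sum_i m_i^2$ in the upper bound on $\| v \|^2$, and obtain an explicit $c$. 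I checked your algebra: the lower bound $|{-a^Z_1 + m_1}| \ge \tfrac{\mu+1}{2} - r(n-1)$ with $\mu = \min[\alpha,\beta]$ is right for both $Z$ (for $Z=R$ the first component is large and negative), the geometric-series bound $\| v \|^2 \le 1 + \tfrac{25(\mu-1)^2}{99}$ is valid, and the reduction to $7\mu^2 - 1591\mu + 1188 \le 0$ is correct and holds on $\mu \in (1,2)$; at leading order both your inequality and the paper's reduce to the same margin ($\mu - 1 > 0$, with the coefficient $\tfrac37 < \tfrac12$ closing the gap as $\mu \to 1$). Two small remarks: your observation that \eqref{eq:rC} forces $\mu < 2$ is indeed needed to confine $\mu$ to the interval where the quadratic is non-positive (the paper's cancellation-based route never needs this), and your explicit $c$ is a concrete bonus the paper does not provide.
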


\begin{lemma}
If Assumption \ref{as:main} and \eqref{eq:rA} hold and $r(n-1) \le \frac{1}{10 \alpha}$, then
\begin{equation}
r(n-1) \le \frac{\alpha - 1}{3 \alpha - 1 + \tfrac{1}{10} (\alpha - 1) \left( 2 \alpha + \tfrac{1}{10} \right)}.
\label{eq:HProofEnd}
\end{equation}
\label{le:HProofEnd}
\end{lemma}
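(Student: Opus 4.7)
The plan is to clear the denominator in \eqref{eq:HProofEnd} so that the target reads
\begin{equation*}
r(n-1)(3\alpha-1) + \tfrac{1}{10}\,r(n-1)(\alpha-1)\!\left(2\alpha+\tfrac{1}{10}\right) \le \alpha - 1,
\end{equation*}
and then apply the two available upper bounds on $r(n-1)$: hypothesis \eqref{eq:rA} gives $r(n-1) < \tfrac{3(\alpha-1)}{7\alpha}$, while the standing hypothesis gives $r(n-1) \le \tfrac{1}{10\alpha}$. A quick calculation shows these two bounds coincide at $\alpha = \tfrac{37}{30}$, both equalling $\tfrac{3}{37}$; I would split the proof at this crossover, since \eqref{eq:rA} is the tighter bound for $\alpha \in (1,\tfrac{37}{30}]$ and $r(n-1) \le \tfrac{1}{10\alpha}$ is tighter for $\alpha \ge \tfrac{37}{30}$.

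For $\alpha \in (1,\tfrac{37}{30}]$, let $D(\alpha)$ denote the denominator in \eqref{eq:HProofEnd}. It suffices to show $\tfrac{3(\alpha-1)}{7\alpha} \le \tfrac{\alpha-1}{D(\alpha)}$, i.e.\ $D(\alpha) \le \tfrac{7\alpha}{3}$. The range-specific estimates $\alpha - 1 \le \tfrac{7}{30}$ and $2\alpha + \tfrac{1}{10} \le \tfrac{77}{30}$ reduce $D(\alpha)$ to a linear function of $\alpha$, and the remaining linear inequality simplifies to $\alpha \le \tfrac{8461}{6000}$, which dominates $\tfrac{37}{30}$.

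For $\alpha \ge \tfrac{37}{30}$, it suffices to show $\tfrac{1}{10\alpha} \le \tfrac{\alpha-1}{D(\alpha)}$, i.e.\ $D(\alpha) \le 10\alpha(\alpha-1)$. Expanding $D(\alpha)$ reduces this to the quadratic inequality
\begin{equation*}
980\alpha^2 - 1281\alpha + 101 \ge 0.
\end{equation*}
The derivative $1960\alpha - 1281$ is positive throughout $[\tfrac{37}{30},\infty)$, so monotonicity reduces the claim to checking the quadratic at $\alpha = \tfrac{37}{30}$, where a direct calculation yields $\tfrac{10610}{900} > 0$.

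The main obstacle is recognising that the case split is essential: neither hypothesis alone is strong enough to cover the entire range $\alpha > 1$, since \eqref{eq:rA} fails once $D(\alpha)$ begins to grow quadratically in $\alpha$, while $\tfrac{1}{10\alpha}$ is too weak near $\alpha = 1$. The ranges where each is individually sufficient happen to overlap around the crossover $\alpha = \tfrac{37}{30}$, which is what makes splitting there clean.
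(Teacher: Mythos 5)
Your proposal is correct and follows essentially the same route as the paper: the paper also splits on $\alpha$ (at $\tfrac{5}{4}$ rather than your crossover $\tfrac{37}{30}$), applies \eqref{eq:rA} below the split and $r(n-1)\le\tfrac{1}{10\alpha}$ above it, and in the latter case reduces to exactly your quadratic $980\alpha^2-1281\alpha+101\ge 0$. The only cosmetic difference is in the first case, where the paper verifies the exact quadratic inequality $303-143\alpha-60\alpha^2>0$ directly instead of your linear over-estimate of the denominator.
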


\begin{proof}
If $\alpha \le \frac{5}{4}$ then
$303 - 143 \alpha - 60 \alpha^2 > 0$.
This is equivalent to
\begin{equation}
\frac{3}{7} \left( 1 - \frac{1}{\alpha} \right)
< \frac{\alpha - 1}{3 \alpha - 1 + \tfrac{1}{10} (\alpha - 1) \left( 2 \alpha + \tfrac{1}{10} \right)},
\nonumber
\end{equation}
so \eqref{eq:HProofEnd} is a consequence of \eqref{eq:rA}.

If instead $\alpha > \frac{5}{4}$ then
$101 - 1281 \alpha + 980 \alpha^2 > 0$.
This is equivalent to
\begin{equation}
\frac{1}{10 \alpha} < \frac{\alpha - 1}{3 \alpha - 1 + \tfrac{1}{10} (\alpha - 1) \left( 2 \alpha + \tfrac{1}{10} \right)}
\label{eq:}
\end{equation}
so again \eqref{eq:HProofEnd} holds.
\end{proof}

\begin{proof}[Proof of Proposition \ref{pr:coneExpandingL}]
Choose any $v \in \Psi$ with $v_1 = 1$ and let
\begin{equation}
H = \| A_Z v \|^2 - \| v \|^2,
\nonumber
\end{equation}
where $Z \in \{ L, R \}$.
It remains to show $H > 0$.
This is because for each $Z \in \{ L, R \}$ the function $H$ is continuous over the compact set
$\Psi_1 = \left\{ v \in \Psi \,\middle|\, v_1 = 1 \right\}$,
so has a minimum value $h_Z > 0$.
That is, $\| A_Z v \|^2 - \| v \|^2 \ge h_Z$,
which rearranges to $\| A_Z v \| \ge \sqrt{\frac{h_Z}{\| v \|^2} + 1} \,\| v \|$.
So $\Psi$ is expanding for $\{ A_L, A_R \}$ with
expansion factor $c = \sqrt{\frac{\min[h_L,h_R]}{\xi^2} + 1}$,
where $\xi = \max_{v \in \Psi_1} \| v \|$.

Write $v$ in the form \eqref{eq:v} and observe
\begin{align}
\| v \|^2 &= 1 + \sum_{i=1}^{n-1} m_i^2 \,, \nonumber \\
\| A_Z v \|^2 &= \sum_{i=1}^n \left( a^Z_i \right)^2 - 2 \sum_{i=1}^{n-1} a^Z_i m_i + \sum_{i=1}^{n-1} m_i^2 \,. \nonumber
\end{align}
Thus
\begin{equation}
H = \sum_{i=1}^n \left( a^Z_i \right)^2 - 2 \sum_{i=1}^{n-1} a^Z_i m_i - 1,
\label{eq:H1}
\end{equation}
To the first series in \eqref{eq:H1} we retain only the $i=1$ term,
while to the second series we bring the $i=1$ term out the front:
\begin{equation}
H \ge \left( a^Z_1 \right)^2 - 2 |a^Z_1| |m_1| - 2 \sum_{i=2}^{n-1} |a^Z_i| |m_i| - 1.
\label{eq:H2}
\end{equation}

For the remainder of the proof we just treat $Z = L$ as the calculations for $Z = R$
are identical except with $\beta$ instead of $\alpha$.
Into \eqref{eq:H2} with $Z = L$ we substitute \eqref{eq:aL15}, \eqref{eq:NiAgain}, \eqref{eq:aLi6},
and $\min[\alpha,\beta] \le \alpha$, to produce
\begin{align}
H &\ge \left( \alpha - r(n-1) \right)^2 - 2 \left( \alpha + r(n-1) \right) \times \frac{\alpha - 1}{2} \nonumber \\
&\quad- 2 \sum_{i=2}^{n-1} \begin{pmatrix} n-1 \\ i-1 \end{pmatrix} \left( \alpha + \frac{r(n-1)}{2} \right) r^{i-1} \times
\frac{\alpha - 1}{2} \begin{pmatrix} n-1 \\ i-1 \end{pmatrix} r^{i-1} - 1 \nonumber \\
&= \alpha - 1 - (3 \alpha - 1) r(n-1) + r^2 (n-1)^2 - (\alpha - 1) \left( \alpha + \frac{r(n-1)}{2} \right) G,
\label{eq:coneExpandingLProof10}
\end{align}
where
\begin{equation}
G = \sum_{i=2}^{n-1} \left( \begin{pmatrix} n-1 \\ i-1 \end{pmatrix} r^{i-1} \right)^2.
\label{eq:G}
\end{equation}
Next we drop the positive $r^2$-term in \eqref{eq:coneExpandingLProof10}
and use $r(n-1) \le \frac{1}{10}$, by \eqref{eq:rC}, to obtain
\begin{equation}
H > \alpha - 1 - (3 \alpha - 1) r(n-1) - (\alpha - 1) \left( \alpha + \tfrac{1}{20} \right) G.
\nonumber
\end{equation}
In Appendix \ref{app:proofs} (Lemma \ref{le:G}) we show $G < 2 r^2 (n-1)^2$.
So by using $r(n-1) \le \frac{1}{10}$ once more we obtain
\begin{equation}
H > \alpha - 1 - \left( 3 \alpha - 1 + \tfrac{1}{10} (\alpha - 1) \left( 2 \alpha + \tfrac{1}{10} \right) \right) r(n-1).
\nonumber
\end{equation}
Thus $H > 0$ by Lemma \ref{le:HProofEnd}, where
the condition $r(n-1) \le \frac{1}{10 \alpha}$ is a consequence of \eqref{eq:rC}.
\end{proof}

\subsection{Final arguments}
\label{sub:PsiFinal}

\begin{proof}[Proof of Proposition \ref{pr:cone}]
The result follows immediately from Propositions \ref{pr:coneInvariantL} and \ref{pr:coneExpandingL}.
\end{proof}

\begin{proof}[Proof of Theorem \ref{th:main}]
By Proposition \ref{pr:forwardInvariantRegion} $f$ has a trapping region $\Omega_\nu$.
Thus $f$ has a topological attractor $\Lambda \subset \Omega_\nu$.
By Proposition \ref{pr:cone} there exists a cone $\Psi \subset \mathbb{R}^n$,
that contains more than just the zero vector, and an expansion factor $c > 1$
such that $A_Z v \in \Psi$ and $\| A_Z v \| \ge c \| v \|$ for all $v \in \Psi$ and $Z \in \{ L, R \}$.

Now choose any $x \in \Lambda$ whose forward orbit under $f$ does not intersect $\Sigma$.
Let $v \in \Psi \setminus \{ \bO \}$; to complete the proof we show
\begin{equation}
\limsup_{n \to \infty} \frac{1}{n} \ln \left( \left\| \rD f^n(x) v \right\| \right) > 0.
\label{eq:mainProof1}
\end{equation}
Notice
\begin{equation}
\rD f^n(x) = A_{\chi_{n-1}} \cdots A_{\chi_1} A_{\chi_0} v,
\label{eq:mainProof2}
\end{equation}
where $\chi_i = L$ if $f^i(x)_1 < 0$ and $\chi_i = R$ if $f^i(x)_1 > 0$.
Since the cone is invariant,
each multiplication in \eqref{eq:mainProof2} increases the norm of the vector by a factor of at least $c$.
So $\left\| \rD f^n(x) v \right\| \ge c^n \| v \|$ for all $n \ge 1$,
hence $\frac{1}{n} \ln \left( \left\| \rD f^n(x) v \right\| \right) \ge \ln(c) + \frac{\| v \|}{n}$,
which verifies \eqref{eq:mainProof1}.
\end{proof}

\section{Border-collision bifurcations}
\label{sec:bcb}

A border-collision bifurcation occurs when a fixed point collides
with a switching manifold for a piecewise-smooth map.
In applications such bifurcations often arise as grazing bifurcations
of piecewise-smooth ODE systems formulated in the context of a Poincar\'e map \cite{DiBu08}.

Here we consider border-collision bifurcations for which the map is continuous
and the pieces of the map on either side of the switching manifold
can be smoothly extended through the switching manifold.
We show that if the piecewise-linear approximation to the map about the bifurcation
is equivalent (under a coordinate transformation) to the BCNF satisfying the
conditions of Theorem \ref{th:main},
then the border-collision bifurcation creates a robust chaotic attractor, locally.
This is because the terms omitted to form the piecewise-linear approximation
manifest as small nonlinear perturbations to the pieces of the BCNF,
and trapping regions and contracting-invariant expanding cones are robust to such perturbations.
A similar result was obtained in \cite{SiGl24}
for the two-dimensional setting.

Since we concern ourselves only with the local dynamics associated with the bifurcation,
any other pieces of the map can be ignored.
Hence it suffices to consider two-piece maps
\begin{equation}
g(y;\nu) = \begin{cases}
g_L(y;\nu), & h(y) \le 0, \\
g_R(y;\nu), & h(y) \ge 0,
\end{cases}
\label{eq:generalMap}
\end{equation}
where $y \in \mathbb{R}^n$ is the state variable and $\nu \in \mathbb{R}$ is a parameter.
We assume $g_L$, $g_R$, and $h$ are $C^1$,
and $g_L(y;\nu) = g_R(y;\nu)$ at all points for which $h(y) = 0$.

Now suppose $g$ has a border-collision bifurcation at $y = \bO$ when $\nu = 0$.
That is, $g(\bO;0) = \bO$ and $h(\bO) = 0$.
Let
\begin{equation}
\sigma = \nabla h(\bO)
\label{eq:sigma}
\end{equation}
be the gradient vector of $h$ at $y = \bO$.
If $\sigma \ne \bO$ then $h(y) = 0$ locally defines a smooth switching manifold
by the regular value theorem \cite{Hi76}.
Also let
\begin{align}
\hat{A}_L &= \rD g_L(\bO;0), &
\hat{A}_R &= \rD g_R(\bO;0), &
\hat{b} &= \frac{\partial g_L}{\partial \nu}(\bO;0) = \frac{\partial g_R}{\partial \nu}(\bO;0),
\end{align}
denote the derivatives of $g_L$ and $g_R$ at the bifurcation.
For each $Z \in \{ L, R \}$,
if $1$ is not an eigenvalue of $\hat{A}_Z$ then $g_Z$ has a fixed point
\begin{equation}
y^Z = \big( I - \hat{A}_Z \big)^{-1} \hat{b} \nu + \co(\nu),
\label{eq:yZ}
\end{equation}
where $\co(\nu)$ denotes terms that vanish faster than $\nu$.
In what follows we write $B_t$ for the closed ball of radius $t > 0$ centred at $\bO$:
\begin{equation}
B_t = \left\{ y \in \mathbb{R}^n \,\middle|\, \| y \| \le t \right\}.
\label{eq:ball}
\end{equation}

\begin{theorem}
For a continuous piecewise-$C^1$ map \eqref{eq:generalMap} with $g(\bO;0) = 0$ and $h(\bO) = 0$, suppose
\begin{enumerate}
\renewcommand{\labelenumi}{\arabic{enumi})}
\item
\label{it:1}
$\hat{A}_L$ does not have an eigenvector $v \in \mathbb{C}^n$ with $\sigma^{\sf T} v = 0$,
\item
\label{it:2}
$\sigma^{\sf T} \big( I - \hat{A}_R \big)^{-1} \hat{b} > 0$,
\item
\label{it:3}
$\hat{A}_L$ and $\hat{A}_R$ satisfy Assumption \ref{as:main} and \eqref{eq:rA}--\eqref{eq:rC}.
\end{enumerate}
Then there exists $\delta > 0$ and $s > 0$ such that for all $0 < \nu < \delta$
the map $g$ has an attractor with a positive Lyapunov exponent in $B_{\delta s}$.
\label{th:hots}
\end{theorem}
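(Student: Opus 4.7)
The plan is to reduce the problem to a small $C^1$-perturbation of the BCNF, treated on a fixed compact set, by performing a linear change of coordinates and a $\nu$-dependent blow-up. Condition (1) is the standard observability condition that allows $(\hat A_L,\hat A_R,\hat b)$ together with the covector $\sigma$ to be transformed by an affine change of coordinates into the companion form \eqref{eq:bcnfALARb} of the BCNF with switching covector proportional to $e_1^{\sf T}$. Continuity of $g$ across $h(y)=0$ forces $\hat A_L$ and $\hat A_R$ to agree on $\ker\sigma^{\sf T}$ and to share the same $\hat b$, so the transformation produces exactly the BCNF structure with a common $b=(1,0,\ldots,0)^{\sf T}$. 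Condition (2) then says that the fixed point of the right piece has positive first coordinate, as required in \S\ref{sub:one}. In these coordinates write $g(y;\nu)=f(y;\nu)+R(y;\nu)$, where $f$ is the affine piecewise-linear part and $R=\co(\|y\|)+\co(\nu)$ collects the higher-order terms; the switching manifold becomes $x_1+\co(\|y\|)=0$.

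Next I would rescale by $y=\nu x$, which sends $X$ and $Y$ to order-one locations independent of $\nu$ and rescales the nonlinear remainder: $g(\nu x;\nu)/\nu = F(x) + \nu\tilde R(x;\nu)$, where $F$ is the BCNF instance associated with $(\hat A_L,\hat A_R)$ and $\tilde R$ is $C^1$-bounded on any fixed compact set uniformly in $\nu\in[0,\delta]$. Similarly the switching manifold in the $x$-coordinates becomes $x_1+\nu\tilde h(x;\nu)=0$, an $\co(\nu)$ perturbation of $\Sigma$. Thus on the compact domain of interest the rescaled map is a $C^1$-small perturbation of $F$, to which the full machinery of \S\ref{sec:forwardInvariantRegion}--\ref{sec:cone} applies.

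At this point I would invoke the robustness built into those constructions. Proposition \ref{pr:forwardInvariantRegion} produces a trapping region $\Omega_\nu$ with $F(\Omega_\nu)\subset{\rm int}(\Omega_\nu)$; since this is a strict inclusion of compact sets, it is preserved under sufficiently small $C^0$-perturbations, so the rescaled map has a trapping region and hence a topological attractor $\Lambda$. Proposition \ref{pr:cone} gives a cone $\Psi$ that is \emph{contracting}-invariant for $\{A_L,A_R\}$ (i.e.\ $A_Z v\in{\rm int}(\Psi)\cup\{\bO\}$) and expanding with factor $c>1$; because both conditions involve strict inequalities on the compact set $\Psi\cap\{v_1=1\}$, they persist as (ordinary) invariance and expansion by some $c'>1$ for the Jacobians of the perturbed map on $\Omega_\nu$. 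The Lyapunov-exponent argument from \S\ref{sub:PsiFinal} then applies verbatim to any orbit in $\Lambda$ that avoids the perturbed switching manifold: the product of Jacobians along the orbit keeps vectors in $\Psi$ and multiplies their norms by at least $c'$ at each step. Undoing the blow-up, $\Lambda$ corresponds to an attractor of $g$ contained in a ball of radius $\nu\,\mathrm{diam}(\Omega_\nu)\le \delta s$ for a suitable $s>0$.

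The main obstacle is choosing $\delta>0$ small enough that three compatibility requirements hold simultaneously and uniformly in $\nu\in(0,\delta)$: the $C^0$-distance between the rescaled map and $F$ must be less than the gap in $F(\Omega_\nu)\subsetneq{\rm int}(\Omega_\nu)$; the $C^1$-distance must be less than the margin built into the contracting-invariance and expansion of $\Psi$; and the perturbed switching manifold must still meet each vertex face of $\Omega_\nu$ transversally so the piecewise structure carries over. Each of these is an open condition preserved under $C^1$-small perturbation and can be achieved by taking $\delta$ small, since the rescaled remainder is $\co(1)$ in the $C^1$-norm on the fixed compact set $\Omega_\nu$; consequently the argument goes through with quantitative bookkeeping but no new ideas beyond those already deployed in the proof of Theorem \ref{th:main}.
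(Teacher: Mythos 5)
Your proposal is correct and follows essentially the same route as the paper: affine reduction to the BCNF via the observability condition, blow-up by the bifurcation parameter, and persistence of the trapping region and of the contracting-invariant expanding cone under the resulting $C^1$-small perturbation, followed by the same Lyapunov-exponent argument. The only minor deviations are that the paper first flattens the switching manifold with a near-identity transformation (so no transversality bookkeeping is needed), and that for a merely $C^1$ map the rescaled remainder is only $\co(1)$ rather than $O(\nu)$ in the $C^1$-norm --- but smallness is all your argument actually uses, so this does not affect validity.
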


Condition (1) ensures that the piecewise-linear approximation to $g$ is `observable'
and can be converted to the $n$-dimensional BCNF by an affine change of coordinates \cite{Si16}.
Condition (2) ensures $\nu$ unfolds the border-collision bifurcation in a generic fashion
and that $\nu > 0$ corresponds to \eqref{eq:bcnf} with \eqref{eq:bcnfALARb}.
Condition (3) provides the eigenvalue assumptions of Theorem \ref{th:main}.

To illustrate Theorem \ref{th:hots} consider the map
\begin{equation}
y \mapsto
\begin{bmatrix}
0.03 y_3 \\
0.03 y_1 - y_2^2 + \nu \\
-1.2 |y_1 + y_2 + y_3|,
\end{bmatrix}
\label{eq:exampleMap}
\end{equation}
This map can be put in the form \eqref{eq:generalMap}
using the linear switching function $h(y) = \sigma^{\sf T} y$
with $\sigma^{\sf T} = \begin{bmatrix} 1 & 1 & 1 \end{bmatrix}$.
It is readily seen that \eqref{eq:exampleMap} satisfies Conditions (1) and (2) of Theorem \ref{th:hots}.
The matrices
\begin{align}
\hat{A}_L &= \begin{bmatrix} 0 & 0 & 0.03 \\ 0.03 & 0 & 0 \\ 1.2 & 1.2 & 1.2 \end{bmatrix}, &
\hat{A}_R &= \begin{bmatrix} 0 & 0 & 0.03 \\ 0.03 & 0 & 0 \\ -1.2 & -1.2 & -1.2 \end{bmatrix},
\nonumber
\end{align}
have eigenvalues $\alpha \approx 1.230$, $\lambda^L_{2,3} = -0.01499 \pm 0.02556 \ri$,
and $-\beta \approx -1.170$, $\lambda^R_{2,3} = -0.01499 \pm 0.02643 \ri$, respectively,
so Condition (3) is satisfied using $r = 0.031$.
Thus \eqref{eq:exampleMap} has a chaotic attractor for small $\nu > 0$ by Theorem \ref{th:hots}.
This agrees with the numerically computed bifurcation diagram shown in Fig.~\ref{fig:bifurcation}.
For small $\nu > 0$ the chaotic
attractor has two connected components which is consistent
with the attractor of the corresponding skew tent map
that consists of two disjoint intervals.
For larger values of $\nu$ the attractor has a different geometry and topology
due to the $y_2^2$-term in \eqref{eq:exampleMap}.

\begin{figure}[h]
\centering
\includegraphics[width=0.7\linewidth]{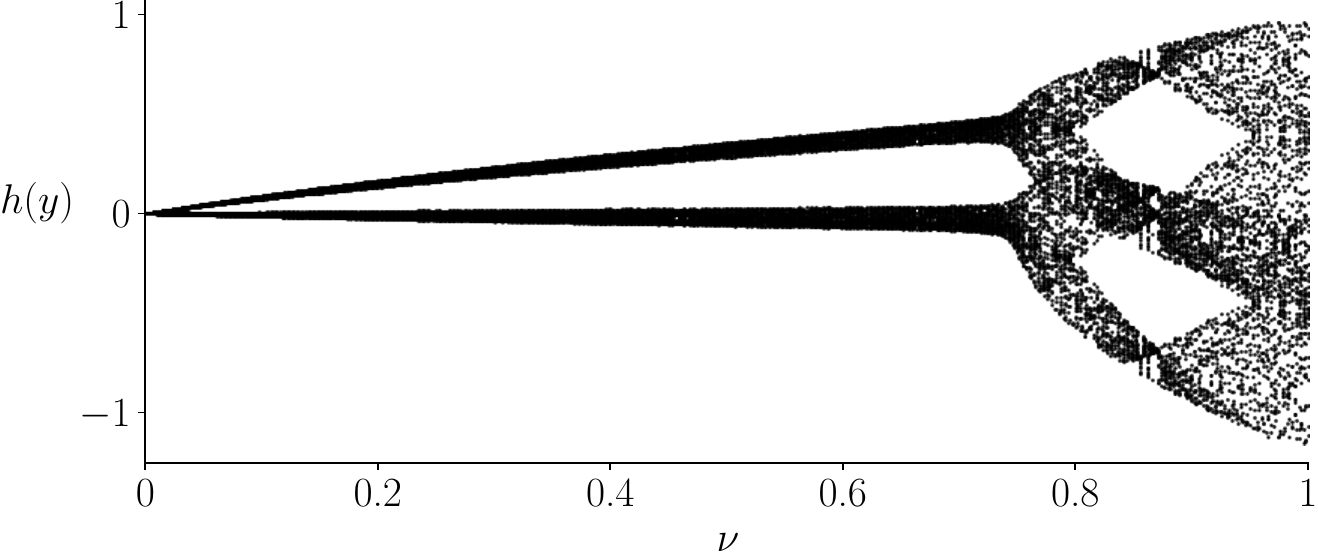}
\caption{
A numerically computed bifurcation diagram of
\eqref{eq:exampleMap} showing values of $h(y) = y_1 + y_2 + y_3$
on the vertical axis.}\label{fig:bifurcation}
\end{figure}

\begin{proof}[Proof of Theorem \ref{th:hots}]
Condition (2) implies $\sigma \ne \bO$,
so the switching manifold is smooth and there exists a near-identity
coordinate transformation that linearises this manifold.
By Condition (1) there exists a subsequent affine coordinate transformation
converting the map to
\begin{equation}
x \mapsto \begin{cases}
A_L x + b \mu + E_L(x;\mu), & x_1 \le 0, \\
A_R x + b \mu + E_R(x;\mu), & x_1 \ge 0,
\end{cases}
\label{eq:hotsProof10}
\end{equation}
where $A_L$, $A_R$, and $b$ have the form \eqref{eq:bcnfALARb}
and the error terms $E_L$ and $E_R$ are $\co \left( \| x \| + |\mu| \right)$, see \cite{Di03,Si16}.

We can assume the coordinate transformation is done so that $h(y) \le 0$ [$h(y) \ge 0$]
corresponds to $x_1 \le 0$ [$x_1 \ge 0$].
Then $\mu > 0$ [$\mu < 0$] corresponds to $\nu > 0$ [$\nu < 0$] by Condition (2).

To the map \eqref{eq:hotsProof10} with $\mu > 0$ we perform the spatial blow-up $u = \frac{x}{\mu}$ transforming it to
\begin{equation}
q(u;\mu) = \begin{cases}
A_L u + b + \tfrac{1}{\mu} E_L(\mu u,\mu), & u_1 \le 0, \\
A_R u + b + \tfrac{1}{\mu} E_R(\mu u,\mu), & u_1 \ge 0.
\end{cases}
\label{eq:hotsProof20}
\end{equation}
By dropping the error terms from \eqref{eq:hotsProof20} we are left with the BCNF $f$.
The matrices $A_L$ and $A_R$ have the same eigenvalues as $\hat{A}_L$ and $\hat{A}_R$, respectively,
because these two sets of eigenvalues are the stability multipliers
associated with the fixed point $\bO$ of each piece of the map, and these are independent of coordinates.
Indeed the coordinate transformations induce similarity transformations on the Jacobian matrices
hence leave the eigenvalues unchanged.
By Propositions \ref{pr:forwardInvariantRegion} and \ref{pr:cone}, $f$ has a trapping region $\Omega_{\rm trap}$
and a cone $\Psi$ that is invariant and expanding for $\{ A_L, A_R \}$
and contains a non-zero vector.

Let $t > 0$ be such that $\Omega_{\rm trap} \subset B_t$, and consider again the map \eqref{eq:hotsProof20}.
Since $E_L$ and $E_R$ are $\co \left( \| x \| + |\mu| \right)$,
over $B_t$ the difference between $q$ and $f$ tends to zero as $\mu \to 0$.
Thus there exists $\delta_1 > 0$ such that
for all $0 < \mu < \delta_1$ the set $\Omega_{\rm trap}$ is a trapping region for \eqref{eq:hotsProof20}.
For all such $\mu$ the map $q$ has an attractor $\Lambda_\mu \subset B_t$.

The error terms $E_L$ and $E_R$ are $C^1$ because each piece of $g$ is $C^1$.
Thus at any point $u$ with $u_1 < 0$ [$u_1 > 0$]
the derivative $\rD q(u)$ tends to $A_L$ [$A_R$] as $\mu \to 0$.
Thus there exists $\delta_2 > 0$ such that for all $0 < \mu < \delta_2$,
the cone $\Psi$ is invariant and expanding with expansion factor $c > 0$ for the collection
$\left\{ \rD q(u;\mu) \,\middle|\, u \in B_t,\, u_1 \ne 0 \right\}$.

Choose any $0 < \mu < \min[\delta_1,\delta_2]$ and $x \in \Lambda_\mu$
whose forward orbit under $q$ does not intersect the switching manifold.
Let $v \in \Psi \setminus \{ \bO \}$ and observe
\begin{equation}
\rD q^n(x;\mu) v = \rD q \left( q^{n-1}(x); \mu \right) \cdots \rD q(q(x);\mu) \rD q(x;\mu) v,
\nonumber
\end{equation}
so $\left\| \rD q^n(x;\mu) v \right\| \ge c^n \| v \|$ for all $n \ge 1$.
Thus $\Lambda_\mu \subset B_t$ is an attractor of $q$ and chaotic in the sense of having a positive Lyapunov exponent.
Thus the scaled set $\Xi_\mu = \left\{ \mu u \,|\, u \in \Lambda_\mu \right\} \subset B_{\mu t}$
is a chaotic attractor of \eqref{eq:hotsProof10}.
By mapping $\Xi_\mu$ backwards under the two coordinate transformations used to convert $g$ to \eqref{eq:hotsProof10},
we obtain a chaotic attractor of $g$ in $B_{\mu \omega t}$,
where $\omega > 0$ is a bound on the amount by which vectors expand under the inverse of these transformations.
\end{proof}

\section{Discussion}
\label{sec:discussion}

Previous studies of chaos in the BCNF with $n > 2$ have primarily been numerical \cite{DeDu11,Pa18,PaBa18}. In this paper, we have performed rigorous constructions to identify a region of parameter space where the BCNF has a chaotic attractor.
This region is open, thus the chaos is robust for the BCNF.
We have further shown that the chaotic attractor persists under nonlinear perturbations to the pieces of the maps and is also robust in a much broader context.

All calculations needed to obtain the results have been provided in full. To extend the results to other parameter regions, such as those with $\alpha < 1$, computer-assisted proofs will instead likely be preferred \cite{GlSi22b,Si23e}.

Our main result, Theorem \ref{th:main}, has three bounds \eqref{eq:rA}, \eqref{eq:rB}, and \eqref{eq:rC}.
The bound \eqref{eq:rC} contains the coefficient $\frac{1}{10}$ which comes from the inequality \eqref{eq:Qspart1Proof50}. Specifically \eqref{eq:Qspart1Proof50} requires $\ee < \ee^*$, where $\ee^* \approx 0.1073$ is a root of $9 \ee^3 + 84 \ee^2 + 140 \ee - 16$, and we have trimmed $\ee^*$ to $\frac{1}{10}$ for simplicity. The bounds \eqref{eq:rA} and \eqref{eq:rB} contain the coefficient $\frac{3}{7}$ which arises in the proof of Proposition \ref{pr:coneInvariantL}.
This coefficient cannot be more than $\frac{\min[\alpha,\beta]}{2 J}$, where $J$ is the coefficient in \eqref{eq:coneInvariantLProof21}.
Here we are using $\ee = \frac{1}{10}$, so $\frac{\min[\alpha,\beta]}{2 J} = \frac{\alpha \left( \alpha + \frac{4}{5} \right)}{2 \left( 3 \alpha - \frac{9}{10} \right)}$, in the case $\alpha \le \beta$, which equals $\frac{3}{7}$ when $\alpha = 1$.

Theorem \ref{th:main} assumes each piece of the map has one direction of instability. In contrast, Glendinning \cite{Gl15b} assumed all directions are unstable. Cases with $1 < k < n$ directions of instability are more challenging and remain for future work. For smooth families of maps it is known such cases can yield robust chaos, in this context termed `wild chaos' \cite{HiKr13,JeOs22}.

Numerical explorations suggest that the chaotic attractor of the BCNF is often the closure of the unstable manifold of the fixed point $X$. It remains to establish this rigorously, as in \cite{GlSi21}, and identify parameter regions for which the map has no other attractors, thus providing a stronger notion of robustness.
Also it remains to find and characterise bifurcations where the chaotic attractor undergoes changes to its topology when $n \ge 3$.

\section*{Acknowledgements}
This work was supported by Marsden Fund contracts MAU1809 and MAU2209 managed by Royal Society Te Ap\={a}rangi. The authors thank Paul Glendinning and Robert McLachlan for helpful conversations.

\appendix

\section{Additional algebraic estimates}
\label{app:proofs}

\begin{lemma}
Let $\alpha, \beta > 1$ and $J = \frac{2 \alpha + \min[\alpha,\beta] - \frac{9}{10}}{2 \alpha - \min[\alpha,\beta] + \frac{4}{5}}$.
Then $\frac{3}{7} < \frac{\min[\alpha,\beta]}{2 J}$.
\label{le:boundForContractingInv}
\end{lemma}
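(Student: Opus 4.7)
The plan is to split into two cases according to which of $\alpha, \beta$ realises the minimum and reduce the inequality to a one-variable quadratic in each case.

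In the first case, assume $\alpha \le \beta$, so that $\min[\alpha,\beta] = \alpha$ and $J$ simplifies to $\frac{3\alpha - 9/10}{\alpha + 4/5}$. Since $\alpha > 1$, both $3\alpha - 9/10$ and $\alpha + 4/5$ are positive, so clearing denominators in $\frac{3}{7} < \frac{\alpha(\alpha + 4/5)}{2(3\alpha - 9/10)}$ is sign-preserving and reduces the claim to $35 \alpha^2 - 62 \alpha + 27 > 0$. The discriminant is $62^2 - 4 \cdot 35 \cdot 27 = 64$, giving roots $\alpha = 1$ and $\alpha = 27/35$, so the quadratic factors as $35(\alpha - 1)(\alpha - 27/35)$, which is strictly positive for $\alpha > 1$.

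In the second case, assume $\alpha > \beta$, so that $\min[\alpha,\beta] = \beta$ and $J = \frac{2\alpha + \beta - 9/10}{2\alpha - \beta + 4/5}$. Both numerator and denominator of $J$ are positive (since $2\alpha - \beta + 4/5 > \beta + 4/5 > 0$ and $2\alpha + \beta - 9/10 > 3 - 9/10 > 0$), so clearing denominators in $\frac{3}{7} < \frac{\beta(2\alpha - \beta + 4/5)}{2(2\alpha + \beta - 9/10)}$ reduces the claim to
\[
14 \alpha \beta - 7 \beta^2 - 12 \alpha - \tfrac{2}{5}\beta + \tfrac{27}{5} > 0.
\]
I would then view the left-hand side as a linear function of $\alpha$ for fixed $\beta > 1$; its slope $14 \beta - 12$ is positive, so the infimum over $\alpha > \beta$ is attained in the limit $\alpha \to \beta^+$. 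Substituting $\alpha = \beta$ yields exactly $(35 \beta^2 - 62 \beta + 27)/5$, which by the factorisation already established is strictly positive for $\beta > 1$.

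There is really no obstacle here: once one observes that Case~2 collapses to Case~1 via monotonicity in $\alpha$, the whole lemma reduces to the single factoring identity $35 t^2 - 62 t + 27 = 35(t-1)(t - 27/35)$. The only care required is verifying the signs of the denominators before cross-multiplying, which is immediate from $\alpha, \beta > 1$.
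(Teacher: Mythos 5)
Your proof is correct and follows essentially the same route as the paper: split on which of $\alpha,\beta$ attains the minimum, check the denominators are positive, clear them, and verify positivity of the resulting polynomial (the same quadratic $35t^2-62t+27$ with roots $1$ and $\tfrac{27}{35}$ underlies both arguments). The only cosmetic difference is that the paper certifies positivity by exhibiting an explicit sum of nonnegative products, whereas you factor the quadratic and reduce the two-variable case to the one-variable case by monotonicity in $\alpha$; both are valid.
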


\begin{proof}
If $\alpha \ge \beta$ then
\begin{equation}
7 (\beta - 1)(\alpha - 1 + \alpha - \beta) + \tfrac{2}{5} (\alpha - \beta) + \tfrac{8}{5} (\alpha - 1) > 0.
\nonumber
\end{equation}
By factoring differently we obtain
\begin{equation}
7 \beta \left( 2 \alpha - \beta + \tfrac{4}{5} \right) - 6 \left( 2 \alpha + \beta - \tfrac{9}{10} \right) > 0,
\nonumber
\end{equation}
which is equivalent to $\frac{3}{7} < \frac{\min[\alpha,\beta]}{2 J}$
in the case $\alpha \ge \beta$.

Also observe
\begin{equation}
(\alpha - 1) \left( 7 (\alpha - 1) + \tfrac{8}{5} \right) > 0.
\nonumber
\end{equation}
By factoring differently we obtain
\begin{equation}
7 \alpha \left( \alpha + \tfrac{4}{5} \right) - 6 \left( 3 \alpha - \tfrac{9}{10} \right) > 0,
\nonumber
\end{equation}
which is equivalent to $\frac{3}{7} < \frac{\min[\alpha,\beta]}{2 J}$ when $\alpha < \beta$.
\end{proof}

\begin{lemma}
If $r(n-1) \le 1$ then $G < 2 r^2 (n-1)^2$, where $G$ is given by \eqref{eq:G}.
\label{le:G}
\end{lemma}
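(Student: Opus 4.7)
My plan is to reindex the sum by setting $j = i-1$ so that
\begin{equation}
G = \sum_{j=1}^{n-2} \binom{n-1}{j}^2 r^{2j},
\nonumber
\end{equation}
and then show that consecutive terms shrink by at least a factor of $4$, so $G$ is dominated by a geometric series with first term $(n-1)^2 r^2$ and ratio $\tfrac{1}{4}$. Summing that series gives $\tfrac{4}{3} (n-1)^2 r^2$, which is strictly less than $2 (n-1)^2 r^2$.

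To carry this out, the key computation is the ratio of the $(j+1)$-st term to the $j$-th term:
\begin{equation}
\frac{\binom{n-1}{j+1}^2 r^{2(j+1)}}{\binom{n-1}{j}^2 r^{2j}} = \left( \frac{n-1-j}{j+1} \right)^2 r^2.
\nonumber
\end{equation}
The factor $\frac{n-1-j}{j+1}$ is decreasing in $j$, so over $j \ge 1$ its maximum is $\frac{n-2}{2}$. The hypothesis $r(n-1) \le 1$ yields $\frac{n-2}{2} \,r \le \frac{n-1}{2} \,r \le \frac{1}{2}$, so every such ratio is at most $\frac{1}{4}$. Hence
\begin{equation}
G \le (n-1)^2 r^2 \sum_{k=0}^\infty 4^{-k} = \tfrac{4}{3} (n-1)^2 r^2 < 2 (n-1)^2 r^2,
\nonumber
\end{equation}
which gives the desired inequality. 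The edge case $n=2$ is trivial because $G$ is an empty sum while the right-hand side is positive (under Assumption \ref{as:main} we have $r > 0$).

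I do not anticipate any real obstacle here: once the sum is reindexed and the ratio of consecutive terms is written down, the hypothesis $r(n-1) \le 1$ makes everything follow immediately. The only small care needed is to verify that the first term $(n-1)^2 r^2$ genuinely dominates the geometric tail with the right constant; this is ensured by noting that the maximum of $\frac{n-1-j}{j+1}$ over $j \ge 1$ occurs at $j = 1$ and equals $\frac{n-2}{2}$, not $n-1$.
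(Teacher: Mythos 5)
Your proof is correct, but it follows a genuinely different route from the paper's. The paper bounds each binomial coefficient by $\binom{n-1}{i-1} \le \frac{(n-1)^{i-1}}{(i-1)!}$, so that $G$ is dominated by $\sum_{k\ge 1} \frac{(r(n-1))^{2k}}{k!} = e^{(r(n-1))^2}-1$, and then uses the elementary inequality $e^x - 1 < 2x$ for $0 < x \le 1$. You instead compare consecutive terms of the sum directly: the ratio $\left(\frac{n-1-j}{j+1}\right)^2 r^2$ is maximised at $j=1$, where it is at most $\left(\frac{(n-1)r}{2}\right)^2 \le \frac14$, so $G$ is dominated by a geometric series with first term $(n-1)^2 r^2$ and ratio $\frac14$, giving the sharper constant $\frac43$ in place of $e-1 \approx 1.718$ (both below $2$). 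Your argument avoids the exponential function entirely and yields a marginally stronger bound; the paper's is a one-line appeal to a standard series. In both proofs the strict inequality tacitly requires $r>0$ (otherwise both sides vanish), which you correctly note is supplied by Assumption \ref{as:main}. No gaps.
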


\begin{proof}
Observe $\begin{pmatrix} n-1 \\ i-1 \end{pmatrix} \le \dfrac{(n-1)^{i-1}}{(i-1)!}$, so
\begin{equation}
G \le \sum_{i=2}^{n-1} \left( \frac{(r(n-1))^{i-1}}{(i-1)!} \right)^2
= \sum_{k=1}^{n-2} \left( \frac{(r(n-1))^k}{k!} \right)^2.
\nonumber
\end{equation}
where we have substituted $k = i-1$.
Removing one of the factorials creates a larger value, as does taking the sum to infinity, thus
\begin{equation}
G < \sum_{k=1}^\infty \frac{(r(n-1))^{2 k}}{k!} = e^{(r(n-1))^2} - 1 < 2 (r(n-1))^2,
\nonumber
\end{equation}
where the last inequality holds because $r(n-1) \le 1$.
\end{proof}

\bibliographystyle{plain}
\bibliography{main}

\end{document}